\PassOptionsToPackage{nosumlimits,nonamelimits}{amsmath}
\documentclass[acmsmall]{acmart}
\numberwithin{equation}{section}

\usepackage{ifthen}
\newboolean{arxiv}
\setboolean{arxiv}{true} 
\newcommand{\ifarx}[2]{\ifthenelse{\boolean{arxiv}}{#1}{#2}}
\ifarx{
\settopmatter{printacmref=false}{} 
\renewcommand\footnotetextcopyrightpermission[1]{}} 

\RequirePackage{quiver}

\usepackage{stel-common}

\usepackage{booktabs}   
\usepackage{subcaption} 

\usepackage{stel-common}

\usepackage{bm} 

\newcommand{\mybar}[3]{%
	\mathrlap{\hspace{#2}\overline{\scalebox{#1}[1]{\phantom{\ensuremath{#3}}}}}\ensuremath{#3}
}

\newcommand{\barB}{\mybar{0.6}{2.5pt}{B}}

\providecommand{\catname}{\mathbf}
\providecommand{\clsname}{\mathcal}
\providecommand{\oname}[1]{{\mathop{\mathsf{#1}}\xspace}}

\def\defcatname#1{\expandafter\def\csname B#1\endcsname{\catname{#1}}}
\def\defcatnames#1{\ifx#1\defcatnames\else\defcatname#1\expandafter\defcatnames\fi}
\defcatnames ABCDEFGHIJKLMNOPQRSTUVWXYZ\defcatnames

\def\defclsname#1{\expandafter\def\csname C#1\endcsname{\clsname{#1}}}
\def\defclsnames#1{\ifx#1\defclsnames\else\defclsname#1\expandafter\defclsnames\fi}
\defclsnames ABCDEFGHIJKLMNOPQRSTUVWXYZ\defclsnames

\def\defbbname#1{\expandafter\def\csname BB#1\endcsname{{\mathbb{#1}}}}
\def\defbbnames#1{\ifx#1\defbbnames\else\defbbname#1\expandafter\defbbnames\fi}
\defbbnames ABCDEFGHIJKLMNOPQRSTUVWXYZ\defbbnames

\def\Set{\catname{Set}}

\DeclareOldFontCommand{\bf}{\normalfont\bfseries}{\mathbf}

\providecommand{\PSet}{{\mathcal P}}			                 
\providecommand{\Id}{\operatorname{Id}}

\providecommand{\id}{\mathsf{id}}
\providecommand{\op}{\mathsf{op}}
\providecommand{\comp}{\mathbin{\circ}}

\providecommand{\xto}[1]{\,\xrightarrow{#1}\,}

\providecommand{\dar}{\kern-1.2pt\operatorname{\downarrow}}
\providecommand{\uar}{\kern-1.2pt\operatorname{\uparrow}}

\providecommand{\brks}[1]{\langle #1\rangle}

\providecommand{\inl}{\oname{inl}}
\providecommand{\inr}{\oname{inr}}
\providecommand{\inj}{\oname{in}}

\DeclareSymbolFont{Symbols}{OMS}{cmsy}{m}{n}
\DeclareMathSymbol{\iobj}{\mathord}{Symbols}{"3B}

\providecommand{\ev}{\oname{ev}}

\usepackage{stmaryrd}

\providecommand{\lsem}{\llbracket}
\providecommand{\rsem}{\rrbracket}
\providecommand{\sem}[1]{\lsem #1 \rsem}

\providecommand{\pacman}[1]{}					                     

\newcommand{\undefine}[1]{\let #1\relax}					                       

\providecommand{\mone}{{\text{\kern.5pt\rmfamily-}\mathsf{\kern-.5pt1}}}

\makeatletter
\def\mfix#1{\oname{#1}\@ifnextchar\bgroup\@mfix{}}	       
\def\@mfix#1{#1\@ifnextchar\bgroup\mfix{}}			           
\makeatother

\providecommand{\case}[3]{\mfix{case}{\mathbin{}#1}{of}{#2}{\kern-1pt;}{\mathbin{}#3}}

\usepackage{dsfont}

\DeclareMathSymbol{\mathinvertedexclamationmark}{\mathord}{operators}{'074}
\DeclareMathSymbol{\mathexclamationmark}{\mathord}{operators}{'041}
\makeatletter
\newcommand{\raisedmathinvertedexclamationmark}{%
	\mathord{\mathpalette\raised@mathinvertedexclamationmark\relax}%
}
\newcommand{\raised@mathinvertedexclamationmark}[2]{%
	\raisebox{\depth}{$\m@th#1\mathinvertedexclamationmark$}%
}
\makeatother

\newcommand{\product}{\times}

\newcommand{\SKI}{\mathrm{SKI}}

\newcommand{\Powf}{\mathcal{P}_{\omega}}

\newcommand{\Sigmas}{\Sigma^{\star}}

\newcommand{\ar}{\mathsf{ar}}

\newcommand{\seq}{\subseteq}

\newcommand{\beh}{{\mathsf{beh}}}

\providecommand{\C}{}

\newcommand{\B}{{\mathbb{B}}}
\renewcommand{\C}{{\mathbb{C}}}

\renewcommand{\id}{{\mathsf{id}}}
\newcommand{\Nat}{\mathds{N}}

\newcommand{\f}{\oname{f}}

\newcommand{\takeout}[1]{\empty}

\newcommand{\ini}{\iota}

\newcommand{\wh}{\widehat}

\renewcommand{\rho}{\varrho}
\newcommand{\N}{\mathds{N}}
\newcommand{\opp}{\mathsf{op}}

\makeatletter
\newsavebox{\@brx}
\providecommand{\llangle}[1][]{\savebox{\@brx}{\(\m@th{#1\langle}\)}%
	\mathopen{\copy\@brx\kern-0.5\wd\@brx\usebox{\@brx}}}
\providecommand{\rrangle}[1][]{\savebox{\@brx}{\(\m@th{#1\rangle}\)}%
	\mathclose{\copy\@brx\kern-0.5\wd\@brx\usebox{\@brx}}}
\makeatother

\renewcommand{\comp}{\cdot}
\renewcommand{\c}{\colon}

\newcommand{\xra}[1]{\mathrel{\raisebox{-1.15pt}{$\xrightarrow{\;\smash{\raisebox{2.5pt}{\makebox(3,0)[t]{\scriptsize $#1$}}\;}}$}}}
\renewcommand{\xto}{\xra}

\renewcommand{\Nat}{\mathbb{N}}

\newcommand{\fset}{{\mathbb{F}}}
\newcommand{\vcat}{{\Set}^{\fset}}

\newcommand{\mS}{{\mu\Sigma}}

\renewcommand{\epsilon}{\varepsilon}

\newcommand*\xbar[1]{%
	\kern.2em\hbox{%
		\vbox{%
			\hrule height 0.5pt 
			\kern0.5ex
			\hbox{%
				\kern-0.2em
				\ensuremath{#1}%
				\kern-0.4em
			}%
		}%
	}\kern.4em %
}

\makeatletter
\newcommand{\monto}{\@ifstar{\@mtolifted}{\@mto}}
\newcommand{\@mto}{\multimapdot}
\newcommand{\@mtolifted}{\mathbin{\xbar{\multimapdot}}}
\makeatother

\newcommand{\app}[2]{\,}

\makeatletter
	\newcommand{\pushright}[1]{\ifmeasuring@#1\else\omit\hfill$\displaystyle#1$\fi\ignorespaces}
	\newcommand{\pushleft}[1]{\ifmeasuring@#1\else\omit$\displaystyle#1$\hfill\fi\ignorespaces}
\makeatother

\usepackage[capitalize,noabbrev,nameinlink]{cleveref} 

\usepackage[inline]{enumitem}
\setlist[enumerate,1]{label=(\arabic*),font=\normalfont,align=left,leftmargin=0pt,labelindent=0pt,listparindent=\parindent,labelwidth=0pt,itemindent=!,topsep=2pt,parsep=0pt,itemsep=2pt,start=1}
\setlist[enumerate,2]{label=(\alph*),font=\normalfont,labelindent=*,leftmargin=*,start=1}
\setlist[itemize]{labelindent=*,leftmargin=*}
\setlist[description]{labelindent=*,leftmargin=*,itemindent=-1 em}

\usepackage{microtype}

\usepackage{xspace}

\usepackage{ifdraft}

\tikzstyle{shiftarr}=[
        rounded corners,%
        to path={--([#1]\tikztostart.center)
                     -- ([#1]\tikztotarget.center) \tikztonodes
                     -- (\tikztotarget)},
]

\usepackage{tikz-cd}
\usetikzlibrary{arrows.meta}

\tikzcdset{scale cd/.style={every label/.append style={scale=#1},
  cells={nodes={scale=#1}}},
  arrow style=tikz,
  diagrams={>=stealth},
  row sep=large,
  column sep=huge
}

\usepackage[footnote,marginclue,nomargin]{fixme}
\FXRegisterAuthor{sg}{asg}{SG} 
\FXRegisterAuthor{st}{ast}{ST} 
\FXRegisterAuthor{hu}{ahu}{HU} 
\FXRegisterAuthor{sv}{asv}{SV} 

\RequirePackage[utf8]{inputenc}

\setcopyright{cc}
\setcctype{by}
\acmDOI{10.1145/3828693}
\acmYear{2026}
\acmJournal{PACMPL}
\acmVolume{10}
\acmNumber{ICFP}
\acmArticle{295}
\acmMonth{8}
\acmSubmissionID{icfp26main-p75-p}
\received{2026-03-02}
\received[accepted]{2026-05-13}

\begin{document}\allowdisplaybreaks

\title{Towards a Higher-Order Bialgebraic Denotational Semantics}

\author{Sergey Goncharov}
\email{s.goncharov@bham.ac.uk}
\orcid{0000-0001-6924-8766}
\affiliation{%
	\institution{University of Birmingham}
	\city{Birmingham}
	\country{United Kingdom}
}

\author{Marco Peressotti}
\email{peressotti@imada.sdu.dk}
\orcid{0000-0002-0243-0480}
\affiliation{%
	\institution{University of Southern Denmark}
	\city{Odense}
	\country{Denmark}
}

\author{Stelios Tsampas}
\email{stelios@imada.sdu.dk}
\orcid{0000-0001-8981-2328}
\affiliation{%
	\institution{University of Southern Denmark}
	\city{Odense}
	\country{Denmark}
}

\author{Henning Urbat}
\email{henning.urbat@fau.de}
\orcid{0000-0002-3265-7168}
\affiliation{%
	\institution{Friedrich-Alexander-Universität Erlangen-Nürnberg}
	\city{Erlangen}
	\country{Germany}
}

\author{Stefano Volpe}
\email{stefano@imada.sdu.dk}
\orcid{0009-0003-5277-1958}
\affiliation{%
	\institution{University of Southern Denmark}
	\city{Odense}
	\country{Denmark}
}


\begin{abstract}
	The bialgebraic \emph{abstract GSOS} framework by Turi and Plotkin provides an elegant
	categorical approach to modelling the operational and denotational semantics
	of programming and process languages. In abstract GSOS, bisimilarity is always a
	congruence, and it coincides with denotational equivalence. This saves the language designer
	from intricate, ad-hoc reasoning to establish these properties. The bialgebraic perspective on operational semantics in the style of abstract GSOS
	has recently been extended to
	higher-order languages, preserving compositionality of bisimilarity. However, a
	categorical understanding of  bialgebraic \emph{denotational semantics}
	according to Turi and Plotkin's original vision has so far been missing in the higher-order setting. In the
	present paper, we develop a theory of adequate denotational semantics in higher-order abstract GSOS. The denotational models are parametric in
	an appropriately chosen semantic domain in the form of a \emph{locally final coalgebra} for a behaviour bifunctor, whose construction is fully decoupled from the syntax of
	the language. Our approach captures existing accounts of denotational semantics such as semantic domains
	built via general step-indexing, previously introduced on a per-language basis, and is shown to be applicable to a wide range of different higher-order languages, e.g.~simply typed and untyped languages, or languages with computational effects such as probabilistic or non-deterministic branching.
\end{abstract}

\begin{CCSXML}
	<ccs2012>
	<concept>
	<concept_id>10003752.10010124.10010125.10010127</concept_id>
	<concept_desc>Theory of computation~Functional constructs</concept_desc>
	<concept_significance>300</concept_significance>
	</concept>
	<concept>
	<concept_id>10003752.10010124.10010131.10010133</concept_id>
	<concept_desc>Theory of computation~Denotational semantics</concept_desc>
	<concept_significance>300</concept_significance>
	</concept>
	<concept>
	<concept_id>10003752.10010124.10010131.10010134</concept_id>
	<concept_desc>Theory of computation~Operational semantics</concept_desc>
	<concept_significance>300</concept_significance>
	</concept>
	<concept>
	<concept_id>10003752.10010124.10010131.10010137</concept_id>
	<concept_desc>Theory of computation~Categorical semantics</concept_desc>
	<concept_significance>300</concept_significance>
	</concept>
	</ccs2012>

\end{CCSXML}

\ccsdesc[300]{Theory of computation~Functional constructs}
\ccsdesc[300]{Theory of computation~Denotational semantics}
\ccsdesc[300]{Theory of computation~Operational semantics}
\ccsdesc[300]{Theory of computation~Categorical semantics}

\keywords{Higher-Order Languages, Bialgebraic Semantics, Mathematical
	Operational Semantics, Abstract GSOS}


\maketitle

\newcommand{\cal}{\mathcal}
\newcommand{\CBUlt}{\catname{CBUlt}}
\newcommand{\CBUltinh}{\CBUlt_\catname{\text{inh}}}
\newcommand{\Setinh}{\Set_\catname{\text{inh}}}
\def\HOB{\rho\catname{-HOB}}
\newcommand{\paral}{\parallel} \newcommand{%
	\paralv}{\mathrel{\mathchoice{\PARV}{\PARV}{\scriptsize\PARV}{\tiny\PARV} }}
\def\PARV{{\setbox0\hbox{$
					\parallel$}\rlap{\hbox to \wd0{\hss\rule[-.5ex]{.75em}{.12ex}\hss}}\box0 }}%

\newcommand{\Tr}{\mathsf{Tr}} 
\newcommand{\Trl}{\Lambda}
\newcommand{\Trv}{\Tr_{\mathsf{v}}}
\newcommand{\Trn}{\Tr_{\bar{\mathsf{v}}}}
\newcommand{\xcl}{\textbf{xCL}\xspace}
\newcommand{\xclfg}{$\textbf{xCL}_{\textbf{fg}}$\xspace}
\newcommand{\synt}{\Sigma}
\newcommand{\stscr}{\ensuremath{\boldsymbol\mu}\textbf{TCL}\xspace}
\newcommand{\fpc}{\textbf{FPC}\xspace}
\newcommand{\stsc}{\textbf{xTCL}\xspace}
\newcommand{\stlc}{\textbf{STLC}\xspace}
\newcommand{\pcf}{\textbf{PCF}\xspace}
\newcommand{\utype}{\mathsf{unit}}
\newcommand{\booltype}{\mathsf{bool}}
\newcommand{\voidtype}{\mathsf{void}}
\newcommand{\unittype}{\mathsf{unit}}
\newcommand{\nattype}{\mathsf{nat}}
\newcommand{\Ty}{\mathsf{Ty}}
\newcommand{\Tyl}{\mathsf{Ty}}
\newcommand{\Ops}{\mathsf{Ops}}
\newcommand{\arty}[2]{#1 \rightarrowtriangle #2} 
\newcommand{\cty}[2]{#1 \rightsquigarrow #2} 

\newcommand{\den}[1]{\llbracket #1 \rrbracket}

\newcommand{\ndc}{\mathsf{nc}}

\newcommand{\later}{\triangleright}
\newcommand{\next}{\mathtt{next}}
\newcommand{\topos}{\Set^{\omega^\opp}}
\newcommand{\vartopos}{(\vcat)^{\omega^{\opp}}}

\newcommand{\xnccl}{\textbf{xNCCL}\xspace}
\newcommand{\xPTCL}{\textbf{xPTCL}\xspace}

\newcommand{\prob}{\mathsf{p}}
\newcommand{\untyped}{\mathsf{u}}

\newcommand{\xCL}{\textbf{xCL}\xspace}
\newcommand{\ndet}{\oplus} 
\newcommand{\infer}[2]{\frac{#2}{#1}}

\newcommand{\val}{\mathsf{v}}
\newcommand{\com}{\mathsf{c}}

\newcommand{\tcomp}{\textsf{app}}
\newcommand{\Dar}{\Downarrow}

\newcommand{\Df}{\mathcal{D}_\omega}




\theoremstyle{definition}
\newtheorem{notation}[theorem]{Notation}
\newtheorem{assumption}[theorem]{Assumption}
\newtheorem{construction}[theorem]{Construction}

\section{Introduction}

One of the core areas of the theory of programming languages is the study of
\emph{formal semantics}~\cite{winskel1993formal}, which aims to assign a precise
mathematical ``meaning'' to every well-formed program of a given language. Such a foundation is a prerequisite for proving properties of individual programs (such as functional correctness or termination) or of the language itself (such as type or memory safety), and for formalizing and automating such reasoning in proof assistants. Two widely used approaches to formal semantics are \emph{operational semantics}, which specifies how a program $p$ behaves when it is executed, and \emph{denotational semantics}, which associates to a program~$p$ some mathematical object $\den{p}$ (e.g.~an element of an order-theoretic domain or a morphism of a category) representing its abstract content. In many settings, both types of semantics are available, and studying their connections provides deep insights into the nature of the programming language at hand. Two key desiderata for a denotational semantics to be useful and well-behaved are:
\begin{enumerate}
	\item \emph{Compositionality.} The denotation $\den{p}$ of a composite program $p=\f(p_1,\ldots,p_n)$ only depends on the program constructor $\f$ and the denotations $\den{p_i}$ of the components $p_i$.
	\item \emph{Adequacy.} If $p$ and $q$ are \emph{denotationally equivalent},
	      i.e. $\den{p}=\den{q}$, then they are \emph{behaviourally equivalent}: their observable behaviour according to the \emph{operational} semantics is indistinguishable.
\end{enumerate}
Compositionality allows for modular reasoning, while adequacy guarantees that
the denotational semantics is compatible with the (often simpler and more
intuitive) operational semantics. These properties are typically
proven on an ad hoc, per-language basis, and tend to be cumbersome. For this reason, there has been longstanding interest in general \emph{categorical} accounts of operational and denotational semantics, which mitigate these complexities via the power of abstraction.

\paragraph*{Abstract GSOS} In their seminal paper, \citet{Turi97} proposed to
approach operational and denotational semantics via a \emph{bialgebraic}
framework. It rests on the observation that the operational rules of many
programming or process languages, given in the style of \emph{structural
	operational semantics}~\cite{Plotkin1981SOS}, can be presented as a \emph{GSOS
	law}, a natural transformation of type
\[ \rho_X\colon \Sigma(X\times BX)\to B\Sigmas X \]
parametric in a \emph{syntax functor} $\Sigma\colon\C\to \C$ (with free monad
$\Sigmas$) modelling the constructors of the language, and a \emph{behaviour
	functor} $B\colon \C\to\C$ modelling the type of transitions that programs can
make during their execution. For example, to model a simple process algebra, one
takes $\Sigma$ to be a polynomial set functor representing the process
constructors, like $\mathsf{nil}$ or $-\parallel-$, and $B$ to be the functor $BX=(\Powf X)^L$ representing the behaviour of image-finite labelled transition systems.

Every GSOS law canonically induces an \emph{operational model}, which emerges by equipping the initial algebra $\Sigma(\mS)\xto{\cong}\mS$ (the object of program terms) with the structure of a \emph{coalgebra} $\gamma\colon \mS\to B(\mS)$. Intuitively, $\gamma$ is the transition system that runs programs according to the operational rules encoded by the given law. Dually, there is a canonical \emph{denotational model} obtained by equipping the final coalgebra $\nu B\xto{\cong} B(\nu B)$ (the domain of abstract program behaviours) with the structure of an \emph{algebra} $\alpha\colon \Sigma(\nu B)\to \nu B$. The two models form the \emph{initial} and \emph{final} bialgebra for the given GSOS law, and the  unique bialgebra morphism $\den{-}\colon \mS\to \nu B$ gives a denotational semantics of programs.
\begin{equation}\label{eq:den}
	\begin{tikzcd}
		\Sigma(\mS) \ar{r}{\cong} \ar{d}[swap]{\Sigma \den{-}} & \mS \ar[dashed]{d}{\den{-}} \ar{r}{\gamma} & B(\mS) \ar{d}{B\den{-}} \\
		\Sigma(\nu B) \ar{r}{\alpha} & \nu B \ar{r}{\cong} & B(\nu B)
	\end{tikzcd}
\end{equation}
The main feature of this elegant bialgebraic approach is that the two desired well-behavedness properties of the denotational semantics come for free: adequacy holds trivially because denotational and behavioural equivalence simply coincide,  and compositionality holds by construction because the denotation morphism $\den{-}$ is a morphism of $\Sigma$-algebras. Thus, for languages modelled in abstract GSOS, intricate compositionality proofs are no longer required.

\paragraph*{Higher-Order Abstract GSOS} A key limitation of Turi and Plotkin's framework is that it does not apply to \emph{higher-order} languages: GSOS laws are not expressive enough to capture the operational semantics of the $\lambda$-calculus or related systems. In recent work, \citet{Goncharov23,gmstu26_jfp} have removed this limitation by developing the \emph{higher-order abstract GSOS} framework. The underlying idea is simple: move from behaviour \emph{endo}functors to \emph{bi}functors $B\colon \C^\op\times \C \to \C$ of mixed variance, accounting for the fact that in higher-order languages, programs can occur both as inputs (contravariantly) and as outputs (covariantly) of programs. For example, an untyped $\lambda$-calculus can be modelled by a behaviour bifunctor like $B(X,Y)=Y+Y^X$, which reflects that a $\lambda$-term either $\beta$-reduces to a $\lambda$-term, or acts as a function mapping $\lambda$-terms to $\lambda$-terms. The operational rules of a higher-order language are then presented as a \emph{higher-order GSOS law}, a (di)natural transformation of type
\[ \rho_{X,Y}\colon \Sigma(X\times B(X,Y))\to B(X,\Sigmas(X+Y)). \]
Much like in the first-order case, every higher-order GSOS law induces an \emph{operational model}, which extends the initial algebra $\mS$ of programs (e.g.~$\lambda$-terms) to the initial \emph{higher-order bialgebra}
\[
	\begin{tikzcd}
		\Sigma(\mS) \ar{r}{\cong}  & \mS \ar{r}{\gamma} & B(\mS,\mS).
	\end{tikzcd}
\]
Thanks to its flexible and highly parametric nature, this abstract framework applies to a wide range of higher-order languages; for instance, it has been shown to capture the operational semantics of both typed~\cite{Goncharov24,gmtu24lics} and untyped~\cite{Goncharov23} languages, computational effects such as (higher-order) state~\cite{Goncharov25b}, concurrency~\cite{Goncharov25}, or probabilities~\cite{Urbat26}, and various modes of program evaluation (call-by-name, call-by-value, call-by-push-value)~\cite{10.1145/3704871}. Additionally, fundamental operational reasoning techniques such as Howe's method~\cite{UrbatTsampasEtAl23,Urbat26} and (step-indexed) logical relations~\cite{Goncharov24,gmtu24lics} have been developed at the generality of higher-order abstract GSOS. Thus, overall, the theory of higher-order operational semantics is rather well-developed in the abstract bialgebraic framework. In contrast, a bialgebraic understanding of higher-order \emph{denotational} semantics along the lines of Turi and Plotkin's original vision has been missing so far.

\subsection*{Contributions} In the present paper, we develop a novel theory of
\emph{bialgebraic denotational semantics} within the higher-order abstract GSOS
framework. The main challenge is to identify the correct abstract notion of
denotational model. Following the guidance of Turi and Plotkin's work, the most
obvious approach is to let \emph{final} higher-order coalgebras (and bialgebras
derived from them) serve as denotational domains. Unfortunately, this idea does
not work: while in the first-order setting many important behaviour endofunctors
have a final coalgebra, it turns out that the mixed-variance bifunctors
modelling higher-order languages usually fail to have a final higher-order
coalgebra, even for the simplest kinds of deterministic languages
(\Cref{ex:final-higher-order-coalgebra-not-exist}). Unlike in the first-order
setting, the non-existence of final higher-order coalgebras is not merely due to
set-theoretic size issues, which could be circumvented by suitably bounding the functors or moving to superlarge categories, but inherent to the nature of higher-order coalgebras and their morphisms.

The core insight of our contribution is to work with \emph{locally final coalgebras} instead of {final} ones. In a nutshell, a higher-order coalgebra $Z\xto{\cong} B(Z,Z)$ for a bifunctor $B$ is {locally final} if it is final as a coalgebra for the \emph{endo}functor $B(Z,-)\colon \C\to \C$. We think of a locally final coalgebra as a domain of abstract higher-order behaviours. Unlike final coalgebras, locally final coalgebras exist for many higher-order languages, and are often unique up to isomorphism. Thus, they form a natural foundation for bialgebraic denotational models in higher-order abstract GSOS. These claims are substantiated by our main results, which we summarize as follows:

\begin{enumerate}
	\item We show that every locally final coalgebra $Z$ canonically extends to a bialgebra for a given higher-order GSOS law, provided that the latter satisfies a syntactic restriction called \emph{relative flatness} (\Cref{th:denotational-model}). We thus obtain, as a higher-order counterpart of \eqref{eq:den}, a unique denotation morphism
	      \begin{equation}
		      \begin{tikzcd}[row sep=1, column sep=12]
			      \Sigma(\mS) \ar{r}{\cong} \ar{dd}[swap]{\Sigma \den{-}} & \mS \ar[dashed]{dd}{\den{-}} \ar{r}{\gamma} & B(\mS,\mS) \ar{dr}{B(\id,\den{-})} & \\
			      &&& B(\mS,Z) \\
			      \Sigma Z \ar{r}{a} & Z \ar{r}{\cong} & B(Z,Z) \ar{ur}[swap]{B(\den{-},\id)} &
		      \end{tikzcd}
	      \end{equation}
	      from the initial higher-order bialgebra $\mS$ to the bialgebra $Z$. Just like in first-order abstract GSOS, this denotational semantics is compositional by construction. Moreover, it is shown to be \emph{adequate} (\Cref{th:adequacy}) in the sense that denotational equivalence implies behavioural equivalence with respect to the final coalgebra for the endofunctor $B(\mS,-)$. In concrete examples, behavioural equivalence corresponds to notions of (strong) \emph{applicative bisimilarity}~\cite{Abramsky90} of programs.
	\item We develop a general theory of existence, construction, and uniqueness of locally final coalgebras. Our approach is based on the metric-enriched framework of \emph{M-categories} due to~\citet{Birkedal10}, which provides a natural setting for the construction of fixed points of functors. We give a sufficient criterion (\Cref{thm:lfc-existence}) for a behaviour bifunctor $B\colon \C^\op\times \C\to \C$ on an M-category to admit a (unique) locally final coalgebra, and provide an iterative construction method.
\end{enumerate}
To highlight the scope of our theoretical results, we instantiate them to various higher-order languages, including  $\lambda$-calculi and combinatory logics (both simply typed and untyped), as well as extensions of those languages with concurrent or probabilistic features. In the untyped cases, our example languages are modelled in guarded form within the \emph{topos of trees}, a standard model of synthetic guarded domain theory~\cite{nakano2000,Birkedal12,bbm14,mp16},
and their generic denotational models based on locally final coalgebras consist of a form of \emph{guarded interaction trees}~\cite{Frumin24}. In this way, a well-studied class of denotational models for higher-order languages naturally emerges in our abstract setting.

\section{Higher-Order Abstract GSOS}
We start by reviewing the higher-order abstract GSOS
framework~\cite{Goncharov23} and explain how it is used to model the
operational semantics of higher-order languages such as $\lambda$-calculi or
combinatory logics. The framework is based on three key categorical
abstractions:
\begin{enumerate}
	\item \emph{Algebras} for a functor capture program constructors and sets of program terms.
	\item \emph{Higher-order coalgebras} for a bifunctor capture transition systems on program terms.
	\item \emph{Higher-order GSOS laws} capture the operational semantics specifying such transition systems.
\end{enumerate}
In the following, we introduce these ingredients step by step.

\subsection{Algebras and Coalgebras}\label{sec:categories}
Let us first recall some required terminology from category
theory~\cite{Awodey10}. Familiarity with the basics, such as functors, natural
transformations, (co)limits, and monads, is assumed.

\paragraph{Notation}
Given objects
$X_1, X_2$ in a category~$\C$, we write $X_1\times X_2$ for the
product and $\langle f_1, f_2\rangle\c X\to X_1\times X_2$ for the
pairing of $f_i\c X\to X_i$, $i=1,2$. We let
$X_1+X_2$ denote the coproduct, $\inl\c X_1\to X_1+X_2$ and
$\inr\c X_2\to X_1+X_2$ the injections, $[g_1,g_2]\c X_1+X_2\to X$ the
copairing of $g_i\colon X_i\to X$, $i=1,2$, and
$\nabla=[\id_X,\id_X]\colon X+X\to X$ the codiagonal. We write $\Set$ for the category of sets and functions.


\paragraph{Algebras.}
Given an endofunctor $\Sigma$ on a category $\C$, a \emph{$\Sigma $-algebra} is
a pair $(A,a)$ consisting of an object~$A$ (the \emph{carrier} of the algebra) and a morphism
$a\colon \Sigma A\to A$ (its \emph{structure}). A \emph{morphism} from
$(A,a)$ to a $\Sigma$-algebra $(B,b)$ is a morphism $h\colon A\to B$
of~$\C$ such that $h\comp a = b\comp \Sigma h$. Algebras for~$\Sigma $ and their
morphisms form a category, and an \emph{initial} $\Sigma $-algebra
is simply an initial object in that category.  We denote the initial
$\Sigma $-algebra by $\mu \Sigma $ if it exists, and its structure by
$\ini\colon \Sigma (\mu \Sigma ) \to \mu \Sigma $. By Lambek's lemma~\cite[Lem.~2.4.3]{Jacobs16}, the structure $\ini$ is an isomorphism.

A \emph{free $\Sigma $-algebra} on an object $X$ of $\C$ is a
$\Sigma $-algebra $(\Sigma ^{\star}X,\iota_X)$ with a morphism
$\eta_X\c X\to \Sigma ^{\star}X$ of~$\C$ such that for every algebra $(A,a)$
and every $h\colon X\to A$ in $\C$, there exists a unique $\Sigma $-algebra
morphism $h^\#\colon (\Sigma ^{\star}X,\iota_X)\to (A,a)$ with
$h=h^\#\comp \eta_X$. If free algebras exist on every object, their formation
induces a monad $\Sigma ^{\star}\colon \C\to \C$, the (\emph{algebraically}) \emph{free
	monad}~\cite{Kelly80} generated
by~$\Sigma $. Every $\Sigma $-algebra $(A,a)$ yields an Eilenberg-Moore algebra
$\wh{a} \colon \Sigma ^{\star} A \to A$, the free extension of $\id_A\c A\to
	A$.


\begin{example}
	The most familiar example of functor algebras is that of algebras for a
	signature. Given a set $S$ of \emph{sorts}, an \emph{$S$-sorted algebraic signature} consists of a set~$\Sigma$
	of operation symbols together with a map $\ar\colon \Sigma\to S^{\star}\times S$
	associating to every $\f\in \Sigma$ its \emph{arity}. We write $\f\colon s_1\times\cdots\times s_n\to s$ if $\ar(\f)=(s_1,\ldots,s_n,s)$, and $\f\colon s$ if $n=0$ (in which case $\f$ is called a \emph{constant}). Every
	signature~$\Sigma$ induces a polynomial functor on the category $\Set^S$ of $S$-sorted sets, denoted by the
	same letter~$\Sigma$, given by $(\Sigma X)_s = \coprod_{\f\colon s_1\cdots s_n\to s} \prod_{i=1}^n X_{s_i}$ for $X\in \Set^S$ and $s\in S$. An algebra for the functor $\Sigma$ is
	precisely an algebra for the signature $\Sigma$, viz.~an $S$-sorted set $A=(A_s)_{s\in S}$ in $\Set^S$ equipped with an operation $\f^A\colon \prod_{i=1}^n A_{s_i}\to A_s$ for every $\f\colon s_1\cdots s_n\to s$ in $\Sigma$. Morphisms of $\Sigma$-algebras are $S$-sorted
	maps respecting the algebraic structure. Given an $S$-sorted set $X$ of
	variables, the free algebra $\Sigmas X$ is the $\Sigma$-algebra of
	$\Sigma$-terms with variables from~$X$; more precisely, $(\Sigmas X)_s$ is inductively defined by $X_s\seq (\Sigmas X)_s$ and $\f(t_1,\ldots,t_n)\in (\Sigmas X)_s$ for all $\f\colon s_1\cdots s_n\to s$ and $t_i\in (\Sigmas X)_{s_i}$. In particular, the free
	algebra on the empty set is the initial algebra $\mu \Sigma$; it is
	formed by all \emph{closed terms} of the signature. For every
	$\Sigma$-algebra $(A,a)$, the induced Eilenberg-Moore algebra
	$\wh{a}\colon \Sigmas A \to A$ is given by the map that evaluates terms
	over~$A$ in the algebra~$A$.
\end{example}

\paragraph{Coalgebras.}
Dual to the notion of algebra, a \emph{coalgebra} for an
endofunctor $B$ on $\C$ is a pair $(C,c)$ of an object $C$ (the
\emph{state space}) and a morphism $c\colon C\to BC$ (the
\emph{structure}). A \emph{morphism} $h\colon (C,c)\to (D,d)$ of coalgebras is a morphism $h\colon C\to D$ of $\C$ with $Bh\cdot c = d\cdot h$. A  \emph{final coalgebra} is a final object in the category of coalgebras, denoted by $\nu B$ if it exists. Coalgebras are thought of as categorical abstractions of transition systems whose transition type is specified by the functor $B$, and the final coalgebra $\nu B$ is the domain of abstract observable behaviours of such systems.

\begin{example}\label{ex:final-coalg}
	Consider the  endofunctor $BX = X + O$ on $\Set$ for a fixed set $O$ of outputs. A $B$-coalgebra $c\colon C\to C+O$ corresponds to a deterministic transition system where each state $x\in C$ either transitions to another state $y=\gamma(x)\in C$ (notation: $x\to y$), or terminates with an output $o=\gamma(x)\in O$ (notation: $x{\downarrow} o$.) The final coalgebra is given by
	\[ \tau\colon \Nat\times O + \{\infty\} \to \Nat\times O + \{\infty\} + O,
		\qquad
		\begin{cases}
			(0,o)   & \mapsto\quad o       \\
			(n+1,o) & \mapsto\quad (n,o)   \\
			\infty  & \mapsto\quad \infty.
		\end{cases}
	\]
	The unique morphism $h\colon C\to \Nat\times O + \{\infty\}$ from a coalgebra $(C,c)$ to the final coalgebra sends a state $x\in C$ to its observable behaviour:
	\[ h(x) = \begin{cases}
			(n,o)  & \text{if there exists a finite path of the form $x=x_0\to x_1\to \cdots \to x_n{\downarrow} o$} \\
			\infty & \text{if there exists an infinite path $x=x_0\to x_1\to x_2\to \cdots$}.
		\end{cases} \]
	In other words, $h$ observes whether the system eventually terminates from the state $x$, and in that case records the number of steps until termination and the final output. For more information on coalgebras as models of transition systems, we refer to \citet{Rutten2000} or \citet{Jacobs16}.
\end{example}

\paragraph{Higher-Order Coalgebras.} In higher-order abstract GSOS, one considers a generalized notion of coalgebra whose transition type is given by a \emph{bi}functor $B\colon \C^\op\times \C\to \C$ of mixed
variance rather than an endofunctor. A \emph{higher-order coalgebra} for $B$ is a pair $(C,c)$ of an
object $C$ and a morphism $c\colon C\to B(C,C)$. Thus $(C,c)$ is a coalgebra for the endofunctor $B(C,-)$. A \emph{morphism} from $(C,c)$
to a higher-order coalgebra $(D,d)$ is a morphism $h\colon C\to D$ of $\C$ such
that the diagram
\begin{equation*}
	\begin{tikzcd}[row sep=0]
		C \ar{r}{c} \ar{dd}[swap]{h} & {B(C,C)} \ar{dr}{B(\id,h)} &\\
		&& {B(C,D)} \\
		D \ar{r}{d} & {B(D,D)} \ar{ur}[swap]{B(h,\id)} &
	\end{tikzcd}
\end{equation*}
commutes. Higher-order coalgebras and their morphisms form a category, and a \emph{final} higher-order coalgebra is a final object of that category. Intuitively, a higher-order coalgebra represents the transition behaviour of higher-order programs. For instance, for $B(X,Y)=Y+Y^X$ on $\Set$ and a set~$\Lambda$ of programs (say $\lambda$-terms), a higher-order coalgebra $c\colon \Lambda\to \Lambda+\Lambda^\Lambda$ specifies that a program~$p$ either $\beta$-reduces to a program $c(p)\in \Lambda$, or acts as a higher-order function $c(p)\in \Lambda^\Lambda$ that maps programs to programs. A concrete example of a higher-order coalgebra is given in the next subsection.

\subsection{Higher-Order GSOS Laws}\label{sec:ho-gsos}
In higher-order abstract GSOS~\cite{Goncharov23}, the operational
semantics of a higher-order language is presented via a
\emph{higher-order GSOS law}, a categorical structure parametric in
\begin{enumerate}
	\item a category $\C$ with binary products and binary coproducts;
	\item an endofunctor $\Sigma \c \C \to \C$ that has  an initial algebra $\mS$ and a
	      free $\Sigma$-algebra $\Sigmas X$  on every object $X$ (so that $\Sigma$ generates an algebraically free
	      monad $\Sigmas$);
	\item a mixed-variance bifunctor $B\colon \C^\opp\times \C\to \C$.
\end{enumerate}
The functor $\Sigma$ represents the \emph{syntax} (constructors) of the language under consideration, with the initial algebra $\mS$ thought of as the object of program terms. The bifunctor $B$ represents the \emph{behaviour} of programs; it describes the kind of transitions programs can make. The motivation behind $B$ having two arguments is
that transitions may have labels, which behave contravariantly, and
poststates, which behave covariantly. In term models, the
objects of labels and states coincide.
\begin{definition}[Higher-order GSOS law]\label{def:ho-gsos-law}
	A \emph{higher-order GSOS law} of $\Sigma$ over $B$
	is a family \eqref{eq:ho-gsos-law} of morphisms that is dinatural in $X \in \C$ and natural in $Y\in \C$.
	\begin{align}\label{eq:ho-gsos-law}
		\rho_{X,Y} \c \Sigma (X \times B(X,Y))\to B(X, \Sigma^\star (X+Y)).
	\end{align}
\end{definition}

The idea is that a higher-order GSOS law $\rho$ encodes the (small-step) operational rules of a given higher-order language into a family of maps. It assigns to a constructor of
the language with formal arguments from~$X$ having specified
next-step behaviours in~$B(X,Y)$ (with labels in~$X$ and
poststates in~$Y$) a next-step behaviour in
$B(X, \Sigma^\star (X+Y))$ with the same labels, and
poststates being program terms mentioning variables from both~$X$
and~$Y$. (Di)naturality of~$\rho$ amounts to the encoded rules being parametrically polymorphic, that is, they do not inspect the structure of their metavariables.

Every higher-order GSOS law~\eqref{eq:ho-gsos-law} induces an \emph{operational
	($\rho$-)model}, a higher-order coalgebra
\begin{equation}\label{eq:operational-model}
	\gamma \c \mS \to B(\mS,\mS)
\end{equation}
carried by the object $\mS$ of program terms. Its structure is defined via \emph{primitive
	recursion}~\cite[Prop. 2.4.7]{Jacobs16} as the unique morphism~$\gamma$ making the following diagram commute:
\begin{equation}\label{eq:operational-model-def}
	\begin{tikzcd}[column sep=4ex, row sep=normal]
		\Sigma(\mS)
		\dar["\Sigma \brks{\id,\,\gamma}"']
		\ar[r,"\iota"]
		&  \mS \ar[dashed]{r}{\gamma} & B(\mS,\mS)
		\\
		\Sigma (\mS\times {B(\mS,\mS)})
		\ar{rr}{\rho_{\mS,\mS}}
		&&
		B(\mS,\Sigma^\star(\mS+\mS))
		\ar{u}[swap]{B(\id,\,\hat\iota\comp\Sigmas\nabla)}
	\end{tikzcd}
\end{equation}
Intuitively, the operational $\rho$-model is the transition system that runs programs according to the operational rules represented by the higher-order GSOS law $\rho$.

\paragraph{Simply Typed SKI Calculus.}
To illustrate higher-order abstract GSOS and our bialgebraic approach to
denotational semantics developed subsequently, we use as our running example an
extended version of the simply typed SKI calculus~\cite{Hindley08}, a typed
combinatory logic called \stsc~\cite{Goncharov24}. It is expressively
equivalent to the simply typed $\lambda$-calculus but does not use variables;
hence it avoids the technicalities of variable binding and
substitution (see \Cref{subsec:lambda} for the treatment of languages with
variables). The set~$\Ty$ of \emph{types} of \stsc is inductively defined as\footnote{Extending $\Ty$ with additional base types (e.g.\ booleans, natural numbers), sum types, or product types, poses no difficulty.}
\begin{equation}\label{eq:type-grammar}
	\Ty \Coloneqq \utype \mid \arty{\Ty}{\Ty}.
\end{equation}

The constructor $\arty{}{}$ for function types is right-associative, i.e.\
$\arty{\tau_{1}}{\arty{\tau_{2}}{\tau_{3}}}$ is parsed as
$\arty{\tau_{1}}{({\arty{\tau_{2}}{\tau_{3}}})}$. The terms of $\stsc$ are formed over the $\Ty$-sorted signature $\Sigma$ whose operation symbols are listed below, with
$\tau,\tau_1,\tau_2,\tau_3$ ranging over all types in $\Ty$:
\begin{align*}
	 & \mathsf{e} \c \utype                                                                                                                                 &  & \mathsf{app}_{\tau_1,\tau_2}\colon (\arty{\tau_{1}}{\tau_{2}})\times \tau_1\to \tau_2 \\
	 & S_{\tau_{1},\tau_{2},\tau_{3}} \c \arty{(\arty{\tau_{1}}{\arty{\tau_{2}}{\tau_{3}}})}{\arty{(\arty{\tau_{1}}{\tau_{2}})}{\arty{\tau_{1}}{\tau_{3}}}} &  & K_{\tau_{1},\tau_{2}} \c \arty{\tau_{1}}{\arty{\tau_{2}}{\tau_{1}}}                   \\
	 & S'_{\tau_1,\tau_2,\tau_3}\c (\arty{\tau_{1}}{\arty{\tau_{2}}{\tau_{3}}})\to (\arty{{(\arty{\tau_{1}}{\tau_{2})}}}{\arty{\tau_{1}}{\tau_{3}}})        &  & K'_{\tau_1,\tau_2}\colon \tau_1\to (\arty{\tau_{2}}{\tau_{1}})                        \\
	 & S''_{\tau_1,\tau_2,\tau_3}\c (\arty{\tau_{1}}{\arty{\tau_{2}}{\tau_{3}}})\times (\arty{\tau_{1}}{\tau_{2}})\to (\arty{\tau_{1}}{\tau_{3}})           &  & I_{\tau} \c \arty{\tau}{\tau}
\end{align*}
We let $\Tr=\mS$ denote the $\Ty$-sorted set of closed $\Sigma$-terms. The intention is that $\mathsf{app}$ is function application (accordingly, we write $s\, t$ for $\mathsf{app}(s,t)$),
and that the constants $I_\tau$, $K_{\tau_1,\tau_2}$, $S_{\tau_1,\tau_2,\tau_3}$ represent the $\lambda$-terms (combinators)
\[I_\tau=\lambda t.\,t,\qquad  K_{\tau_1,\tau_2}= \lambda t.\,\lambda s.\, t,\qquad S_{\tau_1,\tau_2,\tau_3}=\lambda t.\,\lambda s.\,\lambda u.\, (t\, u)\, (s\, u).\]
%
%
\begin{figure*}[t]
	\begin{gather*}
		\inference{}{S_{\tau_{1},\tau_{2},\tau_{3}}\xto{t}S'_{\tau_{1},\tau_{2},\tau_{3}}(t)}
		\qquad
		\inference{}{S'_{\tau_{1},\tau_{2},\tau_{3}}(p)\xto{t}S''_{\tau_{1},\tau_{2},\tau_{3}}(p,t)}  \qquad
		\inference{}{S''_{\tau_{1},\tau_{2},\tau_{3}}(p,q)\xto{t}(p\app{\tau_{1}}{\arty{\tau_{2}}{\tau_{3}}} t)\app{\tau_{2}}{\tau_{3}} (q\app{\tau_{1}}{\tau_{2}} t)}
		\\[1ex]
		\inference{}{K_{\tau_{1},\tau_{2}}\xto{t}K'_{\tau_{1},\tau_{2}}(t)}
		\qquad
		\inference{}{K'_{\tau_{1},\tau_{2}}(p)\xto{t}p}
		\qquad
		\inference{}{I_{\tau}\xto{t}t} \qquad
		\inference{}{\mathsf{e} \xto{\checkmark}}
		\\[1ex]
		\inference{p\to p'}{p \app{\tau_{1}}{\tau_{2}} q\to p' \app{\tau_{1}}{\tau_{2}} q}
		\qquad
		\inference{p\xto{q} p'}{p \app{\tau_{1}}{\tau_{2}} q\to p'}
	\end{gather*}
	\caption{(Call-by-name) operational semantics of \stsc.}
	\label{fig:skirules}
\end{figure*}
The operational semantics of \stsc involves three kinds of transitions
($\xto{\checkmark}$, $\xto{t}$, $\xto{}$), which are specified by the inductive rules of \Cref{fig:skirules}. Here, $p,p',q,t$ range over terms in $\Tr$ of appropriate
type. Intuitively, $s\xto{\checkmark}$ identifies~$s$ as an explicitly
irreducible term; $s\xto{t} r$ states that $s$ acts as a function mapping $t$
to~$r$; and $s\to t$ indicates that~$s$ reduces to $t$. The use of the auxiliary operators  $K'$, $S'$
and $S''$ follows the presentation of combinatory logics by~\citet{GianantonioRPO}. These operators do not impact the behaviour of programs, except for possibly adding
more unlabelled transitions, but allow for a simpler coalgebraic presentation of the language. For example, the standard rule $S\, t{}\, s{}\, e\to
	(t\,e)\, (s\, e)$ for the $S$-combinator is rendered in \stsc as the chain of transitions \[S\, t{}\, s{}\, e\to S'(t)\, s\, e\to S''(t,s)\, e\to (t\,e)\,(s\,e).\]
%
%
%

The transition system for \stsc is deterministic: for every term~$s$, either $s
	\xto{\checkmark}$, or there is a unique $t$ such that $s\to t$, or for each
appropriately typed $t$ there is a unique $s_t$ such that $s\xto{t} s_t$. These
transitions are captured by the bifunctor
\begin{equation}
	\label{eq:beh}
	B\c (\Set^{\Ty})^{\opp} \times \Set^{\Ty} \to \Set^{\Ty}, \qquad B_\utype(X,Y)=Y_\utype +1,\qquad B_{\arty{\tau_1}{\tau_2}}(X,Y) = Y_{\arty{\tau_1}{\tau_2}} + Y_{\tau_2}^{X_{\tau_1}},
\end{equation}
where $1=\{*\}$. The operational rules in \Cref{fig:skirules} determine a higher-order $B$-coalgebra given by

\noindent \begin{minipage}{.25\textwidth}
	\[
		\gamma\colon \Tr\to B(\Tr,\Tr),
	\]
\end{minipage}
\begin{minipage}{.75\textwidth}
	\begin{flalign}
		                                                                                     &        & \gamma_\utype(s)=                       & \;\inr(*)                &  & \text{if $s \xto{\checkmark}$ where $s\c \utype$,} \notag  \\
		                                                                                     &        & \gamma_\tau(s)=                         & \;\inl(t)                &  & \text{if $s \xto{} t$ where $s,t\c\tau$,}\label{exa:gamma} \\
		                                                                                     &        & \gamma_{\arty{\tau_{1}}{\tau_{2}}}(s) = & \; \inr(\lambda t.\,s_t) &  &
		\text{if $s \xto{t} s_t$ for $s \c \arty{\tau_{1}}{\tau_{2}}$ and all $t\c \tau_1$.} & \notag
	\end{flalign}
\end{minipage}

\medskip\noindent The operational rules of \stsc{} in \Cref{fig:skirules} can be encoded into a higher-order GSOS law $\rho$ of the syntax functor
$\Sigma$ over the behaviour bifunctor $B$, i.e.\ a family of maps~\eqref{eq:ho-gsos-law}
dinatural in $X\in \Set^\Ty$ and natural in $Y\in \Set^\Ty$. The maps
$\rho_{X,Y}$ are cotuples defined by distinguishing cases on the constructors. To avoid cumbersome notation, when giving terms in $\Sigmas(X+Y)$ we usually leave coproduct injections $\inl\colon X\to X+Y$ and $\inr\colon Y\to X+Y$ and the monad unit $\eta\colon \Id\to \Sigmas$ implicit. For example, in the clause for $\mathsf{app}$ below, $f(q)\in Y_{\tau_2}$ is identified with $\eta_{\tau_2}(\inr_{\tau_2}(f(q))\in \Sigmas_{\tau_2}(X+Y)$, which in turn is viewed as an element of the left summand of $B_{\tau_2}(X,\Sigmas(X+Y)) = \Sigmas_{\tau_2}(X+Y)+\cdots$.
\begin{align*}
	 & \rho_{X,Y}(tr) =
	\texttt{case}~tr~\texttt{of}\!\!\!\!\!\!\!\!\!\!\!                                                                                                                                    \\
	 & \mathsf{e}                                       & \mapsto\;\; & *                    &             & \mathsf{app}_{\tau_1,\tau_2}((p,f),(q,g)) & \mapsto\;\;
	 &
	\begin{cases}
		f(q),                              & \!\!f \in Y_{\tau_{2}}^{X_{\tau_{1}}}   \\
		\mathsf{app}_{\tau_1,\tau_2}(f,q), & \!\!f \in Y_{\arty{\tau_{1}}{\tau_{2}}}
	\end{cases}                                                                         \\
	 & S_{\tau_{1},\tau_{2},\tau_{3}}                   & \mapsto\;\; & \lambda t\,.S'(t)    &             & K_{\tau_{1},\tau_{2}}                     & \mapsto\;\; & \lambda t.\, K'(t) \\
	 & S'_{\tau_{1},\tau_{2},\tau_{3}}(p,f)             & \mapsto\;\; & \lambda t.\,S''(p,t) &             & K'_{\tau_{1},\tau_{2}}(p,f)               & \mapsto\;\; & \lambda t.\, p     \\*
	 & S''_{\tau_{1},\tau_{2},\tau_{3}}((p,f),(q,g))
	 & \mapsto\;\;
	 & \lambda t.\,(p \app{}{} t)\app{}{}(q \app{}{} t) &             & I_{\tau}             & \mapsto\;\; & \lambda t.\, t.
\end{align*}

The operational $\rho$-model is the coalgebra \eqref{exa:gamma} induced by the rules of \stsc.

\subsection{Models of Higher-Order GSOS Laws}


Every higher-order GSOS law induces a {category of models} where both the
operational model \eqref{eq:operational-model} and the denotational models
constructed further below live. A model consists of an algebra and a
higher-order coalgebra that are compatible with the given higher-order GSOS law.

\begin{definition}[$\rho$-bialgebra]\label{def:model} Let $\rho$ \eqref{eq:ho-gsos-law} be a higher-order GSOS law.
	A $\rho$-\emph{bialgebra}, or \emph{$\rho$-model}, $(X,a,c)$ is given by an object  $X\in \C$ with a $\Sigma$-algebra and a higher-order $B$-coalgebra structure
	\begin{displaymath}
		\Sigma X \xto{a} X \xto{c} B(X, X)
	\end{displaymath}
	such that diagram \eqref{eq:rho-model} below on the left commutes.
	A \emph{morphism} $h\colon (X,a_X,c_X) \to (Y,a_Y,c_Y)$ of $\rho$-bialgebras is a morphism $h: X \to Y$ of $\C$ that is both a morphism of $\Sigma$-algebras and a morphism of higher-order $B$-coalgebras, that is, the diagram below on the right
	commutes.

	\noindent \begin{minipage}{.55\textwidth}
		\begin{equation}\label{eq:rho-model}
			\begin{tikzcd}[column sep=0ex, row sep=normal]
				\Sigma X
				\dar["\Sigma \brks{\id,\, c}"']
				\ar[r,"a"]
				&  X \ar{r}{c} & B(X,X)
				\\
				\Sigma (X\times {B(X,X)})
				\ar{rr}{\rho_{X,X}}
				&&
				B(X,\Sigma^\star(X+X))
				\ar{u}[swap]{B(\id,\,\hat{a}\cdot\Sigmas\nabla)}
			\end{tikzcd}
		\end{equation}
	\end{minipage}
	\begin{minipage}{.03\textwidth}~\end{minipage}
	\begin{minipage}{.4\textwidth}
		\begin{equation*}\label{eq:bialg-morphism}
			\begin{tikzcd}[row sep=1, column sep=12]
				{\Sigma X} & X & {B(X,X)} & \\
				&&& {B(X,Y)} \\
				{\Sigma Y} & Y & {B(Y,Y)}&
				\arrow["{a_X}", from=1-1, to=1-2]
				\arrow["{\Sigma h}"', from=1-1, to=3-1]
				\arrow["{c_X}", from=1-2, to=1-3]
				\arrow["h", from=1-2, to=3-2]
				\arrow["{B(\id,h)}", from=1-3, to=2-4]
				\arrow["{a_Y}", from=3-1, to=3-2]
				\arrow["{c_Y}", from=3-2, to=3-3]
				\arrow["{B(h,\id)}"', from=3-3, to=2-4]
			\end{tikzcd}
		\end{equation*}
	\end{minipage}
\end{definition}

The collection of $\rho$-bialgebras and their morphisms form a
category~\cite[Prop.~4.19]{Goncharov23}. An \emph{initial $\rho$-bialgebra} is
an initial object of that category. By diagram
\eqref{eq:operational-model-def}, the operational $\rho$-model is a
$\rho$-bialgebra $(\mS,\ini,\gamma)$, and we have the following result~\cite[Prop.~4.20]{Goncharov23}:
\begin{theorem}\label{prop:initial-bialgebra}
	The operational model $(\mS,\ini,\gamma)$ of a higher-order GSOS law $\rho$ is the initial $\rho$-bialgebra.
\end{theorem}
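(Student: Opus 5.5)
Given a $\rho$-bialgebra $(X,a,c)$, the only candidate morphism from $(\mS,\ini,\gamma)$ is the unique $\Sigma$-algebra morphism $h\colon \mS\to X$, i.e.\ the initial-algebra fold with $h\comp\ini = a\comp\Sigma h$. So uniqueness and the $\Sigma$-algebra half are immediate. What remains is to show that $h$ is a higher-order coalgebra morphism:
\[
B(h,\id)\comp c\comp h \;=\; B(\id,h)\comp\gamma,
\]
both sides being maps $\mS\to B(\mS,X)$.

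I would prove this equation by initiality, exhibiting both sides as the unique solution of one primitive-recursion scheme over $\mS$. Concretely, I would find a single morphism
\[
\phi\colon \Sigma(\mS\times B(\mS,X))\to B(\mS,X)
\]
such that both $u=B(h,\id)\comp c\comp h$ and $v=B(\id,h)\comp\gamma$ satisfy
\[
u\comp\ini=\phi\comp\Sigma\brks{\id,u}.
\]
Primitive recursion \cite[Prop.~2.4.7]{Jacobs16} then gives $u=v$. The natural choice is
\[
\phi = B\bigl(\id,\ \hat a\comp\Sigmas\nabla\comp\Sigmas(h+\id)\bigr)\comp\rho_{\mS,X}.
\]

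For $v$ the verification goes as follows. Unfold $\gamma\comp\ini$ via \eqref{eq:operational-model-def}. Use naturality of $\rho$ in $Y$ along $h\colon\mS\to X$ to push $B(\id,h)$ inside, which turns $\rho_{\mS,\mS}\comp\Sigma\brks{\id,\gamma}$ into $\rho_{\mS,X}\comp\Sigma(\id\times B(\id,h))\comp\Sigma\brks{\id,\gamma}$. Then use that $h$ is a $\Sigma$-algebra morphism, so $h\comp\hat\iota=\hat a\comp\Sigmas h$, to rewrite the output component as $\hat a\comp\Sigmas\nabla\comp\Sigmas(h+\id)$.

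For $u$ the steps are:
\begin{itemize}
\item write $c\comp h\comp\ini=c\comp a\comp\Sigma h$;
\item apply the bialgebra law \eqref{eq:rho-model} for $(X,a,c)$, producing $\rho_{X,X}\comp\Sigma(h\times c h)$;
\item apply dinaturality of $\rho$ in the first argument along $h\colon\mS\to X$, which relates $B(h,\id)\comp\rho_{X,X}\comp\Sigma(h\times\id)$ to $B(\id,\Sigmas(h+\id))\comp\rho_{\mS,X}\comp\Sigma(\id\times B(h,\id))$;
\item observe that $\Sigma(\id\times B(h,\id))\comp\Sigma\brks{\id,ch}=\Sigma\brks{\id,u}$.
\end{itemize}

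The main obstacle is getting the dinaturality step exactly right. Dinaturality is stated for a transformation between mixed-variance functors, so instantiating it at the morphism $h$ with the middle object $\mS\times B(X,X)$ versus $X\times B(X,X)$ requires care. The clean formulation I would use: for any $f\colon A\to A'$, any $Z$, and $w\colon \Sigma(A\times B(A',Z))$, we have
\[
B(f,\id)\comp\rho_{A',Z}\comp\Sigma(f\times\id) \;=\; B(\id,\Sigmas(f+\id))\comp\rho_{A,Z}\comp\Sigma(\id\times B(f,\id)),
\]
both as maps $\Sigma(A\times B(A',Z))\to B(A,\Sigmas(A'+Z))$.

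After this step, the two expressions for $u\comp\ini$ and $\phi\comp\Sigma\brks{\id,u}$ coincide. Coincidence uses only functoriality of $B$ and the identity $\nabla\comp(h+h)=h\comp\nabla$, the latter pushed through $\Sigmas$ and $\hat a$.
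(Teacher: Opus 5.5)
Your proof is correct and is essentially the standard argument behind \cite[Prop.~4.20]{Goncharov23}, which the paper cites rather than reproves. Both $B(\id,h)\comp\gamma$ (via naturality of $\rho$ in $Y$ and $h\comp\hat\iota=\hat a\comp\Sigmas h$) and $B(h,\id)\comp c\comp h$ (via the bialgebra law and dinaturality in $X$, which you instantiate correctly) solve the same primitive-recursion equation; the only slip is notational, since $\Sigma(h\times ch)$ should read $\Sigma\brks{h,ch}=\Sigma(h\times\id)\comp\Sigma\brks{\id,ch}$, which is what your later steps actually use.
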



What about a \emph{denotational} model of a higher-order GSOS law $\rho$? As discussed in the introduction, one may first think of constructing such a model by taking the final higher-order $B$-coalgebra and extending it to a
final $\rho$-bialgebra, analogous to first-order abstract GSOS~\cite{Turi97}. However, this straightforward approach fails: final higher-order coalgebras usually do not exist for non-trivial behaviour bifunctors, even in such simple settings
as \stsc.
\begin{example}\label{ex:final-higher-order-coalgebra-not-exist}
	The behaviour bifunctor $B$ \eqref{eq:beh} of \stsc has no final higher-order coalgebra. Indeed, suppose that such a coalgebra $t\colon T\to B(T,T)$ exists. For any higher-order coalgebra $c\colon C\to B(C,C)$, the unique morphism $h\colon (C,c)\to (T,t)$ of higher-order coalgebras makes the two diagrams below commute, where we put $\mathsf{u}=\utype$ for short:
	\[
		\begin{tikzcd}
			C_{\mathsf{u}} \ar{r}{c_{\mathsf{u}}} \ar{d}[swap]{h_{\mathsf{u}}} & C_{\mathsf{u}} + 1 \ar{d}{h_{\mathsf{u}} + \id } \\
			T_{\mathsf{u}} \ar{r}{t_{\mathsf{u}}} & T_{\mathsf{u}} + 1
		\end{tikzcd}
		\quad\quad\quad
		\begin{tikzcd}[row sep=0]
			C_{\arty{{\mathsf{u}}}{{\mathsf{u}}}} \ar{r}{c_{\arty{{\mathsf{u}}}{{\mathsf{u}}}}} \ar{dd}[swap]{h_{\arty{{\mathsf{u}}}{{\mathsf{u}}}}} & C_{\arty{{\mathsf{u}}}{{\mathsf{u}}}} + C_{{\mathsf{u}}}^{C_{{\mathsf{u}}}}  \ar{dr}{h_{\arty{{\mathsf{u}}}{{\mathsf{u}}}} +  h_\mathsf{u}\cdot (-)} & \\
			&& T_{\arty{{\mathsf{u}}}{{\mathsf{u}}}} + T_{{\mathsf{u}}}^{C_{{\mathsf{u}}}} \\
			T_{\arty{{\mathsf{u}}}{{\mathsf{u}}}} \ar{r}{t_{\arty{{\mathsf{u}}}{{\mathsf{u}}}}} & T_{\arty{{\mathsf{u}}}{{\mathsf{u}}}} + T_{{\mathsf{u}}}^{T_{{\mathsf{u}}}} \ar{ur}[swap]{h_{\arty{{\mathsf{u}}}{{\mathsf{u}}}} + (-)\cdot h_\mathsf{u}} &
		\end{tikzcd}
	\]
	The first diagram tells us that $(T_\mathsf{u},t_\mathsf{u})$ is the final coalgebra for the set endofunctor $(-)+1$. Indeed, this follows from finality of $(T,t)$ together with the observation that every coalgebra morphism $h_u\colon (C_\mathsf{u},c_\mathsf{u})\to (T_\mathsf{u},t_\mathsf{u})$ extends to a higher-order coalgebra morphism $h\colon (C,c)\to (T,t)$ by choosing $C_\tau=\emptyset$ and $h_\tau\colon C_\tau\to T_\tau$ the unique map for all $\tau\neq \mathsf{u}$.

	Now choose $(C,c)$ to be the following higher-order coalgebra:
	\begin{align*}
		C_{\mathsf{u}}                    & =\{x,y,z\}, &  & c_\mathsf{u}(x)=x,\, c_\mathsf{u}(y)=y,\, c_\mathsf{u}(z) = *,                                                       \\
		C_{\arty{\mathsf{u}}{\mathsf{u}}} & = \{a\},    &  & c_{\arty{\mathsf{u}}{\mathsf{u}}}(a) = (x\mapsto x,\, y\mapsto z,\, z\mapsto z) \in C_{\mathsf{u}}^{C_{\mathsf{u}}},
	\end{align*}
	and $C_\tau$ and $c_\tau$ are arbitrary on all other types $\tau$. By \Cref{ex:final-coalg}, the final coalgebra $(T_\mathsf{u},t_\mathsf{u})$ for $(-)+1$ is carried by $T_\mathsf{u}=\Nat+ \{\infty\}$ and the map $h_\mathsf{u}$ is given by
	\begin{equation}\label{eq:hu-def} h_\mathsf{u}(x)=h_\mathsf{u}(y)=\infty\qquad \text{and}\qquad h_\mathsf{u}(z)=0. \end{equation}
	We put $f:=c_{\arty{\mathsf{u}}{\mathsf{u}}}(a)$. Commutativity of the second diagram above tells us that $g:=t_{\arty{\mathsf{u}}{\mathsf{u}}}(h_{\arty{\mathsf{u}}{\mathsf{u}}}(a))$ is an element of $T_{{\mathsf{u}}}^{T_{{\mathsf{u}}}}$, and moreover that
	\begin{equation}\label{eq:hu-prop} h_\mathsf{u}\cdot f = g\cdot h_{\mathsf{u}}.  \end{equation}
	Therefore, we get
	\[\infty \overset{\eqref{eq:hu-def}}{=} h_\mathsf{u}(x)
		\overset{\text{def.\ $f$}}{=} h_u(f(x))
		\overset{\eqref{eq:hu-prop}}{=} g(h_\mathsf{u}(x))
		\overset{\eqref{eq:hu-def}}{=} g(h_\mathsf{u}(y))
		\overset{\eqref{eq:hu-prop}}{=} h_\mathsf{u}(f(y))
		\overset{\text{def.\ $f$}}{=} h_\mathsf{u}(z) \overset{\eqref{eq:hu-def}}{=} 0,\]
	which is impossible.

\end{example}

Our approach to denotational
semantics in higher-order abstract GSOS therefore does not use final
higher-order coalgebras (and derived bialgebras). Instead, we base our denotational models on
\emph{locally} final higher-order coalgebras, a novel concept that we introduce in the next section.

\section{Bialgebraic Denotational Semantics}
\label{sec:denotational-models}
We now develop a general method to associate to a language specified in higher-order abstract GSOS a denotational model that is adequate with respect to the operational semantics.

\begin{assumption}\label{ass:ho-gsos}
	In the remainder of this section, we fix a higher-order GSOS law $\rho$ of $\Sigma\colon \C\to \C$ over $B\colon \C^\opp\times \C \to \C$, with its operational $\rho$-model $(\mS,\ini,\gamma)$ given by \eqref{eq:operational-model-def}. Moreover, we assume that for each object $X\in \C$ the endofunctor $B(X,-)$ has a final coalgebra $\nu B(X,-)$.
\end{assumption}


\subsection{Locally Final Coalgebras}
The foundation for our bialgebraic approach to denotational semantics is the notion
of a \emph{locally final coalgebra} for the behaviour bifunctor~$B$. It takes
the role that the final coalgebra for an endofunctor has in the setting of first-order abstract GSOS.

\begin{definition}[Locally final coalgebra] A higher-order
	$B$-coalgebra $z : Z \to B(Z,Z)$ is \emph{locally final} if it is a final coalgebra for the endofunctor $B(Z,-)\colon \C\to \C$.
\end{definition}
A locally final coalgebra serves as a domain of abstract higher-order behaviours. Note that the structure of a locally final coalgebra is an isomorphism by the dual of Lambek's lemma (\Cref{sec:categories}). However, unlike the final coalgebra for an endofunctor, a locally final coalgebra need not be unique up to isomorphism of higher-order coalgebras. General sufficient conditions for the existence, construction, and uniqueness  of locally final coalgebras are discussed in \Cref{sec:locally-final-coalgebras}. For now, we consider the setting of \stsc to give the reader some intuition about the nature of such coalgebras:

\begin{example}[Locally final coalgebra for \stsc]\label{ex:stsc}
	The bifunctor $B$ \eqref{eq:beh} on $\Set^\Ty$ which models \stsc has a locally final coalgebra $z\colon Z\xto{\cong} B(Z,Z)$ given by the $\Ty$-sorted set of trees obtained by unravelling all possible $B$-behaviours. More precisely, the components $z_\tau\colon Z_\tau\xto{\cong} B_\tau(Z,Z)$ ($\tau\in \Ty$) of the locally final coalgebra are defined by structural recursion on types as follows:
	\begin{itemize}
		\item $Z_\utype$ is the set containing all finite paths and the infinite path. The map
		      \[ z_\utype\colon Z_\utype \xto{\cong} Z_\utype + 1 = B_\utype(Z,Z) \]
		      sends the path consisting only of the root node to $*\in 1$, and every other
		      path to the subpath starting at the unique child of the root. In particular,
		      the infinite path is sent to itself.
		\item $Z_{\arty{\tau_1}{\tau_2}}$ consists of the infinite path and all trees that are given by a finite path composed with a tree whose root has children indexed by $s\in Z_{\tau_1}$, and where the subtree rooted at any such child is a tree in $Z_{\tau_2}$. The map \[ z_{\arty{\tau_1}{\tau_2}}\colon Z_{\arty{\tau_1}{\tau_2}} \xto{\cong} Z_{\arty{\tau_1}{\tau_2}} + Z_{\tau_2}^{Z_{\tau_1}} = B_{\arty{\tau_1}{\tau_2}}(Z,Z) \]
		      is defined on $t\in Z_{\arty{\tau_1}{\tau_2}}$ as follows: If $t$ is the
		      infinite path, then $z(t)=t$. Otherwise, $t$ is given by an initial finite path
		      followed by a tree as described above. If the initial path has length $>0$,
		      $z(t)\in Z_{\arty{\tau_1}{\tau_2}}$ is the subtree starting at the unique child
		      of the root of $t$. Otherwise, the root of $t$ is already the branching point,
		      and $z(t)\in Z_{\tau_2}^{Z_{\tau_1}}$ is the map where $z(t)(s)$ is the subtree
		      starting at the unique $s$-indexed child of the root of $t$.
	\end{itemize}
	To see that $(Z,z)$ is a locally final coalgebra, we observe that we can describe each individual component $(Z_\tau,z_\tau)$ as a final coalgebra. For $O\in \Set$ consider the endofunctor
	\[ \bar{B}_O\colon \Set\to \Set,\qquad  \barB_O X =X+O. \]
	Its final coalgebra $\nu\bar{B}_O$ is carried by $\Nat\times O + \{\infty\}$ (\Cref{ex:final-coalg}). Identifying $n\in\Nat$ with the path of length $n$ and $\infty$ with the
	infinite path, we see that
	\[ (Z_\utype,z_\utype) = \nu\bar{B}_1 \qquad\text{and}\qquad (Z_{\arty{\tau_1}{\tau_2}}, z_{\arty{\tau_1}{\tau_2}}) = \nu\bar{B}_{Z_{\tau_2}^{Z_{\tau_1}}}.\]
	From this, local finality of the higher-order coalgebra $(Z,z)$ easily follows. Given a $B(Z,-)$-coalgebra
	$c\colon X\to B(Z,X)$, a $B(Z,-)$-coalgebra morphism from $(X,c)$ to $(Z,z)$
	corresponds to a $\Set^\Ty$-morphism $h\colon X\to Z$ making the following
	diagrams commute:
	\[
		\begin{tikzcd}
			X_\utype \ar{r}{c_\utype} \ar{d}[swap]{h_\utype} & X_\utype + 1 \ar{d}{h_\utype + \id } \\
			Z_\utype \ar{r}{z_\utype} & Z_\utype + 1
		\end{tikzcd}
		\quad\quad
		\begin{tikzcd}
			X_{\arty{\tau_1}{\tau_2}} \ar{r}{c_{\arty{\tau_1}{\tau_2}}} \ar{d}[swap]{h_{\arty{\tau_1}{\tau_2}}} & X_{\arty{\tau_1}{\tau_2}} + Z_{\tau_2}^{Z_{\tau_1}}  \ar{d}{h_{\arty{\tau_1}{\tau_2}} + \id} \\
			Z_{\arty{\tau_1}{\tau_2}} \ar{r}{z_{\arty{\tau_1}{\tau_2}}} & Z_{\arty{\tau_1}{\tau_2}} + Z_{\tau_2}^{Z_{\tau_1}}
		\end{tikzcd}
	\]
	The first diagram states that $h_\utype$ is a $\bar{B}_1$-coalgebra
	morphism into $\nu\bar{B}_1$, and the second diagram
	that $h_{\arty{\tau_1}{\tau_2}}$ is a
	$\bar{B}_{Z_{\tau_2}^{Z_{\tau_1}}}$-coalgebra morphism into
	$\nu\bar{B}_{Z_{\tau_2}^{Z_{\tau_1}}}$. By finality of these coalgebras,
	it thus follows that there exists a unique $B(Z,-)$-coalgebra morphism $h\colon
		(X,c) \to (Z,z)$.

	Explicitly, $h$ sends an element of $X$ to the tree obtained by unravelling the coalgebra $c$ starting from $x$. For instance, if $x\in X_{\arty{\tau_1}{\tau_2}}$ and $c_{\arty{\tau_1}{\tau_2}}(x)=x_1$ and $c_{\arty{\tau_1}{\tau_2}}(x_1)=f$ for some $x_1\in X_{\arty{\tau_1}{\tau_2}}$ and $f\in X_{\tau_2}^{Z_{\tau_1}}$, then $h_{\arty{\tau_1}{\tau_2}}(x)$ is the tree given by an initial path of length $1$ composed with a tree whose child with index $z\in Z_{\tau_1}$ is $h_{\tau_2}(f(z))$.

	This concludes the proof that $(Z,z)$ is a locally final coalgebra. We shall see later from a general result (\Cref{thm:xtcl-lfc}) that $(Z,z)$ is, up to isomorphism, the \emph{unique} locally final coalgebra for $B$.

\end{example}

\subsection{Constructing a Denotational Model from a Locally Final Coalgebra}\label{sec:relative-flatness}
A locally final coalgebra represents abstract unravelled higher-order
behaviours and thus provides a natural candidate for a denotational domain for
the given higher-order GSOS law $\rho$. However, to make it properly fit into the
bialgebraic framework, we need to extend the coalgebra to a
$\rho$-bialgebra, that is, equip it with a compatible $\Sigma$-algebra
structure. Formally:

\begin{definition}[Denotational $\rho$-model]
	Let $\rho$ be a higher-order GSOS law. A \emph{denotational $\rho$-model} is a
	$\rho$-bialgebra whose underlying higher-order coalgebra is locally final.
\end{definition}

It turns out that this extension step requires a syntactic restriction on the
higher-order GSOS law, called \emph{relative flatness}. This notion was
originally introduced by \citet{Goncharov24} in the context of constructing
well-behaved logical predicates for higher-order GSOS laws, and remarkably, it
is also precisely what makes the construction of denotational models
work.

\begin{definition}[Relative flatness]
	Let $(J, <)$ be a well-founded relation, and suppose that $\Sigma$ decomposes into a $J$-indexed coproduct $\Sigma = \coprod_{j\in J} \Sigma_j$ of endofunctors $\Sigma_j$, where
	$\Sigma_j$ and $\Sigma_{< j}=\coprod_{i<j} \Sigma_i$ have algebraically free monads
	$\Sigma^{\star}_j$ and $\Sigma^{\star}_{< j}$ for each $j\in J$.
	A \emph{relatively flat higher-order GSOS law} is a $J$-indexed family \eqref{eq:relative-flat-law} of morphisms dinatural in $X\in \C$ and natural in $Y\in \C$.
	\begin{equation}\label{eq:relative-flat-law}
		\rho^j_{X,Y} : \Sigma_j (X\times B(X,Y)) \to B(X, \Sigma^{\star}_{< j} (X+Y) +
		\Sigma_j \Sigma^{\star}_{< j} (X+Y)).
	\end{equation}
\end{definition}


\begin{construction}
	Every relatively flat higher-order GSOS law \eqref{eq:relative-flat-law} induces an (actual) higher-order GSOS law $\rho$ of $\Sigma$ over $B$ whose component at $X,Y\in \C$ is given by the composite
	\begin{align*}
		\rho_{X,Y} \;\coloneq\; \big(\, \Sigma (X\times B(X,Y)) \xrightarrow{[\inj_j \cdot
					\rho^j_X]_{j \in J}}
		\coprod_{j \in J} B(X, \Sigma^{\star}_{< j} X + \Sigma_j \Sigma^{\star}_{<
			j} X)
		\xrightarrow{[B(\id, \, e_{j,X})]_{j \in J}} B(X, \Sigma^{\star} X)\,\big).
	\end{align*}
	Here the morphism $e_{j,X}$ is given by
	\begin{align*}
		e_{j, X}\colon \Sigma^{\star}_{< j} X + \Sigma_j \Sigma^{\star}_{< j} X \to
		\Sigma^{\star} X,\qquad
		e_{j, X} \coloneq [\inj^{\star}_{< j},\, \ini_X \cdot \inj_j \cdot \Sigma_j
			\inj^{\star}_{< j}],
	\end{align*}
	where $\inj_j\colon \Sigma_j\to \Sigma$ and $\inj_{<j}=[\inj_i]_{i<j}\colon \Sigma_{<j}\to \Sigma$ are the coproduct injections (with respective free extensions $\inj_j^\star\colon \Sigmas_j \to \Sigmas$ and $\inj_{<j}^\star\colon \Sigmas_{<j}\to \Sigmas$), and $\ini_X\colon \Sigma\Sigmas X\to \Sigmas X$ denotes the structure of the free $\Sigma$-algebra on $X$.
\end{construction}



Following the interpretation of higher-order GSOS laws as systems of
operational rules, relative flatness means that every constructor from the
signature $\Sigma$ can be assigned a rank $j\in J$ in such a way that every
term $\f(p_1, \dots, p_n)$, with $\f$ having rank $j$, transitions into terms using
only constructors of strictly lower ranks, except for their head symbol, which may
have rank $j$ as well. This is achieved by decomposing $\rho$ into a stratified
family of components.

\begin{example}[\stsc]\label{ex:stsc-flat}
	The higher-order GSOS law corresponding to \stsc is relatively flat: put $J = \{0< 1\}$, let $\Sigma_0$ contain application only, and let $\Sigma_1$ contain all other operation symbols.
\end{example}

\begin{remark}
	Unbounded recursion, e.g.\ in the form of recursive types $\mu X.\tau$ or a $Y$-combinator
	$\mathsf{Y}\,e \to e\, (\mathsf{Y}\,e)$, typically breaks relative flatness.
\end{remark}





For relatively flat laws, we can base bialgebraic denotational semantics on locally final coalgebras:

\begin{theorem}[Extension theorem]\label{th:denotational-model}
	Let $\rho$ be a relatively flat higher-order GSOS law. Then every locally final coalgebra $z\colon Z\xto{\cong} B(Z,Z)$ extends to a denotational $\rho$-model \[\Sigma Z \xto{a_\rho} Z \xto{z} B(Z,Z).\]
\end{theorem}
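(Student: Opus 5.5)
The plan is to build the algebra structure $a_\rho\colon \Sigma Z\to Z$ componentwise along the decomposition $\Sigma=\coprod_{j\in J}\Sigma_j$, by well-founded recursion on $(J,<)$. Each component $a_j\colon \Sigma_j Z\to Z$ will be obtained by coinduction, using only that $(Z,z)$ is final for the endofunctor $B(Z,-)$.

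Suppose $a_i\colon\Sigma_i Z\to Z$ is already defined for all $i<j$. Put $a_{<j}=[a_i]_{i<j}\colon \Sigma_{<j}Z\to Z$, and let $\hat a_{<j}\colon \Sigma^\star_{<j}Z\to Z$ be its Eilenberg–Moore extension. Define a morphism $k_j\colon \Sigma_j Z\to B(Z,Z+\Sigma_j Z)$ as the composite
\[
\Sigma_j Z\xrightarrow{\Sigma_j\brks{\id,z}}\Sigma_j(Z\times B(Z,Z))\xrightarrow{\rho^j_{Z,Z}} B(Z,\Sigma^\star_{<j}(Z+Z)+\Sigma_j\Sigma^\star_{<j}(Z+Z))\xrightarrow{B(\id,\,m_j)} B(Z,Z+\Sigma_j Z),
\]
where $m_j=\hat a_{<j}\cdot\Sigma^\star_{<j}\nabla+\Sigma_j(\hat a_{<j}\cdot\Sigma^\star_{<j}\nabla)$. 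This is a ``coalgebra with a guarded escape''. Turn it into an honest $B(Z,-)$-coalgebra on $Z+\Sigma_j Z$:
\[
c_j=[B(\id,\inl)\cdot z,\;k_j]\colon Z+\Sigma_jZ\to B(Z,Z+\Sigma_jZ).
\]
Let $h_j\colon Z+\Sigma_jZ\to Z$ be the unique coalgebra morphism into $(Z,z)$. Since $\inl\colon(Z,z)\to(Z+\Sigma_jZ,c_j)$ is a coalgebra morphism, finality gives $h_j\cdot\inl=\id$. We then set $a_j=h_j\cdot\inr$, and finally $a_\rho=[a_j]_{j\in J}$.

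Next, verify the bialgebra law \eqref{eq:rho-model}, which can be done one summand $\Sigma_j$ at a time. The coalgebra-morphism property of $h_j$, restricted along $\inr$, reads
\[
z\cdot a_j=B(\id,h_j)\cdot k_j=B(\id,[\id,a_j]\cdot m_j)\cdot\rho^j_{Z,Z}\cdot\Sigma_j\brks{\id,z}.
\]
Here $[\id,a_j]\cdot m_j=[\hat a_{<j}\cdot\Sigma^\star_{<j}\nabla,\;a_j\cdot\Sigma_j(\hat a_{<j}\cdot\Sigma^\star_{<j}\nabla)]$. On the other side, $\rho$ restricted to $\Sigma_j$ is $B(\id,e_{j})\cdot\rho^j$. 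Using naturality of $\Sigma^\star\nabla$ with respect to the monad maps $\inj^\star_{<j}$, the required equation therefore reduces to
\[
\hat a_\rho\cdot e_{j,Z}=[\hat a_{<j},\;a_j\cdot\Sigma_j\hat a_{<j}].
\]
This splits into two facts. First, $\hat a_\rho\cdot\inj^\star_{<j}=\hat a_{<j}$, because both sides are $\Sigma_{<j}$-algebra morphisms $\Sigma^\star_{<j}Z\to Z$ extending $\id_Z$. Second, $\hat a_\rho\cdot\ini_Z\cdot\inj_j=a_\rho\cdot\inj_j\cdot\Sigma_j\hat a_\rho=a_j\cdot\Sigma_j\hat a_\rho$; combined with the first fact this gives the second component.

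I expect the main obstacle to be bookkeeping rather than any conceptual step. Three points need care: that the recursion over the well-founded $J$ legitimately produces a single family $[a_j]_j$ when $J$ may be infinite (use well-founded recursion on families of morphisms, with $\Sigma_{<j}$ a coproduct of earlier summands); the compatibility of the free monads $\Sigma^\star_{<j}\to\Sigma^\star$ with the Eilenberg–Moore extensions; and the correct tracking of $\nabla$ through $e_j$. Notably, dinaturality of $\rho^j$ is not needed; only the instance at $(Z,Z)$ is used, together with local finality and the fact that $B(\id,-)$ is a functor.
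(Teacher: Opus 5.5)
Your proposal is correct and matches the paper's proof step for step: the same $B(Z,-)$-coalgebra on $Z+\Sigma_j Z$ defined by well-founded recursion, with $a_j$ obtained by finality, the same reduction of the bialgebra law to $\hat a_\rho\cdot e_{j,Z}=[\hat a_{<j},\,a_j\cdot\Sigma_j\hat a_{<j}]$, and the same two-component check via freeness of $\Sigma^\star_{<j}Z$ and the Eilenberg--Moore law. The only difference is cosmetic: you track the codiagonal $\nabla$ explicitly via naturality of $e_j$, whereas the paper's notation $\rho^j_Z$ suppresses it.
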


\begin{proof}[Proof Sketch]
	To construct the algebra structure $a_\rho$, we first define a family of morphisms
	\begin{displaymath}
		(a^j_{\rho} : Z + \Sigma_j Z \to Z)_{j \in J}
	\end{displaymath}
	by well-founded recursion on $j$. Let $j \in J$, and assume that
	$a^i_{\rho}$ is already defined for $i < j$. Put
	\[
		a^{< j}_{\rho}\colon \Sigma_{< j} Z \to Z,\qquad
		a^{< j}_{\rho} \coloneq [a^i_{\rho} \cdot \inr ]_{i < j}.
	\]
	The object $Z+\Sigma_j Z$ carries a $B(Z,-)$-coalgebra structure given by the path from $Z+\Sigma_j Z$ to $B(Z,Z+\Sigma_j Z)$ below. We take $a^j_\rho$ to be the unique morphism to  the final $B(Z,-)$-coalgebra $(Z,z)$:
	\[
		\begin{tikzcd}[scale cd = .83, column sep=4em]
			{Z + \Sigma_j Z} & Z & {B(Z,Z)} \\
			{B(Z,Z)+\Sigma_j (Z \times B(Z,Z))} && {B(Z, Z + \Sigma_j Z)} \\
			{B(Z,Z)+B(Z, \Sigma^{\star}_{<j}Z + \Sigma_j \Sigma^{\star}_{<j}Z)} & {B(Z,Z)+B(Z, Z + \Sigma_j Z)} & {B(Z,Z + \Sigma_j Z)+B(Z, Z + \Sigma_j Z)}
			\arrow["{a^j_{\rho}}", dashed, from=1-1, to=1-2]
			\arrow["{z+\Sigma_j\langle \id,\, {z} \rangle}"', from=1-1, to=2-1]
			\arrow["{z \quad \cong}", from=1-2, to=1-3]
			\arrow["{\id + \rho^j_{Z}}"', from=2-1, to=3-1]
			\arrow["{B(\id,\, a^j_{\rho})}"', from=2-3, to=1-3]
			\arrow["{\id + B(\id, \, \widehat{a^{<j}_{\rho}} + \Sigma_j \widehat{a^{<j}_{\rho}})}", bend left=10, from=3-1, to=3-2]
			\arrow["{B(\id,\,\inl) + \id}", from=3-2, to=3-3, bend left=10]
			\arrow["{\nabla}"', from=3-3, to=2-3]
		\end{tikzcd}
	\]
	The desired
	$\Sigma$-algebra structure $a_\rho$ on $Z$ is then given by
	\[
		a_{\rho}\colon \Sigma Z \to Z,\qquad
		a_{\rho} \coloneq [a^j_{\rho} \cdot \inr]_{j \in J}.
	\]
	A lengthy calculation shows that  $(Z,a_\rho,z)$ forms a $\rho$-bialgebra (see
	the appendix\ifarx{}{ of the extended arXiv
		version~\cite{Goncharov2026arXiv}}).
\end{proof}

\subsection{Adequacy}
We next show that, given a higher-order GSOS law $\rho$, any denotational $\rho$-model is adequate with respect to the operational $\rho$-model. This rests on the following general concept:


\begin{definition}[Adequacy]\label{def:adequacy}
	A  $\Sigma$-algebra $(A,a)$ is (\emph{computationally}) \emph{adequate} if the unique $B(\mS,-)$-coalgebra morphism $\beh\colon (\mS,\gamma)\to \nu B(\mS, -)$ factorizes through the unique $\Sigma$-algebra morphism $\llbracket - \rrbracket\colon \mS\to (A,a)$, i.e.\ there is a morphism $A\to \nu B(\mS, -)$ in $\C$  making the triangle below commute.
	\[
		\begin{tikzcd}
			& \mS \ar{dr}{\beh} \ar{dl}[swap]{\llbracket - \rrbracket} & \\
			A \ar[dashed]{rr}{\exists} && \nu B(\mS,-)
		\end{tikzcd}
	\]
\end{definition}

We think of $(A,a)$ as the algebraic part of a denotational $\rho$-model and of $\llbracket - \rrbracket$ as the map sending a program to its denotation. For $\C = \Set$ (or  presheaf categories $\C=\Set^\B$), the fact that  $\llbracket - \rrbracket$ is a $\Sigma$-algebra morphism means that the ensuing denotational semantics is \emph{compositional}, that is, the denotation of a program $\f(p_1,\ldots,p_n)$ is uniquely determined by the denotations of its subprograms $p_i$. Moreover, adequacy in the above abstract sense entails the usual meaning~\cite{Milner75}, namely that denotational equivalence implies behavioural equivalence of programs:
\[ \forall p,q\in \mS.\, \llbracket p \rrbracket = \llbracket q \rrbracket \implies \beh(p)=\beh(q). \]
In such cases, we also say that the model is \emph{adequate for behavioural equivalence}.




Our bialgebraic setup guarantees the following general adequacy result:
\begin{theorem}[Adequacy]
	\label{th:adequacy}
	For every $\rho$-bialgebra, the underlying $\Sigma$-algebra is adequate.
\end{theorem}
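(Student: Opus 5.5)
The plan is to use initiality of the operational model (\Cref{prop:initial-bialgebra}) to relate the denotation map to the coalgebra structure of the given bialgebra, and then to compose with a suitable coalgebra morphism into $\nu B(\mS,-)$.

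Let $(A,a,c)$ be a $\rho$-bialgebra. First, by \Cref{prop:initial-bialgebra} there is a unique $\rho$-bialgebra morphism $h\colon(\mS,\ini,\gamma)\to(A,a,c)$. In particular $h$ is a $\Sigma$-algebra morphism, so it coincides with the unique algebra morphism $\den{-}\colon \mS\to A$. Being also a morphism of higher-order coalgebras, it satisfies
\[ B(\id,\den{-})\cdot\gamma = B(\den{-},\id)\cdot c\cdot \den{-}. \]

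Next, equip $A$ with a $B(\mS,-)$-coalgebra structure
\[ c' \coloneq B(\den{-},\id)\cdot c\colon A\to B(A,A)\to B(\mS,A). \]
The displayed equation then says exactly that $\den{-}$ is a morphism of $B(\mS,-)$-coalgebras $(\mS,\gamma)\to(A,c')$: the left-hand side is the image of $\gamma$ under $B(\mS,\den{-})$, and the right-hand side is $c'\cdot\den{-}$. By \Cref{ass:ho-gsos}, the final coalgebra $\nu B(\mS,-)$ exists, so there is a unique coalgebra morphism $k\colon(A,c')\to\nu B(\mS,-)$.

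Finally, $k\cdot\den{-}$ is a $B(\mS,-)$-coalgebra morphism from $(\mS,\gamma)$ into the final coalgebra. By uniqueness it equals $\beh$, so $k$ is the required factorization.

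The only point needing care is the second step: checking that the higher-order coalgebra morphism square, with its two legs $B(\id,h)$ and $B(h,\id)$, really is the endofunctor coalgebra square for $B(\mS,-)$ once $c'$ is defined by precomposing with $B(\den{-},\id)$. This holds by definition, so no real obstacle is expected.
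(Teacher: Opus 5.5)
Your proposal is correct and is essentially the paper's own proof. The paper likewise equips the carrier with the $B(\mS,-)$-coalgebra structure $B(\den{-},\id)\cdot c$, uses initiality of the operational model to see that $\den{-}$ is a coalgebra morphism from $(\mS,\gamma)$, and concludes $\beh = k\cdot\den{-}$ by finality of $\nu B(\mS,-)$.
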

\begin{proof}
	Let $(X, a,c)$ be a $\rho$-bialgebra, and let
	\begin{itemize}
		\item $\llbracket-\rrbracket$ be the unique $\Sigma$-algebra morphism from $\mS$ to $(X,a)$;
		\item $\inj$ be the unique $B(\mS,-)$-coalgebra morphism from $(X,B(\llbracket-\rrbracket,\id)\cdot c)$ to $\nu B(\mS,-)$.
	\end{itemize}
	Denoting the final $B(\mS,-)$-coalgebra by $(Z,\zeta)$, we thus have the following diagram:
	\[\begin{tikzcd}[column sep=3.5em]
			\mS & X & Z \\
			& {B(X,X)} \\
			{B(\mS,\mS)} & {B(\mS,X)} & {B(\mS,Z)}
			\arrow["{\llbracket-\rrbracket}", from=1-1, to=1-2]
			\arrow["{\gamma}"', from=1-1, to=3-1]
			\arrow["\inj", from=1-2, to=1-3]
			\arrow["{c}", from=1-2, to=2-2]
			\arrow["\zeta", from=1-3, to=3-3]
			\arrow["{B(\llbracket-\rrbracket,\, \id)}", from=2-2, to=3-2]
			\arrow["{B(\id,\,\llbracket-\rrbracket)}"', from=3-1, to=3-2]
			\arrow["{B(\id,\,\inj)}"', from=3-2, to=3-3]
		\end{tikzcd}\]
	The left cell commutes because the $\Sigma$-algebra morphism $\llbracket -
		\rrbracket$ is also a morphism $\llbracket - \rrbracket\colon (\mS,\ini,\gamma)\to (A,a,c)$ of $\rho$-bialgebras
	by \Cref{prop:initial-bialgebra}. The right cell commutes because $\inj$ is a $B(\mS,-)$-coalgebra morphism by definition. Thus the outside commutes, which shows  that $\inj \cdot
		\llbracket - \rrbracket$ is a $B(\mS,-)$-coalgebra morphism from $(\mS,\gamma)$ to $(Z,\zeta)$. Since $\beh$ is, by definition, the unique such morphism, it follows that $\beh=\inj\cdot \llbracket-\rrbracket$, whence the algebra $(X,a)$ is adequate.
\end{proof}

In particular, this theorem applies to the denotational $\rho$-model derived from a locally final coalgebra
(\Cref{th:denotational-model}), and so we get:
\begin{corollary}
	\label{corr:adequacy}
	Let $\rho$ be a relatively flat higher-order GSOS law, and let $z\colon Z\to B(Z,Z)$ be a locally final coalgebra. Then the $\Sigma$-algebra $(Z,a_\rho)$ is adequate.
\end{corollary}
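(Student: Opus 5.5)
This is a direct combination of the two preceding results, so the plan is short. First, by \Cref{th:denotational-model}, relative flatness of $\rho$ lets us extend the locally final coalgebra $z\colon Z\to B(Z,Z)$ to a denotational $\rho$-model $(Z,a_\rho,z)$. In particular, $(Z,a_\rho,z)$ is a $\rho$-bialgebra, i.e.\ diagram \eqref{eq:rho-model} commutes with $X=Z$, $a=a_\rho$ and $c=z$. Second, we apply \Cref{th:adequacy} to this $\rho$-bialgebra. It tells us that the underlying $\Sigma$-algebra $(Z,a_\rho)$ is adequate in the sense of \Cref{def:adequacy}, which is precisely the claim.

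It is worth recording what the factorizing morphism is. Let $\den{-}\colon\mS\to Z$ be the unique $\Sigma$-algebra morphism into $(Z,a_\rho)$. Equip $Z$ with the $B(\mS,-)$-coalgebra structure $B(\den{-},\id)\cdot z$, and let $\inj\colon Z\to\nu B(\mS,-)$ be the unique coalgebra morphism it induces. Then $\beh=\inj\cdot\den{-}$, exactly as in the proof of \Cref{th:adequacy}; that proof uses \Cref{prop:initial-bialgebra} to know that $\den{-}$ is a $\rho$-bialgebra morphism. Two standing facts make everything well defined. The final coalgebra $\nu B(\mS,-)$ exists by \Cref{ass:ho-gsos}. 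The initial algebra $\mS$ exists by the setup of higher-order GSOS laws.

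No real obstacle is expected here. All the substantive work, namely checking that $a_\rho$ is compatible with $z$ via the law $\rho$, is contained in the extension theorem \Cref{th:denotational-model}, and that theorem is assumed. The only point needing care is a bookkeeping one. The law used to build $a_\rho$ is the actual higher-order GSOS law induced from the relatively flat family $(\rho^j)_{j\in J}$ by the construction above. It must be the same $\rho$ whose operational model $(\mS,\ini,\gamma)$ appears in the definition of $\beh$. Once we confirm that both theorems refer to this same induced $\rho$, the corollary follows immediately.
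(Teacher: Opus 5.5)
Your proposal is correct and follows the paper's own argument: the corollary comes from applying \Cref{th:adequacy} to the $\rho$-bialgebra $(Z,a_\rho,z)$ provided by \Cref{th:denotational-model}. Your description of the factorizing morphism $\inj\colon Z\to\nu B(\mS,-)$ is exactly the one constructed in the proof of \Cref{th:adequacy}.
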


Note that the corresponding result in first-order abstract GSOS~\cite{Turi97} is trivial: here $B$ is an endofunctor, so denotational and behavioural equivalence simply coincide ($Z=\nu B$ and $\den{-}=\beh$).

\begin{example}[\stsc]\label{ex:stsc-adequate}
	For the behaviour bifunctor $B$ for \stsc, behavioural equivalence corresponds to \emph{strong applicative bisimilarity}~\cite{Abramsky90}. The latter is the greatest equivalence relation $\sim \,= (\sim_\tau\,\seq \Tr_\tau\times \Tr_\tau)_{\tau\in \Ty}$ on the $\Ty$-sorted set $\Tr$ of program terms such that:
	\begin{itemize}
		\item if $p\sim_\utype q$ and $p\to p'$, then there exists $q'$ with $q\to q'$ and $p'\sim_\utype q'$;
		\item if $p\sim_\utype q$ and $p\xto{\checkmark}$, then $q\xto{\checkmark}$;
		\item if $p\sim_{\arty{\tau_1}{\tau_2}} q$ and $p\to p'$, then there exists $q'$ with $q\to q'$ and $p'\sim_{\arty{\tau_1}{\tau_2}} q'$;
		\item if $p\sim_{\arty{\tau_1}{\tau_2}} q$ and $p\xto{t} p'$, then there exists $q'$ with $q\xto{t} q'$ and $p'\sim_{\tau_2} q'$.
	\end{itemize}
	The general adequacy result of \Cref{corr:adequacy} thus implies that denotational equivalence w.r.t.~the $\rho$-model constructed in \Cref{ex:stsc} is adequate for strong applicative bisimilarity:
	\[ \forall \tau\in \Ty.\, \forall  p,q\in \Tr_\tau.\, \llbracket p \rrbracket_\tau = \llbracket q \rrbracket_\tau \implies p\sim_\tau q. \]
\end{example}

\begin{remark}
	For every endofunctor $F$ that preserves weak pullbacks, behavioural equivalence w.r.t.\ the final coalgebra $\nu F$ corresponds to strong coalgebraic bisimilarity~\cite{Rutten2000}. This holds in particular for the endofunctor $B(\mS,-)$, which preserves weak pullbacks.
\end{remark}

\section{Constructing Locally Final
  Coalgebras}\label{sec:locally-final-coalgebras}

Our bialgebraic approach to denotational semantics presented above is based on locally final coalgebras for the
behaviour bifunctor $B\colon \C^{\op} \times \C \to \C$, but so far leaves open
when such coalgebras actually exist, and how to construct them. In this
section, we provide a general construction method.

\subsection{M-Categories}
We construct locally final coalgebras using a variant of \emph{M-categories}, a type of categories
enriched over ultrametric spaces introduced by~\citet{Birkedal10}.
$M$-categories provide a simple and natural setting for fixed point
constructions of functors.

\paragraph{Ultrametric Spaces}\label{sec:ultrametric-spaces}
We first recall some terminology from the theory of metric spaces~\cite{smyth90}. A metric space $(X, d\colon X^2 \to \mathbb{R}_{\geq 0})$ is an
\emph{ultrametric} space if it satisfies the strong triangle inequality
\begin{displaymath}
	\forall x, y, z \in X .\, d(x, z) \leq \max(d(x, y), d(y, z)).
\end{displaymath}
A metric space is \emph{1-bounded} if $d(x,y) \leq 1$ for all $x,y\in X$. The \emph{product space} of two metric spaces $(X,d_X)$ and $(Y,d_Y)$ is the space  \[(X,d_X)\times (Y,d_Y)=(X\times Y, d)\qquad\text{where}\qquad {d}((x,y),(x',y'))=\max \{ d_X(x,x'), d_Y(y,y') \}.\]
A map $f: X \to Y$ from a metric space $(X,d_X)$ to a metric space $(Y, d_Y)$
is \emph{non-expansive} when
\begin{displaymath}
	\forall x, y \in X .\, d_Y(f(x), f(y)) \leq d_X(x, y).
\end{displaymath}
A stronger condition is given by \emph{contractivity}, that is,
\begin{displaymath}
	\exists c \in [0,1) .\, \forall x, y \in X .\,
	d_Y(f(x), f(y)) \leq c \cdot d_X(x, y).
\end{displaymath}

A key construction in our theory is Banach's fixed-point theorem
\cite{Granas03}, which applies to \emph{complete} metric spaces. Recall that a
sequence $(x_n)_{n \in \N}$ in a metric space $(X,d)$ is a \emph{Cauchy
	sequence} if
\begin{displaymath}
	\forall \epsilon > 0.\, \exists N \in \N.\, \forall m, n \geq N .\,
	d(x_m, x_n) < \epsilon.
\end{displaymath}
A metric space is \emph{complete} if
every Cauchy sequence in it converges.

\begin{theorem}[Banach]
	Every contractive function on a complete, inhabited\,\footnote{
		Birkedal et al.\,\cite{Birkedal10} use phrases such as ``$X$ is non-empty'' ($\neg \forall x
			.\, x \notin X$) throughout their presentation. We choose ``$X$ is
		inhabited'' ($\exists x .\, x \in X$). This change makes Banach's
		fixed-point theorem, as well as the results from \cite{Birkedal10} we are
		going to use, constructive. To a classical mathematician, the two predicates
		are of course logically equivalent.}
	metric space has a unique fixed point.
\end{theorem}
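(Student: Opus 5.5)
We prove Banach's theorem in the classical way, phrased constructively so that the inhabitedness hypothesis is used exactly once. Fix a complete metric space $(X,d)$, a point $x_0\in X$ (available since $X$ is inhabited), and a contractive map $f\colon X\to X$ with constant $c\in[0,1)$. Define the Picard iterates $x_{n+1}=f(x_n)$. Induction on $n$ gives $d(x_{n+1},x_n)\le c^n\, d(x_1,x_0)$.

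\emph{Step 1: the iterates form a Cauchy sequence.} For $m>n$, the triangle inequality and a geometric series give
\[ d(x_m,x_n)\;\le\;\sum_{k=n}^{m-1} c^k d(x_1,x_0)\;\le\;\frac{c^n}{1-c}\,d(x_1,x_0). \]
The right-hand side tends to $0$ as $n\to\infty$, so $(x_n)$ is Cauchy. For each $\epsilon>0$ one can even compute an explicit $N$, which keeps the argument constructive. In the ultrametric setting the bound simplifies to $d(x_m,x_n)\le c^n d(x_1,x_0)$, but the general argument is all we need.

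\emph{Step 2: the limit is a fixed point.} By completeness, $x_n\to x$ for some $x\in X$. Contractive maps are Lipschitz and hence continuous, so $f(x)=\lim f(x_n)=\lim x_{n+1}=x$. Concretely, for every $n$,
\[ d(f(x),x)\;\le\; c\,d(x,x_n)+d(x_{n+1},x), \]
and the right-hand side tends to $0$. Hence $d(f(x),x)=0$, that is, $f(x)=x$.

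\emph{Step 3: uniqueness.} Suppose $f(x)=x$ and $f(y)=y$. Then $d(x,y)=d(f(x),f(y))\le c\,d(x,y)$, so $(1-c)\,d(x,y)\le 0$. Since $c<1$, this forces $d(x,y)=0$, hence $x=y$.

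None of these steps is a real obstacle. The only delicate point is constructivity, which the footnote flags. An explicit starting point $x_0$ is needed to build the sequence, and that is why the hypothesis is inhabitedness rather than non-emptiness. The Cauchy modulus must also be given explicitly: choose $N$ with $c^N d(x_1,x_0)<\epsilon(1-c)$, which is possible because $c^n\to 0$ with a computable rate. Finally, concluding $d=0$ from $d\le c\,d$ needs no case split on whether $d>0$, so the whole proof avoids excluded middle.
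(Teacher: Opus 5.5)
Your proof is correct. It is the standard Picard-iteration argument (iterate from a chosen point, show the iterates are Cauchy via the geometric bound, pass to the limit, and get uniqueness from $d(x,y)\le c\,d(x,y)$). The paper does not prove this theorem itself but cites it from Granas and Dugundji, whose proof is this same argument. Your constructive remarks match the intent of the footnote: inhabitedness is used exactly to seed the iteration with $x_0$, and you give an explicit Cauchy modulus.
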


\paragraph{M-Categories} We consider categories that assign to any two parallel morphisms a notion of distance given by an ultrametric, compatible with composition:

\begin{definition}[M-category]\label{def:m-category}
	An \emph{M-category} is a category $\C$ where
	\begin{itemize}
		\item each hom-set $\C(A, B)$ is equipped with the structure of a complete, 1-bounded
		      ultrametric space;
		\item each composition function $-\cdot- : \C(B, C) \times \C(A, B) \to \C(A, C)$ is
		      non-expansive.
	\end{itemize}
	We denote the ultrametric on $\C(A,B)$ by $d_{A,B}$ or simply $d$.
\end{definition}

\begin{remark}
	Our present definition of M-category is slightly more permissive than that of \citet{Birkedal10}, who additionally require all hom-sets of an M-category to be inhabited.
\end{remark}

\begin{example}[$\Set^\Ty$]\label{ex:m-cat-set-ty}
	The category $\Set^\Ty$ used for the categorical modelling of \stsc carries the structure of an M-category as follows. We define the \emph{complexity} $|\tau|$ of a type $\tau\in \Ty$ inductively by
	\[ |\utype{|}=1\qquad\text{and}\qquad |\arty{\tau_1}{\tau_2}| =  |\tau_1| + |\tau_2|.  \]
	Then the metric on $\Set^\Ty(A,B)$ is given by
	\begin{equation}\label{eq:set-ty-dist} d(f,g) = \sup\, \{ 2^{-(|\tau{|}-1)} \mid \text{$\tau\in \Ty$ and $f_\tau\neq g_\tau$} \}.  \end{equation}
	In other words, for all $n\in \Nat$ we have
	\[ d(f,g)\leq 2^{-n} \qquad\text{iff}\qquad \text{$f_\tau=g_\tau$ for all types $\tau$ with $|\tau|\leq n$}. \]
	One readily verifies that $(\Set^\Ty(A,B),d)$ forms a complete, $1$-bounded ultrametric
	space. To show that composition is non-expansive, let $f,g\colon A\to B$ and
	$h,k\in B\to C$. We need to show
	\begin{equation}\label{eq:pf-loc-nonexp} d(h\cdot f,k\cdot g)\leq \max \,\{ d(f,g), d(k,h) \}. \end{equation}
	Let $n\in \Nat$ and suppose that $d(f,g),d(h,k)\leq 2^{-n}$. This means that
	\[ f_\tau=g_\tau \quad\text{and}\quad h_\tau = k_\tau \qquad \text{for all $\tau\in \Ty$ with $|\tau|\leq n$}.  \]
	Therefore
	\[ (h\cdot f)_\tau = h_\tau \cdot f_\tau = k_\tau \cdot g_\tau = (k\cdot g)_\tau   \qquad \text{for all $\tau\in \Ty$ with $|\tau|\leq n$},  \]
	so $d(h\cdot f,k\cdot g)\leq 2^{-n}$. This proves \eqref{eq:pf-loc-nonexp}.
\end{example}


To construct fixed points of functors on M-categories, we need them to respect the enrichment:

\begin{definition}
	A functor $F\colon \B\to \C$ between M-categories is
	\begin{enumerate}
		\item \emph{locally non-expansive} if each of the maps \begin{displaymath}
			      F_{X,Y} : \B(X,Y) \to \C(FX,FY) \qquad (X,Y\in \C)
		      \end{displaymath}
		      between hom-sets is non-expansive, and
		\item	\emph{locally contractive} if there exists $c \in [0, 1)$ such that
		      each map $F_{X,Y}$ is contractive with factor $c$.
	\end{enumerate}
\end{definition}

\begin{remark}
	We stress that the contractivity factor in (2) is required to be global: all the maps $F_{X,Y}$ are contractive with the same factor $c$, rather than having distinct factors for each.
\end{remark}

In an M-category, the notion of a Cauchy sequence can be categorified:

\begin{definition}
	An \emph{increasing Cauchy tower} in an M-category is a diagram of the form
	\[\begin{tikzcd}
			{A_0} & {A_1} & \dots & {A_n} & \dots
			\arrow["{f_0}", shift left, from=1-1, to=1-2]
			\arrow["{g_0}", shift left, from=1-2, to=1-1]
			\arrow["{f_1}", shift left, from=1-2, to=1-3]
			\arrow["{g_1}", shift left, from=1-3, to=1-2]
			\arrow["{f_{n - 1}}", shift left, from=1-3, to=1-4]
			\arrow["{g_{n - 1}}", shift left, from=1-4, to=1-3]
			\arrow["{f_n}", shift left, from=1-4, to=1-5]
			\arrow["{g_n}", shift left, from=1-5, to=1-4]
		\end{tikzcd}\]
	such that $g_n \cdot f_n = 1_{A_n}$ for all $n$, and $\lim_{n \to \infty} d(f_n
		\cdot g_n, \id_{A_{n + 1}}) = 0$. An M-category has \emph{inverse
		limits of Cauchy towers} if, for every increasing Cauchy tower, the
	$\omega^\op$-chain given by the $g_n$ has a limit.
\end{definition}

The following result, a straightforward modification of
\cite[Thm.~3.4]{Birkedal10}, gives a sufficient criterion for the existence of
a fixed point of a contravariant functor on an M-category:
\begin{theorem}[Fixed point theorem]\label{th:m-category-fixpoint}
	Let $\C$ be an $M$-category with a terminal object and inverse limits of increasing Cauchy towers.
	Moreover, suppose that $F\colon \C^\op \to \C$ is a locally contractive
	functor and that $\C(1,F1)$ is inhabited.
	Then $F$ has a fixed point: there exists an object $X\in \C$ with an
	isomorphism
	\[ FX \cong X. \]
	Moreover, if all hom-sets of $\C$ are inhabited, the fixed point is unique
	up to isomorphism.
\end{theorem}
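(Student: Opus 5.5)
We follow the proof of \cite[Thm.~3.4]{Birkedal10}, i.e.\ the inverse-limit construction of America–Rutten, adapted to a contravariant functor. The only places where the original uses inhabitedness of all hom-sets are the choice of an initial morphism and the uniqueness argument, and we track exactly these.

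\textbf{Step 1: the tower.} Regard $F$ as a functor on $\C^\op$ and start from the terminal object: put $A_0 = 1$ and $A_{n+1} = FA_n$. Choose $f_0\colon 1\to F1$ to be any inhabitant of $\C(1,F1)$, and let $g_0\colon F1\to 1$ be the unique map, so $g_0\cdot f_0=\id_1$ by terminality. The remaining maps come from $F$. Because $F$ is contravariant, we set $f_{n+1} = Fg_n\colon FA_n\to FA_{n+1}$ and $g_{n+1}=Ff_n$. Then
\[ g_{n+1}\cdot f_{n+1} = F(g_n\cdot f_n) = \id, \]
and
\[ d(f_{n+1}\cdot g_{n+1},\id) = d(F(f_n\cdot g_n),F\id)\le c\, d(f_n\cdot g_n,\id)\le c^{n+1}, \]
using $1$-boundedness. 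Hence we have an increasing Cauchy tower, and we take $X$ to be the inverse limit of the $g_n$, with projections $\pi_n\colon X\to A_n$.

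\textbf{Step 2: the isomorphism $FX\cong X$.} First build maps $\iota_n\colon A_n\to X$ as limits of the Cauchy sequences $f_{m-1}\cdots f_n$, interpreted through the cone. Concretely, the cone $(g\text{-composites}\cdot \ldots)$ defines $\iota_n$ with $\pi_n\iota_n=\id$, and $\iota_n\pi_n\to\id_X$. This is the standard lemma that the limit of a Cauchy tower is also a colimit of the $f_n$; its proof uses non-expansiveness of composition and completeness of hom-sets. Applying $F$ yields a tower $FX$ with maps $F\iota_n\colon FX\to FA_n=A_{n+1}$ and $F\pi_n\colon A_{n+1}\to FX$. Then
\[ \alpha=\lim_n \iota_{n+1}\cdot F\pi_n\colon FX\to X \qquad\text{and}\qquad \beta=\lim_n F\iota_n\cdot\pi_{n+1} \]
are well defined, since contractivity makes these sequences Cauchy. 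One checks $\alpha\beta=\id$ and $\beta\alpha=\id$ using $F\pi_n\cdot F\iota_n=\id$ and $\lim F(\iota_n\pi_n)=\id$. This is the most calculation-heavy step. It is also where care is needed, because the statement does not assume all hom-sets inhabited, so every morphism must be built explicitly from $f_0$, never chosen. The construction above does this.

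\textbf{Step 3: uniqueness.} Assume all hom-sets are inhabited and let $Y\cong FY$ be another fixed point. Consider the map
\[ \Phi\colon \C(X,Y)\times\C(Y,X)\to\C(X,Y)\times\C(Y,X),\qquad \Phi(h,k)=\bigl(\cong\cdot Fk\cdot\cong,\ \cong\cdot Fh\cdot\cong\bigr). \]
It is contractive with factor $c$, and its domain is complete and inhabited. By Banach's theorem it has a unique fixed point $(h,k)$. The pairs $(kh,\id_X)$ and $(\id_X,\id_X)$ are both fixed points of the analogous contraction on $\C(X,X)\times\C(X,X)$; uniqueness of that fixed point gives $kh=\id_X$, and symmetrically $hk=\id_Y$. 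Therefore $X\cong Y$.

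The main obstacle is Step 2, specifically the colimit/limit coincidence lemma. It must be redone without assuming inhabited hom-sets, which works because each $A_n$ receives maps from $1$ via the $f$'s.
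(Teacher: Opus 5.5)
Your overall route is the one the paper relies on. The paper gives no self-contained proof; it defers to \cite[Thm.~3.4]{Birkedal10}, and your Step~1 is exactly the tower of \Cref{rem:fixed-point-cons}, seeded by an arbitrary $f_0\in\C(1,F1)$.

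Step~2, however, is ill-typed as written. You correctly record $F\iota_n\colon FX\to A_{n+1}$ and $F\pi_n\colon A_{n+1}\to FX$, but then form $\iota_{n+1}\cdot F\pi_n$ and $F\iota_n\cdot\pi_{n+1}$, which do not compose. Because $F$ is contravariant, the roles flip. The $F\iota_n$ are the projections of the cone over the shifted tower, since $g_{n+1}\cdot F\iota_{n+1}=F(\iota_{n+1}\cdot f_n)=F\iota_n$, and the $F\pi_n$ are the embeddings. The correct definitions are $\alpha=\lim_n\iota_{n+1}\cdot F\iota_n$ and $\beta=\lim_n F\pi_n\cdot\pi_{n+1}$. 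The facts to use are $F\iota_n\cdot F\pi_n=F(\pi_n\cdot\iota_n)=\id_{A_{n+1}}$ and $F\pi_n\cdot F\iota_n=F(\iota_n\cdot\pi_n)\to\id_{FX}$. Your identity $F\pi_n\cdot F\iota_n=\id$ is false in general.

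Also, the $\iota_n$ are not limits of Cauchy sequences, since the composites $f_{m-1}\cdots f_n$ lie in different hom-sets. They are induced by the cone $(A_n\to A_m)_m$ of composites of $f$'s and $g$'s. Then $\iota_n\pi_n\to\id_X$ follows from $d(\iota_n\pi_n,\iota_{n+1}\pi_{n+1})\le d(f_n g_n,\id)$ together with $\pi_m\cdot\lim_n\iota_n\pi_n=\pi_m$ for all $m$. No inhabitedness issue arises because nothing beyond $f_0$ is ever chosen, not because $A_n$ receives maps from $1$.

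The more substantive gap is in Step~3. The pair $(kh,\id_X)$ is not a fixed point of the map obtained from $\Phi$ by putting $Y:=X$. That map sends it to $(\id_X,\phi\cdot F(kh)\cdot\phi^{-1})$, which equals $(kh,\id_X)$ only if $kh=\id_X$, so your claim presupposes what it should prove. What is missing is the following computation, with $\phi\colon FX\cong X$ and $\psi\colon FY\cong Y$. The fixed point $(h,k)$ of $\Phi$ satisfies
\[ k\cdot h=\phi\cdot Fh\cdot\psi^{-1}\cdot\psi\cdot Fk\cdot\phi^{-1}=\phi\cdot F(k\cdot h)\cdot\phi^{-1}, \]
using contravariance, $F(k\cdot h)=Fh\cdot Fk$. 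Hence $kh$ and $\id_X$ are both fixed points of the contraction $u\mapsto\phi\cdot Fu\cdot\phi^{-1}$ on the complete, inhabited space $\C(X,X)$. Equivalently, $(kh,kh)$ and $(\id_X,\id_X)$ are both fixed points of your pair map, not $(kh,\id_X)$. Banach then yields $kh=\id_X$, and symmetrically $hk=\psi\cdot F(hk)\cdot\psi^{-1}$ yields $hk=\id_Y$.

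With these repairs your argument is the standard one. Inhabitedness of all hom-sets is then used only to make $\C(X,Y)\times\C(Y,X)$ inhabited.
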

Of course, the condition that $\C$ has a terminal object and inverse limits of increasing Cauchy towers is satisfied whenever $\C$ is complete. This holds in all our applications.

\begin{remark}\label{rem:fixed-point-cons}
	\begin{enumerate}
		\item The proof of \cite[Thm.~3.4]{Birkedal10} provides an iterative
		      construction of a fixed point of~$F$: it is given by the inverse limit of the
		      increasing Cauchy tower
		      \begin{equation}\label{eq:fp-cons}
			      \begin{tikzcd}
				      {1} & {F1} & \dots & {F^n 1} & \dots
				      \arrow["{f_0}", shift left, from=1-1, to=1-2]
				      \arrow["{g_0}", shift left, from=1-2, to=1-1]
				      \arrow["{f_1}", shift left, from=1-2, to=1-3]
				      \arrow["{g_1}", shift left, from=1-3, to=1-2]
				      \arrow["{f_{n - 1}}", shift left, from=1-3, to=1-4]
				      \arrow["{g_{n - 1}}", shift left, from=1-4, to=1-3]
				      \arrow["{f_n}", shift left, from=1-4, to=1-5]
				      \arrow["{g_n}", shift left, from=1-5, to=1-4]
			      \end{tikzcd}\end{equation}
		      where $1$ is the terminal object,  $F^n 1=FF\cdots F1$ ($n$ applications of $F$), and the connecting morphisms are defined inductively as follows:
		      \begin{itemize}
			      \item $f_0$ is an arbitrary morphism (which exists by the assumption that $\C(1,F1)$ is inhabited), and $g_0$ is the unique morphism into the terminal object;
			      \item $f_{n+1}=Fg_n$ and $g_{n+1}=Ff_n$.
		      \end{itemize}
		\item \cite[Thm.~3.4]{Birkedal10} shows more generally that, for every
		      $M$-category $\C$ with a terminal object and inverse limits of increasing
		      Cauchy towers, every locally contractive \emph{bi}functor $F\colon
			      \C^\op\times \C\to \C$ has a fixed point $F(X,X)\cong X$.
		      \Cref{th:m-category-fixpoint} is the special case where $F$ ignores
		      the second argument. Let us  note that \emph{op.~cit.}\ works with a
		      stricter notion of M-category where all hom-sets of $\C$ are
		      required to be inhabited; however, the proof of existence of a fixed
		      point only uses that $\C(1,F(1,1))$ (or $\C(1,F1)$ in our case) is
		      inhabited, which allows one to seed the fixed point iteration with some $f_0$.
	\end{enumerate}
\end{remark}

\subsection{Locally Final Coalgebras in M-Categories}

The fixed point theorem for M-categories (\Cref{th:m-category-fixpoint}) is the
key to our construction of locally final coalgebras for behaviour bifunctors.
As a prerequisite, we recall how to iteratively construct final coalgebras for
\emph{endo}functors; see~\citet{Adamek_Milius_Moss_2025}
for a comprehensive account of the subject.

\begin{remark}[Final sequence]\label{rem:final-sequence}
	Given an ordinal number  $\alpha>0$ viewed as a poset, an \emph{$\alpha$-chain} is a diagram with scheme $\alpha$, and an \emph{$\alpha^\op$-chain} is a diagram with scheme $\alpha^\op$. For every category~$\C$ with a terminal object $1$ and limits of chains (i.e.\ $\alpha^\op$-chains for all $\alpha$), the \emph{final sequence} of an endofunctor $H\colon \C\to \C$ is the unique diagram $D\colon \mathbf{Ord}^\op\to \C$ (where  $\mathbf{Ord}$ is the large poset of all ordinals) with objects~$D_\alpha$ and connecting morphisms $D_{\alpha,\beta}\colon D_\alpha\to D_\beta$ for $\beta\leq \alpha$ defined as follows:
	\begin{itemize}
		\item $D_0=1$, $D_1=H1$, and $D_{1,0}\colon H1\to 1$ is the unique morphism.
		\item $D_{\alpha+1}=HD_\alpha$ for every ordinal $\alpha$, and $D_{\alpha+1,\beta+1} = HD_{\alpha,\beta}\colon D_{\alpha+1}\to D_{\beta+1}$ for all $\beta\leq \alpha$.
		\item If $\alpha$ is a limit ordinal, $D_\alpha$ is the limit of the
		      $\alpha^\op$-chain $(D_\beta)_{\beta<\alpha}$ with projections
		      $D_{\alpha,\beta}\colon D_\alpha\to D_\beta$.
	\end{itemize}
	The final sequence \emph{converges} (\emph{in $\alpha$ steps}) if $D_{\alpha+1,\alpha}$ is an isomorphism for some $\alpha$. In this case, \[D_{\alpha+1,\alpha}^{-1}\colon D_\alpha\xto{\cong} HD_\alpha\] is a final coalgebra for $H$~\cite[Thm.~6.5.4]{Adamek_Milius_Moss_2025}. Note
	that convergence in $\alpha$ steps implies convergence in~$\alpha'$ steps for
	every $\alpha'\geq \alpha$. Two useful sufficient criteria for convergence of the
	final sequence are:
	\begin{enumerate}
		\item The functor $H$ preserves limits of $\alpha^\op$-chains for some $\alpha$. In this case,
		      the final sequence converges in $\alpha$
		      steps~\cite[Cor.~6.5.5]{Adamek_Milius_Moss_2025}.
		\item The category $\C$ is locally presentable, and the functor $H$ preserves
		      monomorphisms and colimits of $\alpha$-chains of strong monomorphisms for some
		      regular cardinal $\alpha$~\cite[Thm.~11.4.15]{Adamek_Milius_Moss_2025}.
	\end{enumerate}
	For more information on locally presentable categories, see~\citet{AdamekRosicky1994}. For our purposes, it is enough to know that every presheaf category $\C=\Set^\B$ for a small category $\B$ is locally presentable~\cite[Thm.~1.46]{AdamekRosicky1994}. All categories arising in our applications are of this form.
\end{remark}

For our construction of locally final coalgebras, we need an extra
condition on M-categories:

\begin{definition}[Stable M-category]
	An M-category $\C$ is \emph{stable} if for every parallel pair $f,g\colon A\to
		B$ of morphisms and every jointly monic\footnote{An inhabited family $h_i\colon B\to C_i$ ($i\in I$) is \emph{jointly monic} if for  $f,g\colon A\to B$ with $h_i\cdot f = h_i\cdot g$ for all $i\in I$, one has $f=g$.} family $h_i\colon B\to C_i$ ($i\in I$), one has
	\[ d_{A,B}(f,g)=\sup_{i\in I} d_{A,C_i}(h_i\cdot f, h_i\cdot g). \]
\end{definition}
Note that the ``$\geq$'' relation holds in every M-category since composition is non-expansive.
\begin{example}[$\Set^\Ty$]\label{ex:stable-set-ty}
	The $M$-category $\Set^\Ty$ (\Cref{ex:m-cat-set-ty}) is stable. To see this, let $f,g\colon A\to B$ and let $h_i\colon B\to C_i$ ($i\in I$) be a jointly monic family in $\Set^\Ty$. We need to prove
	\begin{equation}\label{eq:pf-stable-set-ty} d(f,g)\leq \sup_{i\in I} d(h_i\cdot f, h_i\cdot g). \end{equation}
	Thus let $n\in \Nat$ and suppose that $d(h_i\cdot f,h_i\cdot g)\leq 2^{-n}$ for all $i\in I$, that is,
	\[ (h_i)_\tau \cdot f_\tau = (h_i)_\tau \cdot g_\tau \qquad \text{for all $i\in I$ and all $\tau\in \Ty$ with $|\tau|\leq n$}.  \]
	Since the family $(h_i)$ in $\Set^\Ty$ is jointly monic, so is the
	family $(h_i)_\tau$ in $\Set$ for each $\tau\in \Ty$, whence
	\[ f_\tau = g_\tau \qquad \text{for all $\tau\in \Ty$ with $|\tau|\leq n$}.  \]
	This proves $d(f,g)\leq 2^{-n}$ and therefore
	\eqref{eq:pf-stable-set-ty}.
\end{example}
With these preparations, we obtain our general existence result for locally final coalgebras:

\begin{theorem}[Existence of locally final coalgebras]\label{thm:lfc-existence}
	Let $\C$ be a stable $M$-category with a terminal object and limits of chains. Moreover, suppose that $B\colon \C^{\op} \times \C \to \C$ is a bifunctor such that
	\begin{enumerate}
		\item\label{thm:ass1} $B$ is locally contractive in the first argument, and locally non-expansive in the second argument;
		\item\label{thm:ass2} the hom-set $\C(1,B(1,1))$ is inhabited;
		\item\label{thm:ass3} for each object $X\in \C$ the final sequence of the endofunctor $B(X,-)$ converges.
	\end{enumerate}
	Then $B$ has a locally final coalgebra. If all hom-sets of $\C$ are
	inhabited, it is unique up to isomorphism.
\end{theorem}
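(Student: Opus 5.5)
The plan is to reduce the existence of a locally final coalgebra to a fixed point of a \emph{contravariant} endofunctor, and then apply \Cref{th:m-category-fixpoint}. By assumption~\ref{thm:ass3} and \Cref{rem:final-sequence}, every endofunctor $B(X,-)$ has a final coalgebra $\nu B(X,-)$, carried by the convergent final sequence $D^X$ of $B(X,-)$. Using finality, I would make $X\mapsto \nu B(X,-)$ into a functor $F\colon \C^\op\to\C$: for $f\colon X'\to X$, the object $\nu B(X,-)$ becomes a $B(X',-)$-coalgebra via $B(f,\id)\cdot \zeta_X$, and $Ff\colon \nu B(X,-)\to \nu B(X',-)$ is the unique coalgebra morphism into $\nu B(X',-)$. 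Functoriality follows from uniqueness. A fixed point $Z\cong FZ=\nu B(Z,-)$ then gives, by transport along the isomorphism, a higher-order coalgebra $Z\to B(Z,Z)$ that is final for $B(Z,-)$, i.e.\ a locally final coalgebra. Conversely, every locally final coalgebra is such a fixed point, so the uniqueness part follows from the uniqueness clause of \Cref{th:m-category-fixpoint}, since all hom-sets are inhabited.

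It remains to verify the hypotheses of \Cref{th:m-category-fixpoint}. $\C$ is complete enough, since limits of chains and a terminal object give inverse limits of Cauchy towers. For inhabitedness of $\C(1,F1)$, I would build a morphism $1\to\nu B(1,-)$ as a coalgebra map from some $B(1,-)$-coalgebra on $1$. By assumption~\ref{thm:ass2} there is $u\colon 1\to B(1,1)$, which is such a coalgebra, and its unique morphism into $\nu B(1,-)$ does the job.

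The main obstacle is local contractivity of $F$. Let $c<1$ be the contraction factor of $B$ in its first argument, and take $f,g\colon X'\to X$. I would compare $Ff$ and $Fg$ through the final sequence of $B(X',-)$. Let $\pi_\alpha\colon \nu B(X',-)\to D^{X'}_\alpha$ be the limit projections. By stability, and because convergence makes the $\pi_\alpha$ jointly monic (already $\pi_\alpha$ for $\alpha$ the convergence ordinal is an iso), it suffices to show $d(\pi_\alpha\cdot Ff,\pi_\alpha\cdot Fg)\le c\cdot d(f,g)$ for all $\alpha$, by transfinite induction. Here $\pi_\alpha\cdot Ff$ is the canonical cone map obtained by iterating the coalgebra $B(f,\id)\cdot\zeta_X$. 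The cases are as follows:
\begin{itemize}
\item $\alpha=0$ is trivial, since the target is terminal.
\item At successors, $\pi_{\alpha+1}\cdot Ff = B(\id,\pi_\alpha\cdot Ff)\cdot B(f,\id)\cdot \zeta_X = B(f,\pi_\alpha Ff)\cdot\zeta_X$, and similarly for $g$. Non-expansiveness of composition, contractivity in the first argument and non-expansiveness in the second bound the distance by $\max(c\,d(f,g), d(\pi_\alpha Ff,\pi_\alpha Fg))\le c\,d(f,g)$ by the induction hypothesis.
\item At limit ordinals, I would use stability again, since the limit projections form a jointly monic family.
\end{itemize}
Care is needed with the bifunctor's two-argument distance: I would split $B(f,h)$ as $B(f,\id)\cdot B(\id,h)$ and apply non-expansive composition. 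With the global factor $c$, this gives local contractivity of $F$.
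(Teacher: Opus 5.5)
Your proposal is correct and takes essentially the paper's approach. The two proofs share the functor $F\colon X\mapsto \nu B(X,-)$, the reduction to the fixed point theorem, the inhabitedness and uniqueness arguments, and a local-contractivity proof by transfinite induction along the final sequence: contractivity of $B$ in the first argument handles successor stages, and stability (joint monicity of the limit projections) handles limit stages. The difference is only bookkeeping. The paper builds natural transformations $\bar f\colon D^Y\to D^X$ between two final sequences and identifies $Ff$ with $\bar f_\alpha$ at a common convergence ordinal, whereas you bound the cone maps $\pi_\alpha\cdot Ff$ into the single final sequence of the target functor, which makes that identification step unnecessary.
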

\begin{remark}\label{rem:cons-lfc}
	A detailed proof is given in the appendix\ifarx{}{ of the extended arXiv
		version~\cite{Goncharov2026arXiv}}. The proof shows that the formation of the final $B(X,-)$-coalgebra for each object $X\in \C$  yields a contravariant functor
	\[ F\colon \C^\op\to \C,\qquad X \mapsto \nu B(X,-), \]
	and moreover that (1) fixed points of $F$ correspond precisely to locally final coalgebras for $B$, and (2) $F$ satisfies the assumptions of the fixed point theorem
	(\Cref{th:m-category-fixpoint}). These observations imply the existence and uniqueness statements for locally final coalgebras. Specifically, by \Cref{rem:fixed-point-cons}, a
	locally final coalgebra for $B$ can be constructed as the limit of the increasing
	Cauchy tower \eqref{eq:fp-cons}.
\end{remark}

\section{Case Studies}\label{sec:case-studies}
In this section we demonstrate how the theory developed in
\Cref{sec:denotational-models,sec:locally-final-coalgebras} instantiates to a
number of typed and untyped higher-order languages.

\subsection{Typed Combinatory Logic}
First, we revisit our running example, the language \stsc introduced in
\Cref{sec:ho-gsos}. We have constructed the locally final coalgebra underlying
its bialgebraic denotational model in \Cref{ex:stsc}. We now show that this ad hoc
construction is captured by the general approach of
\Cref{sec:locally-final-coalgebras} using M-categories. Recall that $\Set^\Ty$
is a stable M-category (\Cref{ex:m-cat-set-ty,ex:stable-set-ty}). Additionally,
we observe that the behaviour bifunctor $B\colon (\Set^\Ty)^{\op} \times \Set^\Ty \to \Set^\Ty$ for \stsc given by \eqref{eq:beh} satisfies all the conditions required for existence of a locally final coalgebra:

\begin{lemma}\label{lem:thm-cond-set-ty}
	\begin{enumerate}
		\item\label{thm:ass1-set-ty} $B$ is locally contractive in the first and locally non-expansive in the second argument.
		\item\label{thm:ass2-set-ty} The hom-set $\Set^\Ty(1,B(1,1))$ is inhabited.
		\item\label{thm:ass3-set-ty} For each $X\in \Set^\Ty$ the final sequence of the endofunctor $B(X,-)$ converges.
	\end{enumerate}
\end{lemma}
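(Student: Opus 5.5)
We check the three conditions one at a time, working sortwise and using the description of the metric from \Cref{ex:m-cat-set-ty}: for each $n$, $d(f,g)\leq 2^{-n}$ iff $f_\tau=g_\tau$ for all $\tau$ with $|\tau|\leq n$.

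For \ref{thm:ass1-set-ty}, let $f,f'\colon X'\to X$ and $g,g'\colon Y\to Y'$. The component of $B(f,g)$ at $\utype$ is $g_\utype+\id$. At $\arty{\tau_1}{\tau_2}$ it is $g_{\arty{\tau_1}{\tau_2}}+ g_{\tau_2}\cdot(-)\cdot f_{\tau_1}$. The $\utype$-component does not involve $f$, and the arrow-type components involve $f$ only at $\tau_1$. Since $|\tau_1|<|\arty{\tau_1}{\tau_2}|$, we get the following: if $f_\sigma=f'_\sigma$ for all $|\sigma|\leq n$, then $B(f,g)_\tau=B(f',g)_\tau$ for all $|\tau|\leq n+1$. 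Hence
\[ d(B(f,g),B(f',g))\leq \tfrac12\, d(f,f'), \]
which is contractivity in the first argument with the global factor $c=\tfrac12$. Similarly, $B(f,g)_\tau$ involves $g$ only at $\tau$ itself and at $\tau_2$, where $|\tau_2|\leq|\tau|$. So agreement of $g,g'$ up to complexity $n$ gives agreement of $B(f,g),B(f,g')$ up to complexity $n$, which is non-expansiveness in the second argument. Combining the two via the strong triangle inequality gives a joint bound, if one is needed. The one point to watch is the degenerate bound at complexity $0$: distance $1$ is always allowed, so the case $n=0$ is trivially fine.

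For \ref{thm:ass2-set-ty}, we exhibit an element of $\Set^\Ty(1,B(1,1))$ directly. Take the map selecting $\inr(*)$ at $\utype$, and at arrow types the left injection $\inl(*)\in 1+1^1$. Any choice works, since every component $B_\tau(1,1)$ is nonempty.

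For \ref{thm:ass3-set-ty}, we would use criterion (1) of \Cref{rem:final-sequence}. The endofunctor $B(X,-)$ acts sortwise: it sends $Y$ to $Y_\utype+1$ and to $Y_{\arty{\tau_1}{\tau_2}}+Y_{\tau_2}^{X_{\tau_1}}$. Each component is a coproduct of a projection and an exponential, and exponentials $(-)^{X_{\tau_1}}$ preserve all limits. In $\Set$, a binary coproduct $A+C$ with $C$ fixed preserves limits of $\omega^\op$-chains, and more generally $(-)+ W$ preserves limits of connected diagrams. Since limits in $\Set^\Ty$ are computed sortwise, $B(X,-)$ preserves limits of $\omega^\op$-chains, and the final sequence converges in $\omega$ steps.

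The main obstacle is the middle step: the summands in $Y_{\tau_2}^{X_{\tau_1}}$ themselves depend on $Y$, so $B(X,-)$ is not of the form $(-)+W$. The cleanest route is to note that functors of the form $\coprod_i \prod_{k} Y_{\sigma_{ik}}$ (here $Y_\tau + \prod_{X_{\tau_1}}Y_{\tau_2}$) preserve limits of $\omega^\op$-chains. This holds because finite coproducts commute with connected limits in $\Set$ and products commute with all limits. Alternatively, one could invoke criterion (2): presheaf categories are locally presentable, and $B(X,-)$ preserves monos and $\lambda$-filtered colimits for $\lambda>|X_{\tau_1}|$ at all $\tau_1$.
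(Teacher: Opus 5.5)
Your proof is correct. Parts (1) and (2) are the paper's argument. The arrow-type component of $B(f,g)$ involves $f$ only at $\tau_1$, with $|\tau_1|\le|\tau|-1$, which gives the global factor $\tfrac12$. It involves $g$ only at $\tau$ and $\tau_2$, which gives non-expansiveness. Inhabitedness follows from $B_\tau(1,1)\neq\emptyset$.

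For (3) you take a genuinely different route. You use criterion (1) of \Cref{rem:final-sequence}: $B(X,-)$ preserves limits of $\omega^{\op}$-chains. This holds because limits in $\Set^\Ty$ are sortwise, $(-)^{X_{\tau_1}}$ is a right adjoint, and the two-variable coproduct $\Set\times\Set\to\Set$ preserves connected limits. That last fact, not the one-variable statement about $(-)+W$ you start with, is what is needed, and you do arrive at it. So the final sequence converges already at $\omega$.

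The paper instead uses criterion (2). $\Set^\Ty$ is locally presentable, and $B(X,-)$ preserves monomorphisms and colimits of $\alpha$-chains for a regular cardinal $\alpha$ exceeding all $|X_\tau|$. This is because chain colimits commute with coproducts and with products of arity $<\alpha$.

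Your version is more elementary and gives the sharper information that $\omega$ steps suffice. The paper's version buys uniformity across the case studies. In the proofs of \Cref{thm:xptcl-lfc} and \Cref{thm:xntcl-lfc} it is reused verbatim, by merely noting that $\Df$ and $\Powf$ preserve monos and colimits of $\omega$-chains. Your argument would not carry over there, because neither $\Df$ nor $\Powf$ preserves limits of $\omega^{\op}$-chains. For example, take the chain $\{0,\dots,n\}$ with truncation maps $k\mapsto\min(k,n)$. The family of full sets, or of truncated geometric distributions, is compatible along this chain but has no finitely supported preimage in the limit.
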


We thus obtain an adequate denotational semantics for \stsc from our general theory; recall that here behavioural equivalence is strong applicative bisimilarity (\Cref{ex:stsc-adequate}):
\begin{theorem}[Denotational $\rho$-model for \stsc]\label{thm:xtcl-lfc}
	The behaviour bifunctor $B$ for \stsc has a locally final coalgebra, unique up to isomorphism. It extends to a denotational $\rho$-model that is adequate for strong applicative bisimilarity.
\end{theorem}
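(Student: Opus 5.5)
The plan is to assemble \Cref{thm:xtcl-lfc} from the general results already established, rather than re-doing any construction by hand. Existence and uniqueness come from \Cref{thm:lfc-existence}; the denotational model comes from \Cref{th:denotational-model}; adequacy comes from \Cref{corr:adequacy} together with \Cref{ex:stsc-adequate}.

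For existence, I would check the hypotheses of \Cref{thm:lfc-existence} for $\C=\Set^\Ty$. By \Cref{ex:m-cat-set-ty,ex:stable-set-ty}, $\Set^\Ty$ is a stable M-category. Being a presheaf category, it is complete, so it has a terminal object and limits of chains. The three conditions on $B$ are exactly the three items of \Cref{lem:thm-cond-set-ty}, which I take as given. Hence $B$ has a locally final coalgebra.

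For uniqueness up to isomorphism, \Cref{thm:lfc-existence} requires all hom-sets of $\Set^\Ty$ to be inhabited, and this is the one point needing care. It fails as stated: $\Set^\Ty(A,\emptyset)$ is empty when $A\neq\emptyset$, so the clause cannot be applied verbatim. I would instead use \Cref{rem:cons-lfc}: locally final coalgebras correspond to fixed points of $F=\nu B(-,-)\colon \C^\op\to\C$. Every value of $F$ is inhabited in each sort. At sort $\utype$, it contains the infinite path of $\nu\bar B_1$; at an arrow sort, it contains the infinite path as well. So $F$ factors through the full subcategory of sortwise-inhabited objects. That subcategory is still a stable M-category, it has the terminal object and the inverse limits of Cauchy towers (computed as in $\Set^\Ty$, and inhabited because the connecting maps are split), and all of its hom-sets are inhabited. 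Applying the uniqueness part of \Cref{th:m-category-fixpoint} there shows that any two fixed points of $F$ are isomorphic. Via \Cref{rem:cons-lfc}, any two locally final coalgebras are therefore isomorphic as higher-order coalgebras. I would verify that the correspondence in \Cref{rem:cons-lfc} transports isomorphisms of fixed points to coalgebra isomorphisms. I regard this restriction argument as the main, if modest, obstacle.

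For the denotational model, the law $\rho$ of \stsc is relatively flat by \Cref{ex:stsc-flat}. So \Cref{th:denotational-model} extends the locally final coalgebra $(Z,z)$ to a denotational $\rho$-model $(Z,a_\rho,z)$. For adequacy, we also need \Cref{ass:ho-gsos}: each $B(X,-)$ must have a final coalgebra. This follows from item (3) of \Cref{lem:thm-cond-set-ty} and \Cref{rem:final-sequence}. \Cref{corr:adequacy} then gives that $(Z,a_\rho)$ is adequate, meaning $\den{p}=\den{q}$ implies $\beh(p)=\beh(q)$. Finally, by \Cref{ex:stsc-adequate} and the subsequent remark on weak-pullback preservation, behavioural equivalence for $B(\Tr,-)$ is strong applicative bisimilarity. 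This yields the stated adequacy.
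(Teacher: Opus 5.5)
Your proposal is correct and follows the paper's outline. Existence comes from \Cref{thm:lfc-existence} with \Cref{lem:thm-cond-set-ty}. The model comes from relative flatness (\Cref{ex:stsc-flat}) and \Cref{th:denotational-model}, and adequacy from \Cref{corr:adequacy}; your explicit check of \Cref{ass:ho-gsos} is a detail the paper leaves implicit.

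Your uniqueness step takes a different route. The paper restricts $B$ itself to a bifunctor $B_{+}$ on $\Set_{+}^\Ty$, which has inhabited hom-sets, and re-invokes \Cref{thm:lfc-existence} as a black box. You instead restrict the derived functor $F\colon X\mapsto \nu B(X,-)$. You observe that all its values, and hence all its fixed points, are sortwise inhabited, and then apply the uniqueness part of \Cref{th:m-category-fixpoint} directly in the inhabited subcategory.

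The paper's version is shorter. Strictly speaking, though, $\Set_{+}^\Ty$ does not have all limits of chains: a descending chain of inhabited subsets can have empty intersection. So its appeal to \Cref{thm:lfc-existence} tacitly uses the fact that the limits actually needed in that proof (final sequences of $B(X,-)$ and Cauchy towers) are inhabited. Your version avoids this issue. Local contractivity of $F$ is established in $\Set^\Ty$, where all limits of chains exist, and the fixed point theorem only needs inverse limits of increasing Cauchy towers. Those stay inhabited because the $g_n$ are split by the $f_n$.

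Your strengthening to an isomorphism of higher-order coalgebras also goes through, but only because the uniqueness argument yields an isomorphism compatible with the fixed-point structures. Let $i\colon FZ\to Z$ and $i'\colon FZ'\to Z'$ be the coalgebra isomorphisms; the argument yields $f\colon Z\to Z'$ with $f\cdot i=i'\cdot F(f^{-1})$. Unfolding the definition of $F(f^{-1})$ then gives $B(\id,f)\cdot z=B(f,\id)\cdot z'\cdot f$. A bare isomorphism of carriers would not suffice for this.
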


\begin{proof}
	Existence follows from
	\Cref{thm:lfc-existence} together with \Cref{lem:thm-cond-set-ty}. Regarding uniqueness, we note that not all hom-sets
	of  $\Set^\Ty$ are inhabited, so the uniqueness statement of
	\Cref{thm:lfc-existence} does not apply directly. However, we can restrict $B$ to
	a bifunctor $B_{+}\colon (\Set_{+}^\Ty)^\op\times \Set_{+}^\Ty \to
		\Set_{+}^\Ty$, where $\Set_{+}\hookrightarrow \Set$ is the full subcategory of
	\emph{inhabited} sets. The category $\Set_{+}^\Ty$ has inhabited hom-sets, and $B_{+}$ satisfies all the conditions of \Cref{thm:lfc-existence} since $B$ does. It follows that the bifunctor~$B_{+}$ has a unique locally final coalgebra up to isomorphism, which implies that the same holds for $B$. The second part of the theorem follows from \Cref{ex:stsc-flat} and \Cref{corr:adequacy}.
\end{proof}

The explicit description of the denotation map $\sem{p}$ is given in \Cref{ex:stsc}.

It is instructive to see how the concrete description of the locally final coalgebra $z\colon Z\xto{\cong} B(Z,Z)$ given in \Cref{ex:stsc} arises from the general iterative construction given by \Cref{rem:cons-lfc}. An easy induction shows that the functor $FX=\nu B(X,-)$ used in the fixed point iteration satisfies:
\begin{lemma}\label{lem:fixed-point-iteration-xTCL}
	$(F^n1)_\tau = Z_\tau$ for all $\tau\in \Ty$ and all $n\geq |\tau|$. \end{lemma}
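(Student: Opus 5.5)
We argue by induction on $n$, proving simultaneously that $(F^n1)_\tau = Z_\tau$ for all $\tau$ with $|\tau|\le n$. Here ``$=$'' means equality up to the canonical identification of final coalgebras: $(F^n1)_\tau$ is the $\tau$-component of $\nu B(F^{n-1}1,-)$, and $Z_\tau$ is the carrier described in \Cref{ex:stsc}.

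The first step is to compute $\nu B(X,-)$ componentwise for an arbitrary $X\in\Set^\Ty$. Since $B_\tau(X,Y)$ depends on $Y$ only through $Y_\tau$ (plus a summand independent of $Y_\tau$), the endofunctor $B(X,-)$ decomposes sortwise. Hence, exactly as in \Cref{ex:stsc}, the final coalgebra is computed sort by sort:
\[ (\nu B(X,-))_\utype = \nu\barB_1,\qquad (\nu B(X,-))_{\arty{\tau_1}{\tau_2}} = \nu\barB_{(\nu B(X,-))_{\tau_2}^{X_{\tau_1}}}. \]
This is itself a recursion on types: the $\arty{\tau_1}{\tau_2}$-component uses the already determined $\tau_2$-component. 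The justification is the same finality argument as in \Cref{ex:stsc}, with $Z$ replaced by $X$ in the contravariant position. Consequently, $(FX)_\tau$ depends only on the components $X_\sigma$ for types $\sigma$ occurring as argument types inside $\tau$. Concretely, $(FX)_{\arty{\tau_1}{\tau_2}}$ depends on $X_{\tau_1}$ together with whatever $(FX)_{\tau_2}$ depends on. Every such $\sigma$ satisfies $|\sigma| < |\tau|$, since $|\tau_1|<|\tau_1|+|\tau_2|$ and $|\tau_2|<|\tau|$.

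For the induction itself, the base case is $n=1$. The only type with $|\tau|\le 1$ is $\utype$, and $(F1)_\utype=\nu\barB_1=Z_\utype$ regardless of the argument. For the step, assume $(F^{n}1)_\sigma = Z_\sigma$ for all $|\sigma|\le n$, and let $|\tau|\le n+1$. Then $(F^{n+1}1)_\tau = (F(F^n1))_\tau$ depends only on components $(F^n1)_\sigma$ with $|\sigma|\le n$. These agree with $Z_\sigma$ by hypothesis. By the componentwise formula, $(F(F^n1))_\tau = (FZ)_\tau$, and $(FZ)_\tau = Z_\tau$ because $Z\cong FZ$ by local finality, with the same recursive description. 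A nested induction on the structure of $\tau$ shows that the $\tau_2$-components coincide before treating $\arty{\tau_1}{\tau_2}$. Finally, if $n \ge |\tau|$ but $n > |\tau|$, the claim follows from the case $n' = |\tau|$, since the hypothesis for $n$ covers all smaller complexities.

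The main obstacle is making the phrase ``depends only on'' precise. $F$ is a functor, and its objects are only determined up to isomorphism, so I would state an explicit formula for $(FX)_\tau$ as a set built from the $X_\sigma$. I would also check that the connecting maps $f_n, g_n$ of the tower \eqref{eq:fp-cons} restrict to identities on these components. That check is needed so that the limit of the tower indeed has $\tau$-component $Z_\tau$, though it is not strictly required for the lemma as stated.
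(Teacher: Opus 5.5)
Your proof is correct and is essentially the paper's argument: an induction on $n$ driven by the sortwise description $(FX)_\utype = \nu\bar{B}_1$, $(FX)_{\arty{\tau_1}{\tau_2}} = \nu\bar{B}_{(FX)_{\tau_2}^{X_{\tau_1}}}$, together with a nested induction on $\tau$. The only difference is bookkeeping. You route the step through $(FZ)_\tau = Z_\tau$ and the observation that $(FX)_\tau$ involves only components $X_\sigma$ with $|\sigma|<|\tau|$, whereas the paper directly substitutes the outer hypothesis $(F^n1)_{\tau_1}=Z_{\tau_1}$ and the inner hypothesis $(F^{n+1}1)_{\tau_2}=Z_{\tau_2}$. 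Your closing remark about $n>|\tau|$ is redundant, since your induction statement already quantifies over all $\tau$ with $|\tau|\le n$.
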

This immediately implies that $Z$ is the inverse limit of the increasing
Cauchy tower $(F^n1)$.

\subsection{Probabilistic Typed Combinatory Logic}\label{sec:xPTCL}
Thanks to the modularity of the abstract bialgebraic setting, it is not difficult to
move from the deterministic language \stsc to languages with additional
computational effects. This amounts to simply combining the behaviour bifunctor
of \stsc with a suitable monad. For illustration, we consider a probabilistic version \xPTCL of \stsc. We extend the
signature $\Sigma$ of \stsc to the signature $\Sigma^\prob$ that additionally
contains the constructor $\oplus_\tau\colon \tau\times \tau\to \tau$ for each $\tau\in \Ty$. Intuitively, the program $p\oplus_\tau q$ flips a fair coin and then executes $p$ or $q$ with probability $\frac{1}{2}$ each. Since this makes $\beta$-reductions probabilistic, we extend the behaviour bifunctor $B$ given by \eqref{eq:beh} to
\begin{equation*}
	B^\prob\c (\Set^{\Ty})^{\opp} \times \Set^{\Ty} \to \Set^{\Ty}, \quad B_\utype^\prob(X,Y)=\Df(Y_\utype) +1,\quad B^\prob_{\arty{\tau_1}{\tau_2}}(X,Y) = \Df(Y_{\arty{\tau_1}{\tau_2}}) + Y_{\tau_2}^{X_{\tau_1}}.
\end{equation*}
Here $\Df\colon \Set\to \Set$ is the finite distribution functor sending a set $X$ to the set $\Df X$ of all finite probability distributions on $X$ (i.e.\ functions $\phi\colon X\to [0,1]$ such that $\phi(x)=0$ for all but finitely many $x\in X$, and $\sum_{x\in X} \phi(x)=1$). It is often convenient to represent a distribution $\phi\in \Df X$ as a formal sum $\sum_{i\in I} r_i\cdot x_i$ where $I$ is a set, $x_i\in X$ and $r_i\in [0,1]$ for all $i\in I$, and $\sum_{x_i=x} r_i = \phi(x)$ for all $x\in X$. The operational semantics of \xPTCL is given by a higher-order coalgebra
\begin{equation}\label{eq:opmodel-xPTCL} \gamma^\prob\colon \mS^\prob \to B^\prob(\mS^\prob,\mS^\prob). \end{equation}
The transitions for combinators are the same labelled transitions as for \stsc (\Cref{fig:skirules}), and the transitions for $\oplus$ and application are given by \Cref{fig:probskirules}. The superscripts are transition probabilities $r\in [0,1]$,  and the rules are additive; thus, the rules for application overall express that
\[ \text{if$\quad$ $\gamma^\prob_{\arty{\tau_1}{\tau_2}}(p)= \sum_{i\in I} r_i \cdot p_i$ $\quad$then$\quad$ $\gamma_{\tau_2}^\prob(p\, q) = \sum_{i\in I} r_i\cdot (p_i\, q)$}.\]
The coalgebra \eqref{eq:opmodel-xPTCL} is the operational model of the higher-order GSOS law
\[ \rho^\prob_{X,Y}\colon \Sigma^\prob(X\times B^\prob(X,Y)) \to
	B^\prob(X,(\Sigma^\prob)^{\star}(X+Y)) \]
that, as usual, encodes the operational rules of \Cref{fig:probskirules} into a family of functions; for instance, on the constructors $\oplus$ and $\mathsf{app}$ the law is given as follows (an underscore ``$\_$'' means ``don't care''):
\begin{align*}
	\rho_{X,Y}^\prob((p,\_)\oplus_{\tau} (q,\_))                                          & = \frac{1}{2}\cdot p + \frac{1}{2}\cdot q                     \\
	\rho^\prob_{X,Y}(\mathsf{app}_{\tau_1,\tau_2}((p,\sum_{i\in I} r_i \cdot p_i),(q,\_)) & = \sum_{i\in I} r_i\cdot \mathsf{app}_{\tau_1,\tau_2}(p_i, q) \\
	\rho^\prob_{X,Y}(\mathsf{app}_{\tau_1,\tau_2}((p,f),(q,\_))                           & = f(q).
\end{align*}
As before, this law is relatively flat on $J = \{0 < 1\}$ if we have
$\Sigma^\prob_0$ contain application only, and $\Sigma^\prob_1$ all other
constructors. It is easy to see that the extended behaviour bifunctor $B^\prob$ still satisfies all conditions of \Cref{thm:lfc-existence}, and so we get an adequate denotational semantics for \xPTCL:
\begin{theorem}[Denotational model for \xPTCL]\label{thm:xptcl-lfc}
	The behaviour bifunctor $B^\prob$ for \xPTCL has a locally final coalgebra, unique up to isomorphism. It extends to a denotational model that is adequate for strong probabilistic applicative bisimilarity.
\end{theorem}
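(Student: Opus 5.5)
We will mirror the proof of \Cref{thm:xtcl-lfc}: existence from \Cref{thm:lfc-existence}, uniqueness by passing to inhabited sorts, and the denotational model plus adequacy from relative flatness and \Cref{corr:adequacy}. The only new work is to check the three hypotheses of \Cref{thm:lfc-existence} for $B^\prob$ on the stable M-category $\Set^\Ty$ (\Cref{ex:m-cat-set-ty,ex:stable-set-ty}), adapting \Cref{lem:thm-cond-set-ty}.

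\emph{Condition (1).} Let $f,g\colon X'\to X$ agree on all sorts $\tau$ with $|\tau|\le n$. Consider $B^\prob_{\arty{\tau_1}{\tau_2}}(f,\id)$ for a type with $|\arty{\tau_1}{\tau_2}|\le n+1$. It acts only on $X_{\tau_1}$, via precomposition $Y_{\tau_2}^{X_{\tau_1}}\to Y_{\tau_2}^{X'_{\tau_1}}$, and $|\tau_1|\le n$ because $|\tau_2|\ge 1$. On sort $\utype$ it is the identity. Hence $d(B^\prob(f,\id),B^\prob(g,\id))\le \tfrac12 d(f,g)$, which is local contractivity with the global factor $\tfrac12$. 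In the second argument, $B^\prob_\tau(\id,h)$ depends only on $h_\tau$ and $h_{\tau_2}$ with $|\tau_2|\le|\tau|$, since $\Df$ is a functor. So agreement up to complexity $n$ is preserved, which gives non-expansiveness. Functoriality of $B^\prob$ in each argument separately reduces the joint statement to these two.

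\emph{Condition (2).} We choose $* \in 1$ at sort $\utype$ and $\inr$ of the unique map $1\to 1$ at arrow sorts. This gives a morphism $1\to B^\prob(1,1)$.

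\emph{Condition (3).} This is the step needing the most care. For fixed $X$, the endofunctor $B^\prob(X,-)$ is a product over sorts of set functors. Each factor is built from $\Df$, coproducts and exponentials with fixed exponent $X_{\tau_1}$. It therefore preserves monomorphisms, and it preserves colimits of $\alpha$-chains of monos for a regular cardinal $\alpha$ exceeding all $|X_{\tau_1}|$ and $\aleph_0$. The reason is that $\Df$ is finitary and $(-)^{S}$ preserves $\alpha$-filtered colimits of monos when $|S|<\alpha$. Since $\Set^\Ty$ is a presheaf category, criterion (2) of \Cref{rem:final-sequence} yields convergence.

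Alternatively, one can argue directly as in \Cref{ex:stsc}, by induction on types. On sort $\tau$, the functor has the form $Y_\tau\mapsto \Df(Y_\tau)+O_\tau$ with $O_\tau$ determined by lower sorts. That endofunctor of $\Set$ is accessible, hence has a final coalgebra reached by its final sequence.

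\emph{Uniqueness and conclusion.} As in \Cref{thm:xtcl-lfc}, we restrict to $\Set_+^\Ty$. Inhabitedness is preserved because $\Df$ of an inhabited set is inhabited, and exponentials and coproducts of inhabited sets are inhabited. In $\Set_+^\Ty$ all hom-sets are inhabited, so the uniqueness clause of \Cref{thm:lfc-existence} applies, and it transfers back to $B^\prob$.

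Finally, $\rho^\prob$ is relatively flat with $J=\{0<1\}$. \Cref{th:denotational-model} then extends the locally final coalgebra to a denotational model, and \Cref{corr:adequacy} gives adequacy for behavioural equivalence of the final $B^\prob(\mS^\prob,-)$-coalgebra. Since $\Df$ preserves weak pullbacks, this behavioural equivalence is strong coalgebraic (probabilistic applicative) bisimilarity.
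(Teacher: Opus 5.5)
Your proposal is correct and follows essentially the same route as the paper. You verify the three hypotheses of \Cref{thm:lfc-existence} for $B^\prob$ by adapting \Cref{lem:thm-cond-set-ty}, where the paper just cites that lemma plus the facts that $\Df$ is locally non-expansive, preserves inhabitedness, and preserves monomorphisms and colimits of $\omega$-chains; you then get uniqueness by restricting to $\Set_{+}^\Ty$ and conclude via relative flatness and \Cref{corr:adequacy}. The one weak point is your ``alternative'' sortwise argument for condition (3): it yields only existence of a final $B^\prob(X,-)$-coalgebra, not convergence of the final sequence of $B^\prob(X,-)$ on $\Set^\Ty$ (along that sequence the summand $Y_{\tau_2}^{X_{\tau_1}}$ is taken at the current stage, not at a fixed $O_\tau$), and convergence is what the proof of \Cref{thm:lfc-existence} uses; your primary argument via criterion (2) of \Cref{rem:final-sequence} already covers this, so nothing is missing.
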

Here \emph{strong probabilistic applicative bisimilarity} is behavioural equivalence on the coalgebra \eqref{eq:opmodel-xPTCL} of programs. It is given by the greatest $\Ty$-indexed equivalence relation $\sim^\prob$ on the $\Ty$-sorted set $\mS^\prob$ of programs such that for every $\tau\in \Ty$ and  $p\sim^\prob_\tau q$, one of the following three conditions holds:
\begin{itemize}
	\item $\gamma^\prob(p),\gamma^\prob(q)\in \Df(\mS^\prob_\tau)$ and $\sum_{t\in E} \gamma^\prob(p)(t)=\sum_{t\in E} \gamma^\prob(q)(t)$ for every equivalence class $E$ of $\sim^\prob_\tau$;
	\item $\tau=\utype$ and $\gamma^\prob(p)=\gamma^\prob(q)=*$;
	\item $\tau=\arty{\tau_1}{\tau_2}$, $\gamma^\prob(p),\gamma^\prob(q)\colon \mS^\prob{\tau_1}\to \mS^\prob{\tau_2}$, and $\gamma^\prob(p)(t)\sim^\prob_{\tau_2}\gamma^\prob(q)(t)$ for every $t\in \mS_{\tau_1}^\prob$.
\end{itemize}
Thus, bisimilar programs $\beta$-reduce into each bisimilarity class with the same probability, corresponding to the standard notion of bisimilarity on discrete probabilistic transition systems~\cite{LarsenSkou1991,vr99}.

\begin{figure*}[t]
	\begin{gather*}
		\inference{}{p \oplus_\tau q  \to^{1/2} p}
		\quad
		\inference{}{p \oplus_\tau q  \to^{1/2} q}
		\quad
		\inference{p\to^r p'}{p\, q \to^{r} p'\, q}
		\quad
		\inference{p\xto{q} p'}{p\, q \to^{r} p'}
	\end{gather*}
	\caption{(Call-by-name) operational semantics of \xPTCL.}
	\label{fig:probskirules}
\end{figure*}

\subsection{Untyped Combinatory Logic}
\label{subsec:xcl}

We next show how to apply our framework to \emph{untyped} higher-order languages. For that purpose we consider the language \xcl, a.k.a.~
$\SKI_u$~\cite{Goncharov23}, an untyped combinatory logic  that is computationally equivalent to the untyped call-by-name $\lambda$-calculus. From a categorical perspective, the crucial difference between \stsc
and \xcl is that the latter normally lives in $\Set$ (and not $\Set^{\Ty}$), a
mathematical universe that lacks the inherent structural properties of types to
allow for the existence of suitable denotational domains. To solve this issue,
we model the operational semantics of \xcl in a \emph{guarded} fashion in the
\emph{topos of trees}\,\cite{Birkedal12}. The latter is the category
$\Set^{\omega^\op}$ of $\omega^\op$-chains in $\Set$, or equivalently (contravariant) presheaves on the poset $\omega=\{0,1,2,\ldots\}$, and is an important instance of a category that admits locally final
coalgebras for many bifunctors.

\paragraph{The Topos of Trees} We present an object $X\in \topos$ as an indexed family  $(X_{n})_{n \in \omega}$ of sets equipped with \emph{restriction maps} $X_{n,m} : X_{n} \to X_{m}$ for $n\geq m$. Morphisms of $\topos$ are natural transformations. Being a presheaf topos,
$\topos$ is complete and cartesian closed. The exponential~$Y^{X}$ is given at stage $n$ by the
set $(Y^X)_n$ of all $n+1$-tuples  $(f_{i}\colon X_i\to Y_i)_{0\leq
			i\leq n}$ of maps that commute with the restriction maps \cite[Sec.~2]{Birkedal12}.
For any functor $F$ with codomain $\topos$, we put $F_n X=(FX)_n$.

The key construction in $\topos$ that enables the existence of locally final
coalgebras comes in the form of the \emph{later modality}~\cite[Sec.~2.1]{Birkedal12}, defined by
\[
	\later :
	\topos \to \topos,\qquad \later_{0} X  = 1,\qquad \later_{n+1} X = X_{n}.
\]
The later modality is used to \emph{guard} inputs and
outputs, which is essentially a form of \emph{step-indexing}, thus enforcing
contractivity in the behaviour of our language.
Analogous to $\Set^\Ty$, the topos of trees forms a stable M-category with respect to the ultrametric on $\topos(X,Y)$ given as follows:
\begin{equation}
	d_{X,Y}(\mu,\nu) = \sup\, \{ 2^{-n}\mid n\in \omega\text{ and } \mu_n \neq \nu_n \}.
\end{equation}

\paragraph{The Language \xcl}

The syntax of \xcl is generated by the following grammar:
\begin{equation}
	\label{eq:syntaxxcl}
	p, q \Coloneqq S \mid K \mid I \mid S'(p) \mid K'(p) \mid S''(p,q) \mid p\, q.
\end{equation}

\vspace{0.2cm}
The operational rules of \xcl are simply an untyped version of the rules of \stsc:
\begin{gather*}
	\inference{}{S\xto{t}S'(t)}
	\qquad
	\inference{}{S'(p)\xto{t}S''(p,t)} \qquad
	\inference{}{S''(p,q)\xto{t}(p\app{\tau_{1}}{\arty{\tau_{2}}{\tau_{3}}} t)\app{\tau_{2}}{\tau_{3}} (q\app{\tau_{1}}{\tau_{2}} t)} \\[1ex]
	\inference{}{K\xto{t}K'(t)}
	\qquad
	\inference{}{K'(p)\xto{t}p}
	\qquad
	\inference{}{I\xto{t}t}
	\qquad
	\inference{p\to p'}{p \app{\tau_{1}}{\tau_{2}} q\to p' \app{\tau_{1}}{\tau_{2}} q}
	\qquad
	\inference{p\xto{q} p'}{p \app{\tau_{1}}{\tau_{2}} q\to p'}
\end{gather*}

\paragraph{Categorical Modelling of \xcl}

The grammar \eqref{eq:syntaxxcl} corresponds to the polynomial endofunctor
\begin{equation*}
	\Sigma^{\untyped,0}\colon
	\Set \to \Set, \qquad {\Sigma^{\untyped,0}(X)} = \underbrace{1}_{S}+\underbrace{1}_{K}+\underbrace{1}_{I} + \underbrace{X}_{S'}+\underbrace{X}_{K'} +\, \underbrace{X\times X}_{S''} \,+\,  \underbrace{X\times X}_{\mathsf{app}}.
\end{equation*}

The initial algebra $\mu\Sigma^{\untyped,0}$ is given by the set of
closed \xcl-terms. However, recall that we intend to model the semantics of
\xcl in $\topos$ and not in $\Set$. Therefore, we use the endofunctor
\begin{equation}
	\label{eq:sigma}
	\Sigma^\untyped : \topos \to \topos, \qquad \Sigma^\untyped_{n} X = \Sigma^{\untyped,0} X_{n}.
\end{equation}
Its initial algebra $\mS^\untyped$ is the constant
presheaf given by $\mS^\untyped_{n} = \mS^{\untyped,0}$ for every $n \in
	\omega$.

The standard modelling of \xcl as a higher-order GSOS law in
$\Set$\,\cite[Sec.~3.2]{Goncharov23} involves the behaviour bifunctor
$B^{\untyped, 0}(X,Y) = Y + Y^{X}$, with the two summands $Y$ and $Y^X$ representing a $\beta$-reduction and the behaviour of combinators, respectively. Unlike for $\Sigma^{\untyped,0}$, it is not possible to simply
extend $B^{\untyped, 0}$ to $\topos$ in a pointwise manner, as this would not yield a
bifunctor that is locally contractive in the contravariant argument, as required for the construction of a locally final coalgebra (\Cref{thm:xcl-lfc}). Instead, we make use of the later modality and work with the bifunctor
\begin{equation}
	\label{eq:bifunctorxcl}
	B^\untyped \c (\topos)^{\opp} \times \topos \to \topos,\qquad B^\untyped(X,Y) = \later Y + \later Y^{\later X}.
\end{equation}

\begin{remark}
	The idea is that the later modality guards negative occurrences/inputs ($\later
		X$) and positive occurrences/outputs ($\later Y$), making $B^\untyped(X,Y)$
	locally contractive in both arguments.
	Since \Cref{thm:lfc-existence} only requires local non-expansiveness in the second argument, guarding $Y$ is technically not required, but rather a sensible design choice:	guarding the second occurrence of $Y$ reflects that we want to model functions whose outputs are available no earlier than their
	inputs. Guarding the first occurrence as well makes our modelling align with previous approaches to step-indexing for guarded languages, where local contractivity in \emph{both} $X$ and $Y$ is needed~\cite[Sec.~2.6]{Birkedal12}.
\end{remark}
The operational semantics of \xcl is modelled by a higher-order GSOS law
\[ 	\rho_{X,Y}^\untyped \c \Sigma^\untyped(X \times B^\untyped(X,Y)) \to B^\untyped(X, (\Sigma^\untyped)^\star (X+Y)) \]
of $\Sigma^\untyped$ over $B^\untyped$ that encodes the above rules. It is given at stage $n\in \omega$ by the map
\[ \rho_{X,Y,n}^\untyped\colon \Sigma^{\untyped,0}(X_n\times (\later_n Y+(\later Y^{\later X})_n))\to (\Sigma^{\untyped,0})^\star (\later_n X +\later_n Y) + (((\Sigma^{\untyped})^\star (\later X+\later Y))^{\later X})_n \]
defined below (we consider application and $S''$, the other combinators are treated analogously):
\begin{enumerate}
	\item $\mathsf{app}((p,f),(q,\_))$ (where $p,q\in X_n$ and $f\in \later_n Y+(\later Y^{\later X})_n)$ is mapped to the term $t_n\in (\Sigma^{\untyped,0})^\star (\later_n X +\later_n Y)$ defined inductively by
	      \[
		      t_0=\begin{cases}
			      \inr(*)                        & \text{if $f\in \inr(*)$} \\
			      \mathsf{app}(\inr(*), \inl(*)) & \text{if $f\in \inl(*)$}
		      \end{cases} \quad\text{and}\quad t_{n+1}=\begin{cases}
			      f_{n+1}(X_{n+1,n}(q))        & \text{if $f\in  (\later Y ^{\later X})_{n+1}$} \\
			      \mathsf{app}(f,X_{n+1,n}(q)) & \text{if $f\in Y_n=\later_{n+1}Y$}.
		      \end{cases}
	      \]
	\item $S''((p,\_),(q,\_))$ ($p,q\in X_n$) is mapped to the element $f=(f_0,\ldots,f_n)\in (((\Sigma^{\untyped})^\star (\later X+\later Y))^{\later X})_n$, where $f_k\colon {\later_k}{X} \to (\Sigma^{\untyped,0})^\star (\later_k X + \later_k Y)$ is defined inductively by
	      \begin{align*}
		       & f_0\colon 1\to (\Sigma^{\untyped,0})^\star (1 + 1),            &  & * \mapsto (\inl(*)\, \inl(*))\, (\inl(*)\, \inl(*)) \\
		       & f_{k+1}\colon X_k \to (\Sigma^{\untyped,0})^\star (X_k + Y_k), &  &
		      t \mapsto (X_{n,k}(p)\, t)\, (X_{n,k}(q) \, t).
	      \end{align*}
\end{enumerate}
As before, this law is
relatively flat for $J = \{0 < 1\}$, with $\Sigma^\untyped_0$ given by application and $\Sigma^\untyped_1$ by the other constructors.
The operational model
\begin{equation}\label{eq:xcl-model}\gamma^\untyped : \mS^\untyped \to B^\untyped(\mS^\untyped,
	\mS^\untyped)\end{equation}
of the law $\rho^\untyped$ faithfully models the operational semantics of \xcl; for example, for all $n >
	1$, the map $\gamma_{n}^\untyped$ sends an application term $p\, q \in \mS^{\untyped,0}$ to the term it
$\beta$-reduces to.

\paragraph{Denotational Semantics of \xcl}
It is not difficult to verify that the behaviour functor $B^\untyped$ satisfies the conditions of \Cref{thm:lfc-existence}, analogous to \Cref{lem:thm-cond-set-ty}. We thus obtain an adequate denotational semantics for \xcl from our general theory:

\begin{theorem}[Denotational $\rho$-model for \xcl]\label{thm:xcl-lfc}
	The behaviour bifunctor $B^\untyped$ for \xcl has a locally final coalgebra. It extends to a denotational $\rho$-model, adequate for step-indexed strong applicative bisimilarity.
\end{theorem}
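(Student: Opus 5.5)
The argument follows the pattern of \Cref{thm:xtcl-lfc}. We first check that $B^\untyped(X,Y)=\later Y+\later Y^{\later X}$ on $\topos$ satisfies the three hypotheses of \Cref{thm:lfc-existence}; this yields a locally final coalgebra. Relative flatness of $\rho^\untyped$ for $J=\{0<1\}$ then lets us apply \Cref{th:denotational-model} and \Cref{corr:adequacy}.

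Stability of $\topos$ is proved exactly as in \Cref{ex:stable-set-ty}, using stages $n$ instead of types. Limits exist because $\topos$ is a presheaf category. For hypothesis (2), a global element $1\to B^\untyped(1,1)$ is just the left injection $1=\later 1\to\later 1+\cdots$.

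For hypothesis (1), the key fact is that $\later$ is locally contractive with factor $\tfrac12$. Its action on morphisms is given by $(\later f)_{n+1}=f_n$, so if $f_k=g_k$ for all $k<n$, then $(\later f)_k=(\later g)_k$ for all $k\le n$. Coproducts are computed pointwise and are therefore locally non-expansive. The exponential $Y^X$ at stage $n$ depends only on stages $\le n$ of $X$ and $Y$, so $(-)^{(-)}$ is locally non-expansive in both arguments. Composing these observations, $B^\untyped$ is contractive with factor $\tfrac12$ in both arguments, which is more than required.

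For hypothesis (3), we would use criterion (1) of \Cref{rem:final-sequence}. We would show that $B^\untyped(X,-)$ preserves limits of $\omega^\op$-chains, or more simply argue directly that the final sequence converges in $\omega$ steps. Because of the guard $\later$ on $Y$, stage $n$ of $B^\untyped(X,Y)$ depends only on $Y_k$ for $k<n$. An induction then shows that the restriction $D_{m}\to D_n$ of the final sequence is an isomorphism at all stages $\le n$ once $m>n$. Hence $D_\omega$, computed pointwise, is a fixed point with $D_{\omega+1,\omega}$ invertible. I expect this convergence step to be the main, though still routine, obstacle: one has to check that coproducts and $\later$ commute with the relevant $\omega^\op$-limits stagewise, using that limits in $\topos$ are pointwise and that $\later Y^{\later X}$ at stage $n$ involves only finitely many earlier stages of $Y$.

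With all three hypotheses in place, \Cref{thm:lfc-existence} gives the locally final coalgebra $Z$, and \Cref{th:denotational-model} extends it to a denotational $\rho^\untyped$-model. \Cref{corr:adequacy} then shows that $\llbracket p\rrbracket=\llbracket q\rrbracket$ implies $\beh(p)=\beh(q)$ in $\nu B^\untyped(\mS^\untyped,-)$. It remains to identify this behavioural equivalence with step-indexed strong applicative bisimilarity. We would describe it stagewise: at stage $n$, two terms are equivalent iff they match $\beta$-steps and applications, with continuations equivalent at stage $n-1$. Since $B^\untyped(\mS^\untyped,-)$ is built from coproducts, $\later$ and exponentials, it preserves weak pullbacks, so behavioural equivalence coincides with coalgebraic bisimilarity computed pointwise, which is exactly the step-indexed notion.
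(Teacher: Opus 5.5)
Your proposal is correct and has the same skeleton as the paper's proof. You verify the hypotheses of \Cref{thm:lfc-existence} for $B^\untyped$; your stagewise contractivity argument and the global element $\inl\colon 1=\later 1\to B^\untyped(1,1)$ match the paper's. You then use relative flatness with \Cref{th:denotational-model} and \Cref{corr:adequacy}, as the paper does.

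The genuine difference is hypothesis (3).
\begin{itemize}
\item The paper uses criterion (2) of \Cref{rem:final-sequence}: $\topos$ is locally presentable, and $B^\untyped(X,-)$ preserves monomorphisms and colimits of $\alpha$-chains for a regular $\alpha$ above all $|X_n|$. This is the argument of \Cref{lem:thm-cond-set-ty} transported stagewise.
\item You use criterion (1), or direct stabilisation, and both variants are sound. For criterion (1): $\later$ preserves all limits, $Y\mapsto(\later Y)^{\later X}$ is $\later$ followed by a right adjoint, and in $\Set$ binary coproducts commute with connected limits, which in $\topos$ are computed stagewise. Hence $B^\untyped(X,-)$ preserves limits of $\omega^\op$-chains. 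For stabilisation: stage $k$ of $B^\untyped(X,Y)$ only involves $Y_0,\dots,Y_{k-1}$, so every stage of the final sequence is eventually constant.
\end{itemize}
Your route is more elementary and sharper: it gives convergence in exactly $\omega$ steps, regardless of the size of $X$. The paper's route reuses the typed proof unchanged. It also does not depend on guarding the covariant argument, which the paper treats as a design choice rather than a necessity. Your limit-preservation variant shares that independence; your stabilisation variant does not.

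Two small remarks.
\begin{itemize}
\item Your stabilisation claim is off by one. $(D_{m,n})_k$ is invertible for $k<n$, not for $k\le n$; for example $(D_{1,0})_0\colon 1+1\to 1$ is not invertible. This does not affect convergence.
\item Your final step identifying behavioural equivalence with an explicit bisimilarity is unnecessary. The paper \emph{defines} step-indexed strong applicative bisimilarity as behavioural equivalence w.r.t.\ $\nu B^\untyped(\mS^\untyped,-)$. As written, that step would also need a weak-pullback result valid in the presheaf topos $\topos$, not just Rutten's $\Set$-based one.
\end{itemize}
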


Here \emph{step-indexed strong applicative bisimilarity} is behavioural equivalence on the operational model \eqref{eq:xcl-model}
w.r.t.~the final coalgebra $B^\untyped(\mS^\untyped,-)$. Explicitly, it is given by the greatest $\omega$-indexed
family $(\sim_n^\untyped \,\seq \mS^{\untyped,0} \times \mS^{\untyped,0})_{n\in
			\omega}$ of equivalence relations on the set of \xcl-terms such that:
\begin{itemize}
	\item $p\sim_{n+1}^\untyped q$ implies $p \sim_n^\untyped q$;
	\item if $p\sim^\untyped_{n+1}  q$ and $p\to p'$, then there exists $q'$ with $q\to q'$ and
	      $p'\sim^\untyped_n  q'$;
	\item if $p\sim^\untyped_{n+1} q$ and $p\xto{t} p'$, then there exists $q'$ with $q\xto{t}
		      q'$ and $p'\sim^\untyped_n q'$.
\end{itemize}
The adequacy result of \Cref{thm:xcl-lfc} thus asserts that
\[ \forall  p,q\in (\mS^{\untyped,0})^\star.\, \forall n\in \omega.\;\; \llbracket p \rrbracket_n = \llbracket q
	\rrbracket_n \implies p\sim^\untyped_n q. \]
An explicit description of the locally final coalgebra $Z\cong B^\untyped(Z,Z)$ is given by the presheaf of
\emph{guarded interaction trees} (see e.g. \cite[Sec.~3]{Frumin24} for a
similar example). At stage $0$, we have $Z_{0} = 1 + 1 \cong 2$
(representing the ``blank'' behaviour of application and combinators without
any further available information). At stage $n+1$, an element $z \in Z_{n+1}$
is either a tree at stage~$n$ (representing the abstract behaviour of application), or a tuple of functions $(f_{0} : Z_{0} \to
	Z_{0}, \dots, f_{n} : Z_{n} \to Z_{n})$ that commute with the restriction
maps (representing the behaviour of combinators). The denotation map $\den{-}_n\colon (\mS^{\untyped,0})^\star \to Z_n$ at stage $n$ sends a program $p$ to the guarded interaction tree given by the depth-$n$ unravelling of the behaviour of $p$. One readily verifies that $Z\cong B^\untyped(Z,Z)$ emerges from the inverse limit construction from \Cref{rem:cons-lfc}, analogous to the typed case (\Cref{lem:fixed-point-iteration-xTCL}).

Note that unlike for the typed cases, \Cref{thm:xcl-lfc} does not yield uniqueness of the locally final coalgebra, although we conjecture that it is also unique up to isomorphism in the present setting.

\subsection{Non-Deterministic Concurrent Combinatory Logic}
Similar to \stsc, effectful versions of \xcl are captured in our
framework by using extended behaviour functors. For variety, let us consider a
non-deterministic, concurrent extension \xnccl of \xcl, adapted from \cite{clp98,Goncharov25}. We extend the signature $\Sigma^\untyped$ of \xcl
to the signature $\Sigma^\ndc$ that additionally contains a binary non-deterministic choice operator $- \oplus -$ and a parallel composition operator $-\paral-$. The latter reduces both
arguments simultaneously if possible to achieve fairness (the rules of a non-fair language could be modelled analogously). Formally, the operational semantics of \xnccl is given by the rules of \xCL and the following rules for  $- \oplus -$ and $-\paral-$:
\begin{gather*}
	\inference{}{p \oplus q\to p}
	\qquad
	\inference{}{p \oplus q\to q}
	\qquad
	\inference{p \to p' \quad q \to q'}{p \paral q \to p' \paral q'}
	\qquad
	\inference{p \xto{r} p' \quad q \to q'}{p \paral q \to p \paral q'}
	\\[1ex]
	\inference{p \to p' \quad q \xto{r} q'}{p \paral q \to p' \paral q}
	\qquad
	\inference{p \xto{r} p' \quad q \xto{s} q'}{p \paral q \xto{t} p \,t \paral q\, t}
\end{gather*}
\begin{remark}
	A term $p \paral q$ where $p\xto{r} p'$ and $q\xto{s}{q'}$ is, in spirit, a value. The language \xnccl and all other
	languages in this section can be given a call-by-(push-)value semantics rather than the present call-by-name semantics by moving from $\Set$ to $\Set^2$ as the base category, with the two sorts representing values and computations~\cite[Sec.~6.4]{Goncharov25}. We stick to call-by-name to achieve a simpler presentation, as the denotational models of the
	call-by-value cases are no more interesting.
\end{remark}

Since the above rules introduce finitary non-determinism (for $\beta$-reductions),
we modify the behaviour bifunctor $B^\untyped$ on $\topos$ given by \eqref{eq:bifunctorxcl} to
\begin{equation*}
	B^\ndc(X,Y)=\Powf(\later Y) + (\later Y)^{\later X}
\end{equation*}
where $\Powf$ is the (stage-wise) finite power set functor.
The operational rules of \xnccl determine a higher-order coalgebra
\[ \gamma^\ndc\colon \mS^\ndc\to B^\ndc(\mS^\ndc,\mS^\ndc) \]
sending each program to its finite set of successors according to the rules; e.g.
\begin{itemize}
	\item if $p,q \in \mS^{\ndc,0}$, then $\gamma (p \oplus q) \in
		      \PSet_\omega(\later \mS^\ndc) + (\later \mS^\ndc)^{\later \mS^\ndc}$ is
	      inductively defined as
	      \[\gamma_0 (p \oplus q) = \{*\} \qquad\text{and}\qquad
		      \gamma_{m + 1} (p \oplus q) = \{p,q\};\]
	\item if $n \in \omega$, then $\gamma_n (I) = (f_0,\dots,f_n) \in
		      ((\later \mS^\ndc)^{\later \mS^\ndc})_n$, where
	      \[f_0 (I)(*) = * \qquad\text{and}\qquad f_{k + 1} (I)(p) = p.\]
\end{itemize}
This coalgebra is the operational model of the higher-order GSOS law
\[ \rho^\ndc_{X,Y}\colon \Sigma^\ndc(X\times B^\ndc(X,Y)) \to
	B^\ndc(X,(\Sigma^\ndc)^{\star}(X+Y)) \]
that encodes the operational rules in the now familiar way. At
step $n \in \omega$, it is given by the map
\[ \rho^\ndc_{X,Y,n}\colon \Sigma^{\ndc,0}(X_n\times (\PSet_\omega(\later_n Y)
	+ ({\later Y}^{\later X})_n) \to \PSet_\omega((\Sigma^{\ndc,0})^\star(\later_n X
	+ \later_n Y)) + (((\Sigma^{\ndc,0})^\star (\later X + \later Y))^{\later X})_n.\]
We give cases for the two new constructors for illustration:
\begin{itemize}
	\item $(p, \_) \oplus (q, \_)$ (where $p,q \in X_n$) is mapped to the finite
	      set $t_n \in \PSet_\omega((\Sigma^{\ndc,0})^\star(\later_n X
		      + \later_n Y))$ defined inductively by
	      \[t_0 = \{ \inl(*) \}\qquad\text{and}\qquad t_{k+1} =
		      \{X_{k+1,k}(p),X_{k+1,k}(q)\};\]
	\item $(p, U) \paral (q, g)$ (where $p,q \in X_n$, $U \in \PSet_\omega(\later_n Y)$, $g \in
		      (\later Y^{\later X})_n$) is mapped to the finite set $t_n \in
		      \PSet_\omega((\Sigma^{\ndc,0})^\star(\later_n X + \later_n Y))$
	      defined inductively by
	      \[t_0 = \{\inr(*) \paral \inl(*)\}\qquad\text{and}\qquad t_{k+1} =
		      \{p' \paral X_{k+1,k}(q) \mathbin{|} p' \in U \}.\]
\end{itemize}
As before, this law is relatively flat on $J = \{0 < 1\}$, where
$\Sigma^\ndc_0$ contains application only, and~$\Sigma^\ndc_1$ contains all other constructors.
It is easy to verify that the behaviour functor $B^\ndc$ satisfies the conditions of \Cref{thm:lfc-existence}, and so we obtain an adequate denotational semantics for \xnccl:

\begin{theorem}[Denotational $\rho$-model for $\xnccl$]\label{thm:xntcl-lfc}
	The behaviour bifunctor $B^\ndc$ for $\xnccl$ has a locally final coalgebra. It extends to a denotational $\rho$-model that is adequate for step-indexed strong applicative bisimilarity.
\end{theorem}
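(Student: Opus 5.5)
The plan follows the pattern of \Cref{thm:xcl-lfc}. First I apply \Cref{thm:lfc-existence} to $B^\ndc$ on $\topos$ to get a locally final coalgebra. Then I invoke \Cref{th:denotational-model}, using the relative flatness of $\rho^\ndc$ on $J=\{0<1\}$ noted above. Finally, \Cref{corr:adequacy} gives adequacy with respect to behavioural equivalence for $B^\ndc(\mS^\ndc,-)$, which I then identify with step-indexed strong applicative bisimilarity.

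For \Cref{thm:lfc-existence} I check its hypotheses in turn.

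\begin{enumerate}
\item $\topos$ is a stable M-category with a terminal object and limits of chains. Completeness is clear, since it is a presheaf category. Stability is argued as in \Cref{ex:stable-set-ty}, stagewise: jointly monic families in a presheaf category are jointly monic at each stage.

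\item Local contractivity in the first argument with factor $\frac12$ rests on one observation. If $f_k=g_k$ for all $k<n$, then $\later f$ and $\later g$ agree at all stages $\le n$, because $(\later f)_{k+1}=f_k$. So $d(\later f,\later g)\le \frac12 d(f,g)$. Next I show that $\Powf$ applied stagewise, the coproduct, and the exponential $(-)^{(-)}$ are all locally non-expansive. For the exponential this holds because stage $n$ of $Y^X$ depends only on stages $\le n$ of $X$ and $Y$. Since the first argument of $B^\ndc$ only occurs under $\later$, the composite is contractive in it. Local non-expansiveness in the second argument follows in the same way.

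\item $\topos(1,B^\ndc(1,1))$ is inhabited. I take the constant element $\{*\}\in\Powf(\later 1)$ at every stage, which is natural.

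\item For convergence of the final sequence of $H=B^\ndc(X,-)$, I use criterion (2) of \Cref{rem:final-sequence}. Here $\topos$ is locally presentable. $\Powf$ preserves monos and filtered colimits of monos stagewise. $\later$ preserves all colimits of chains. The coproduct is fine. The exponential $(\later Y)^{\later X}$, with $X$ fixed, preserves monos.
\end{enumerate}

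\textbf{Main obstacle.} I expect the hardest step to be checking that $Y\mapsto(\later Y)^{\later X}$ preserves colimits of $\alpha$-chains of monos for some regular $\alpha$. Exponentials generally fail to preserve colimits when $X$ is large. To handle this, I would pick $\alpha$ larger than the cardinality of every $X_k$. A tuple $(f_0,\dots,f_n)$ into a chain colimit then factors through a single stage, since each $f_k$ has fewer than $\alpha$ values and the chain is $\alpha$-directed. Monos in the chain ensure the factorization is unique. I would carry out this argument stagewise.

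An alternative for condition (4) is to argue directly, as in \Cref{ex:stsc}. Stage $n$ of $H^m1$ stabilizes once $m>n$, so the sequence should converge in $\omega+1$ steps. However, the finite powerset breaks $\omega^\op$-continuity, so I prefer the locally presentable criterion.

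\textbf{Identifying the equivalence.} The last step is an unfolding of the final $B^\ndc(\mS^\ndc,-)$-coalgebra. Since $\Powf$ preserves weak pullbacks, behavioural equivalence coincides with coalgebraic bisimilarity. Read stagewise through the $\later$ guards, this gives a decreasing family $\sim_n$ satisfying the following conditions:
\begin{itemize}
\item \textbf{unlabelled steps} — every step of $p$ is matched by a step of $q$, and vice versa, with the successors related at $n$;
\item \textbf{labelled steps} — labelled steps $\xto{t}$ are matched with successors related at $n$.
\end{itemize}
Under $\sim_{n+1}$ these are exactly the defining clauses of step-indexed strong applicative bisimilarity, now taken in the nondeterministic sense.
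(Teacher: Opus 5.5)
Your proposal is correct and takes the paper's route: the paper also verifies the three hypotheses of \Cref{thm:lfc-existence} and then combines relative flatness with \Cref{th:denotational-model} and \Cref{corr:adequacy}. It checks those hypotheses by reducing to the \xcl{} case together with three facts about $\Powf$: it is locally non-expansive, it preserves inhabitedness, and it preserves monos and chain colimits. Your stagewise checks, including the cardinality argument for the exponential, are exactly what that reduction amounts to. Your final identification step, including the weak-pullback argument, is not needed: the paper defines step-indexed strong applicative bisimilarity for \xnccl{} as behavioural equivalence w.r.t.\ the final $B^\ndc(\mS^\ndc,-)$-coalgebra.
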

Here \emph{step-indexed strong applicative bisimilarity} $\sim^\ndc$ is behavioural equivalence w.r.t.~the final coalgebra $B^\ndc(\mS^\ndc,-)$,  defined exactly like $\sim^\untyped$ in the previous section. Adequacy thus means that
\[ \forall  p,q\in (\mS^{\ndc,0})^\star.\, \forall n \in \omega.\,
	\llbracket p \rrbracket_n = \llbracket q
	\rrbracket_n \implies p\sim^\ndc_n q. \]
The locally final coalgebra $Z\cong B^\ndc(Z,Z)$ again consists of a form of guarded interaction trees.
At stage $0$, $Z_0 \coloneq \PSet_\omega(1) + 1\cong 3$. It encodes whether a term is stuck, can perform at least one $\beta$-reduction, or is
a combinator. At stage $n + 1$, an interaction tree is given by a
finite set of children. Each of these is either an interaction tree at stage
$n$, or a family $(f_i : Z_i \to Z_i)_{0 \leq i \leq n}$ whose members commute
with the restriction maps. See~\citet{Worrell05} for more details on final
coalgebras involving the power set functor. The denotational semantics is then given by the morphism $\llbracket -
	\rrbracket : \mS^{\ndc,0} \to Z$ that computes the guarded interaction tree of a
program inductively.

\subsection{The $\lambda$-calculus}
\label{subsec:lambda}

As our most intricate example, we consider guarded denotational models for the pure, untyped $\lambda$-calculus, whose operational semantics
is given by the two rules below:
\begin{equation}\label{eq:lambda}
	\inference{t \to t'}{t \app{}{} s \to t' \app{}{} s}
	\qquad
	\inference{}{(\lambda x.\,t) \app{}{} s \to t[s/x]}
\end{equation}
In comparison to \xcl, the $\lambda$-calculus involves \emph{variable binding} and \emph{substitution}. To apply our theory to this setting, we combine the topos-of-trees based modelling of \xcl (\Cref{subsec:xcl}) with the presheaf-based approach to
the operational semantics of languages with variable
binding~\cite{Goncharov23}, which builds on the seminal work on abstract syntax by Fiore, Plotkin and
Turi\,\cite{DBLP:conf/lics/FiorePT99}. We briefly recall the core constructions for the latter and  refer the reader to the two cited papers for details.

\paragraph{Presheaves for Variable Binding} We consider the
category  $\vcat$ of (covariant) presheaves on the
category $\fset$ of finite cardinals. Objects of $\fset$ are
sets $m=\{0,\dots,m-1\}\;(m \in \Nat)$, and morphisms $m\to k$ are functions. The category
$\vcat$ is a topos, hence admits finite limits and is cartesian
closed.

Intuitively, a presheaf $X \in \vcat$ assigns to each $m\in\fset$ (viewed as
context of $m$ untyped variables $x_0,\ldots,x_{m-1}$) a set $X(m)$ of ``terms'' with free variables from $m$, with functorial action $X(f \c m \to k)$ given by
renaming free variables according to $f$. For example, the formation of $\lambda$-terms yields a presheaf $\Lambda^0\in\vcat$ where $\Lambda^0(m)$ is the set of $\lambda$-terms modulo $\alpha$-equivalence with free variables from $m$. A more basic example is the presheaf
$V^{0}$ of \emph{variables}, given by
\begin{equation*}
	V^{0}(m) = m \qquad\text{and}\qquad V^{0}(f) = f.
\end{equation*}
That is, an element of $V^{0}(m)$ is a choice of a variable $i \in
	m$. Substitution is modelled abstractly by the bifunctor  $\langle -,- \rangle^{0}
	\c (\vcat)^{\opp} \times \vcat \to \vcat$ defined by the end construction
\begin{equation*}
	\langle X,Y \rangle^{0}(m) = \int_{l \in \fset} \Set(X(l)^{m}, Y(l)) = \vcat(X^{m} , Y),
\end{equation*}
where $X(l)^{m}$ is the $m$-fold product of the set $X(l)$.
Intuitively, an element of $\langle X,Y \rangle^{0}(m)$ is a family of maps
$X(l)^m\to Y(l)$, natural in $l$, which describes the simultaneous substitution
of $X$-terms in~$l$ variables for the $m$ variables of some fixed term,
resulting in a $Y$-term in $l$ variables. The functor $\langle -,- \rangle^{0}$
is the internal hom of a closed monoidal structure given by the \emph{substitution tensor}~\cite{DBLP:conf/lics/FiorePT99}; we omit its definition as it is not needed for our present purposes.

Finally, we have the \emph{context extension} endofunctor
\begin{equation*}
	\delta^{0} \c \vcat
	\to \vcat,\qquad \delta^{0} X (m) = X(m + 1), \qquad \delta^{0} X (h) = X(h + \id_{1}).
\end{equation*}
We think of $t\in \delta^0 X(m)$ as the term `$\lambda m.\, t$' with the last variable $m$ of $t$ being bound.

As elaborated in Fiore et al.\,\cite{DBLP:conf/lics/FiorePT99}, the presheaf $\Lambda^0$ of $\lambda$-terms forms the initial algebra for the endofunctor below whose summands correspond to the three constructors of the $\lambda$-calculus:
\begin{equation}\label{eq:lamsyn}
	\Sigma^{\lambda,0}: \vcat \to \vcat,\qquad \Sigma^{\lambda,0} X = \underbrace{V^{0}}_{\text{variables}}
	+ \underbrace{\delta^{0} X}_{\lambda x.(-)}
	+\, \underbrace{X \product X}_{\mathsf{app}}.
\end{equation}
The initial algebra $\Lambda=\mu\Sigma^{\lambda, 0}$ corresponds to the \emph{nameless}
representation of $\lambda$-terms in the style of De Bruijn indices~\cite{debruijn1972nameless}.

\paragraph{Categorical Modelling of the $\lambda$-Calculus} In previous work~\cite{Goncharov23}, the small-step operational semantics \eqref{eq:lambda} of the untyped $\lambda$-calculus has been modelled via a GSOS law over the presheaf category $\mathbf{Set}^\fset$. To apply our present theory, we devise an alternative modelling of the $\lambda$-calculus in guarded form. To this end, we combine the category $\fset$ with step-indexing and consider the
presheaf category \[\vartopos \cong \mathbf{Set}^{\fset \times \omega^{\opp}}.\]
This category can be thought of as a topos of trees for \emph{variable
	sets}, or simply the \emph{topos of {variable} trees}. Objects in $\vartopos$ are families
$(X_{n})_{n \in \omega}$ of presheaves equipped with restriction
\emph{natural transformations} $X_{n,m}\colon X_{n} \to X_{m}$ for all $n\geq m$. The
category $\vartopos$ is cartesian closed, and the exponential $Y^X$ is given at $n\in \omega$ and $m\in \fset$ by the set $Y^X(n)(m)$ of $n+1$-tuples $(f_i\in Y_i^{X_i}(m))_{0\leq i\leq n}$ where $Y_i^{X_i}$ denotes the exponential in $\vcat$ and the $f_i$ internally commute with restriction.

\begin{notation}
	Given $X \in \vartopos$, we write $X_{n}(m)$ for $X(n)(m)$ where $n\in \omega$ and $m\in \fset$.
\end{notation}

Next, we extend the constructions for abstract syntax in $\vcat$ to $\vartopos$. The presheaf $V^{0} \in \vcat$ of variables and the endofunctor $\delta^{0} \c \vcat \to \vcat$ are extended
pointwise to
\begin{align*}
	 & V \in \vartopos,                  &  & V_{n}(m) = V^{0}(m),
	 & \delta : \vartopos \to \vartopos, &  & \delta_{n}X =
	\delta^{0}(X_{n}).
\end{align*}
The substitution bifunctor $\langle -,- \rangle^{0} \c (\vcat)^{\opp} \times \vcat
	\to \vcat$ extends to
\begin{equation*}
	\langle -,- \rangle \c (\vartopos)^{\opp} \times \vartopos
	\to \vartopos,\qquad \langle X,Y \rangle_{n} = \int_{i \leq n} \langle X_{i} , Y_{i}\rangle^{0}.
\end{equation*}
More explicitly, $\langle X,Y
	\rangle_{n}(m)$ is the set of tuples of substitution structures $(g_i \in \langle X_{i},Y_{i}\rangle^0(m))_{0\leq i\leq n}$ that are compatible with restriction in the obvious way. The syntax functor $\Sigma^{\lambda,0}$ extends to
\begin{equation}
	\Sigma^\lambda\colon \vartopos \to \vartopos,\qquad
	\Sigma^\lambda_{n}X = \Sigma^{\lambda,0} X_{n} = V^0 + \delta^0 X_n +  X_n \times X_n.
\end{equation}
At any stage $n \in \omega$, the initial $\Sigma^\lambda$-algebra
$\mu\Sigma^\lambda_{n} \in \vartopos$ is the presheaf $\Lambda$ of $\lambda$-terms.

To model the guarded behaviour of the $\lambda$-calculus, we need one final
ingredient: a version of the later modality $\later$ for the topos of variable
trees. Its definition is virtually identical to the later modality used for \xcl, except that the objects $X$ below are now in $\vartopos$, not $\topos$:
\begin{equation}
	\later_{1} X = 1, \qquad \text{and} \qquad \later_{n+1} X = X_{n},
\end{equation}
where $1$ is the constant presheaf on the singleton. We can now define the behaviour bifunctor
\begin{equation}
	\label{eq:behlambda}
	B^\lambda \c (\vartopos)^{\opp} \times
	\vartopos \to \vartopos, \qquad B^\lambda(X,Y) = \langle \later X, \later Y \rangle\times (\later Y + \later Y^{\later X} + 1).
\end{equation}

The bifunctor $B^\lambda$ has two components: a (guarded) substitution component
$\langle \later X, \later Y \rangle$, which intends to model the simultaneous
substitution structure of $\lambda$-terms, and a computational component, expressing the fact that
a term may either
$\beta$-reduce ($\later Y$), terminate to a
$\lambda$-abstraction and thus behaves as a (guarded) function on terms $(\later
	Y^{\later X})$, or be stuck $(1)$. Apart from the presence of the $\later$-modality
that makes $B^\lambda$ locally contractive in both arguments, $B^\lambda$ is analogous to the
behaviour bifunctor of the $\lambda$-calculus previously considered in higher-order abstract GSOS~\cite[Sec.~5.4]{Goncharov23}.

The definition of the higher-order GSOS law for the $\lambda$-calculus involves a minor technicality:
\begin{remark}
	Formally, the operational semantics of the $\lambda$-calculus does not form a
	higher-order GSOS law in $\vartopos$, but instead a \emph{$V$-pointed} higher-order GSOS law,
	meaning that the presheaves~$X$ appearing in $\rho_{X,Y}$ are assumed to be
	equipped with a natural transformation $V\to X$ that includes variables into $X$-terms.
	This artifact,
	originating back to \citet{DBLP:conf/lics/FiorePT99}, does not
	affect our theory -- all definitions, theorems, and proofs of previous sections carry over with trivial notational changes to the $V$-pointed case. Therefore, points are omitted from our following presentation, and we tacitly assume that every variable is an $X$-term.
\end{remark}
With this preparation, the operational semantics of the $\lambda$-calculus is captured by a ($V$-pointed) higher-order GSOS law of $\Sigma^\lambda$ over $B^\lambda$, that is, a (di)natural transformation of type
\begin{equation}
	\label{eq:rholambda}
	\begin{aligned}
		\rho_{X,Y}^\lambda = \langle \nu_{X,Y}, \kappa_{X,Y} \rangle \c
		 & \Sigma^\lambda(X \times \langle \later X, \later Y \rangle\times (\later Y + \later Y^{\later X} + 1)) \to
		\\
		 & \langle \later X ,
		\later (\Sigma^\lambda)^{\star}(X + Y) \rangle \times (\later (\Sigma^\lambda)^{\star}(X + Y)
		+ (\later (\Sigma^{\lambda})^{\star}(X + Y))^{\later X} + 1).
	\end{aligned}
\end{equation}
Observe that, by the definition of $B^\lambda$
\eqref{eq:behlambda}, the law $\rho^\lambda$ consists of two components, the
\emph{simultaneous substitution} component $\nu$ and the \emph{computational}
component $\kappa$. The map $\nu_{X,Y,n \in \omega,m \in \fset}$ acts as follows:
\begin{enumerate}
	\item $\mathsf{app}((p,g,\_),(q,h,\_))$ (where $p,q\in X_n(m)$ and
	      $g,h\in \langle \later X , \later Y\rangle_{n}(m)$) is mapped to an element of $\langle \later X, \later (\Sigma^\lambda)^{\star}(X+Y)\rangle_n(m)$, a tuple of $\vcat$-morphisms $(\later_i X)^m \to (\Sigma^{\lambda,0})^\star(\later_i X+\later_i X)$ for $i\leq n$:
	      \[
		      \lambda i \leq n.\,\lambda \vec{u}\colon {\later_{i}}X(l)^{m}.\,\mathsf{app}(g_{i}(\vec{u}),h_{i}(\vec{u})).
	      \]
	\item $\mathsf{lam}(p,g,\_)$ (where $p\in X_{n}(m+1)$ and $g \in \langle \later X , \later Y\rangle_{n}(m+1)$)
	      is mapped to the tuple
	      \[
		      \lambda i \leq n.\,\lambda \vec{u}\colon
		      {\later_{i}} X(l)^{m}.\,\mathsf{lam}(g_{i}(\mathrm{wk}(\vec{u}),m)),
	      \]
	      where $\mathsf{lam}$ denotes the $\lambda$-abstraction constructor, $\mathrm{wk}$ weakens the substitution $\vec{u} : \later_{i} X(l)^{m}$ to one of type
	      $\later_{i} X(l+1)^{m}$, and $m$ is the
	      most ``fresh'' variable.
	\item a variable $x_{j}$, $j < m$, is mapped to the tuple
	      \[
		      \lambda i \leq n.\, \lambda \vec{u}\colon {\later_{i}} X(l)^{m}.\,\vec{u}(j).
	      \]
\end{enumerate}
Next, we move on to the component $\kappa$. For simplicity, we identify elements $f\in B^A(m)$ of an exponential in $\vcat$ with the corresponding function $A(m)\to B(m)$, $a\mapsto \ev_m(f,a)$, where $\ev\colon B^A\times A\to B$ is the evaluation map. For $\kappa_{X,Y,n \in
		\omega,m \in \fset}$, we have (omitting the trivial variable case):
\begin{enumerate}
	\item $\mathsf{app}((p,\_,f),(q,\_,\_))$ (where $p,q\in X_n(m)$ and $f\in
		      \later_n Y(m)+(\later Y^{\later X})_n(m))$ is mapped to the term $t_n\in
		      (\Sigma^\lambda)^\star_{n} (\later X +\later Y)(m)$ defined inductively by
	      \[
		      t_0=\begin{cases}
			      \inr(*)                        & \text{if $f\in \inr(*)$} \\
			      \mathsf{app}(\inr(*), \inl(*)) & \text{if $f\in \inl(*)$}
		      \end{cases}\qquad t_{n+1}=\begin{cases}
			      f_{n+1}(X_{n,m}(q))        & \text{if $f\in  (\later Y ^{\later X})_{n+1}(m)$} \\
			      \mathsf{app}(f,X_{n,m}(q)) & \text{if $f\in Y_n(m)=\later_{n+1}Y(m)$}.
		      \end{cases}
	      \]
	\item $\mathsf{lam}(p,g,\_)$ (where $p \in X_{n}(m+1)$ and $g \in \langle \later X, \later Y \rangle_{n}(m+1)$)
	      is mapped to an element of $(\later (\Sigma^{\lambda})^{\star}(X + Y))^{\later X}(m)$, i.e.\ a tuple of elements of $(\later_i (\Sigma^{\lambda})^{\star}(X + Y))^{\later_i X}(m)$ ($i\leq n$), given by
	      \[
		      \lambda i \leq n.\,\lambda e \in \later_{i} X(m).\,g_{i}({0},\dots,{m-1},e).
	      \]
\end{enumerate}
The law \eqref{eq:rholambda} is trivially relatively flat for $J = 1$. Its operational $\rho$-model \[\langle \gamma^{1}, \gamma^{2}\rangle \c {\mS^\lambda} \to
	\langle \later {\mS^\lambda}, \later {\mS^\lambda} \rangle \times (\later {\mS^\lambda} +  {(\later \mS^\lambda)}^{\later {(\mS^\lambda)}} + 1)\]
models the (guarded) operational semantics of  the $\lambda$-calculus:
for all $n >0$, $m \in \fset$ and $t \in \Lambda(m)$, we have (i) $(\gamma^{1}_{n,m}(t))_{n}(\vec{u}) =
	t[\vec{u}]$ for any suitable substitution $\vec{u}$, and (ii) $\gamma^{2}_{n,m}(t) = t'$ iff $t \to t'$.

\paragraph{Denotational Semantics of the $\lambda$-Calculus} The category $\vartopos$ is an M-category analogous to $\topos$, and the bifunctor $B^\lambda$ satisfies the conditions of \Cref{thm:lfc-existence} analogous to the bifunctor~$B^\untyped$ for \xcl. Thus, our general theory yields an adequate denotational semantics for the $\lambda$-calculus:

\begin{theorem}[Denotational $\rho$-model for the $\lambda$-calculus]\label{thm:lambda-lfc}
	The behaviour bifunctor $B^\lambda$ has a locally final coalgebra. It extends to a denotational $\rho$-model, adequate for step-indexed strong applicative bisimilarity.
\end{theorem}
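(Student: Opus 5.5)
The plan follows the proofs of \Cref{thm:xtcl-lfc} and \Cref{thm:xcl-lfc}. First we establish existence of a locally final coalgebra by invoking \Cref{thm:lfc-existence} for $\C=\vartopos$. Second, we obtain the denotational model and its adequacy from \Cref{th:denotational-model} and \Cref{corr:adequacy}. For the second step, the law $\rho^\lambda$ is relatively flat with $J=1$, as already observed; the $V$-pointed variant of the theory carries over verbatim. It then remains to note that behavioural equivalence w.r.t.\ $\nu B^\lambda(\mS^\lambda,-)$ is, by unfolding the final coalgebra stage by stage, exactly step-indexed strong applicative bisimilarity. This is routine since $B^\lambda(\mS^\lambda,-)$ preserves weak pullbacks.

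For the first step we verify the hypotheses of \Cref{thm:lfc-existence}. We equip $\vartopos\cong\Set^{\fset\times\omega^\op}$ with the metric $d(\mu,\nu)=\sup\{2^{-n}\mid \mu_n\neq\nu_n\}$, where $\mu_n$ is the whole $\vcat$-component at stage $n$. Completeness, $1$-boundedness, the ultrametric inequality and non-expansive composition are checked exactly as in \Cref{ex:m-cat-set-ty}. Stability follows as in \Cref{ex:stable-set-ty}: a jointly monic family in a presheaf category is jointly monic componentwise, so agreement of all $h_i\cdot f$ and $h_i\cdot g$ up to stage $n$ forces $f_k=g_k$ for $k\le n$. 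Since $\vartopos$ is a presheaf topos, it is complete (so it has a terminal object and limits of chains) and locally presentable.

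Condition (2) is easy. At stage $0$, $B^\lambda(1,1)_0=\langle 1,1\rangle_0\times(1+1^1+1)$, and at later stages it is built from terminal objects, so a global element exists: take the constant substitution tuple paired with the third summand $1$.

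For condition (1), the key observation is that $\later$ is locally contractive with factor $\tfrac12$. Indeed, $(\later f)_{n+1}=f_n$ and $(\later f)_0=\id_1$, so if $f,g$ agree up to stage $n$ then $\later f,\later g$ agree up to stage $n+1$. Next, $\langle-,-\rangle$, exponentials, $+$, $\times$ and $1$ are locally non-expansive, since their stage-$n$ components only depend on stages $\le n$ of the arguments. This holds because $\langle X,Y\rangle_n$ is an end over $i\le n$, and $(Y^X)_n$ consists of tuples indexed by $i\le n$. Composing non-expansive functors with the contractive $\later$ gives local contractivity in both arguments, and in particular the required contractivity in $X$ and non-expansiveness in $Y$.

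The main obstacle is condition (3): convergence of the final sequence of $H=B^\lambda(X,-)$ for each fixed $X$. I would use criterion (2) of \Cref{rem:final-sequence}, showing that $H$ preserves monomorphisms and colimits of $\alpha$-chains of strong monos for some regular cardinal $\alpha$. Monos are preserved because $\later$, products, coproducts, $\langle\later X,-\rangle$ and $(-)^{\later X}$ are all right adjoints or pointwise and hence mono-preserving. Filtered colimits of monos are the delicate point: $\langle \later X,-\rangle$ and $(-)^{\later X}$ involve ends and products indexed by $\later X$, so they preserve $\alpha$-chain colimits of monos only for $\alpha$ exceeding the size of $X$ (stagewise and per context). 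Choosing a regular $\alpha$ larger than $\sup_{n,m}|X_n(m)|$ and larger than the hom-sets of $\fset\times\omega^\op$ gives preservation, since $\alpha$-filtered colimits commute with $<\alpha$-small limits in $\Set$. If this becomes unwieldy, a fallback is a direct stagewise argument, as in \Cref{lem:fixed-point-iteration-xTCL}. Guardedness means that stage $n$ of $H^{k}1$ stabilises once $k>n$, giving convergence in $\omega$ steps.
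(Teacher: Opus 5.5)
Your proposal is correct and follows the paper's route. The paper only asserts that $\vartopos$ is an M-category and that $B^\lambda$ meets the hypotheses of \Cref{thm:lfc-existence} ``analogous to'' $B^\untyped$; your stagewise verification fills this in (stagewise metric, contractivity coming from $\later$, criterion (2) of \Cref{rem:final-sequence}), and you then conclude via $J=1$ flatness, the $V$-pointed remark, \Cref{th:denotational-model} and \Cref{corr:adequacy}, exactly as the paper does. One small correction in condition (3): the ends defining $\langle-,-\rangle^0$ and the $\vcat$-exponentials are indexed by the countably infinite category $\fset$, so to invoke ``$\alpha$-filtered colimits commute with $\alpha$-small limits'' you must take $\alpha$ uncountable, i.e.\ exceeding the number of morphisms of $\fset$ and not merely its finite hom-sets, which you are free to do.
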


The notion of bisimilarity is similar to \xcl, except that it is now required to respect substitution: $p\sim^\lambda q$ implies $p[\vec{u}] \sim^\lambda q[\vec{u}]$ for every applicable substitution $\vec{u}$. See \cite[Sec.~5]{Goncharov23} for more details.

The locally final coalgebra
$Z \cong B(Z,Z)$ once again emerges from the inverse limit construction of
\Cref{rem:cons-lfc}. Explicitly, the object $Z$ consists of \emph{guarded
	interaction trees} with renaming, where each tree $ z \in Z_{n}(m)$ has an
additional observable in the form of a substitution action. At stages $n + 1 \in \omega$ and $m \in \fset$, an element $z
	\in Z_{n+1}(m)$ consists  of (i) a family of substitutions
$(\langle Z_{i},Z_{i}\rangle^{0})_{i \leq n}$ compatible with restriction and
(ii) either another tree $z' \in Z_{n}(m)$ or a tuple of elements of exponentials $(f_{0} \c
	Z_{0}^{Z_{0}}(m),\dots,f_{n} \c Z_{n}^{Z_{n}}(m))$ compatible with restriction.

\section{Conclusions and Future Work}
We have presented a general bialgebraic approach to denotational semantics in
the higher-order abstract GSOS framework, extending the principles of Turi and
Plotkin's work on first-order abstract GSOS~\cite{Turi97} to a
higher-order setting. Our denotational models are based on the novel notion of
\emph{locally final} coalgebra for a behaviour bifunctor. We have shown that all
such models are adequate w.r.t.\ the operational semantics and given general criteria for
their existence and uniqueness.

One important message of our work is that the denotational semantics for
higher-order languages need not be conceptually isolated from the operational
semantics; rather, they may \emph{follow} from the operational semantics. This
is a fresh perspective, and we are thus compelled to further apply our theory to
more complex examples, such as $\pi$-calculi or, following recent advances in substructural abstract syntax and substitution~\cite{fr25, lenke2026unifiedtreatmentsubstitutionpresheaves},
substructural (e.g.~linear) varieties of the $\lambda$-calculus.

While this paper focuses exclusively on computational adequacy of the
constructed denotational $\rho$-models, we have not explored under which
conditions these are also sound, which is the missing step to achieve full
abstraction. In the first-order setting, if the behaviour functor preserves weak
pulbacks, full abstraction is guaranteed~\cite{Turi97}. The corresponding result
for higher-order languages is left as future work.

Higher-order abstract GSOS is tailored to small-step operational semantics, and our adequacy result applies to notions of strong applicative bisimilarity. A future direction is to study denotational models of \emph{abstract higher-order specifications}, an extension of higher-order abstract GSOS capturing big-step semantics~\cite{Urbat26}. In this way, we expect that more coarse-grained forms of semantics that regard $\beta$-reductions as unobservable and focus on functional behaviour -- with \emph{weak} applicative bisimilarity~\cite{Abramsky90} as behavioural equivalence -- emerge in a general bialgebraic setting as well.

Our present construction method for locally final coalgebras builds on
M-categories and yields denotational models in the spirit of synthetic guarded
domain theory~\cite{Birkedal10,Birkedal12,Frumin24}. An alternative approach to
fixed point constructions of functors is given by \emph{CPO-enriched}
categories, introduced by \citet{SmythPlotkin1982} (see
also~\cite[Sec.~5]{Adamek_Milius_Moss_2025}) as a categorical foundation for
constructing the more traditional, order-theoretic denotational domains for
higher-order languages~\cite{scott72}. We aim to develop a version of our
\Cref{thm:lfc-existence} for CPO-enriched categories, which should provide the
means to study such models in the context of our abstract bialgebraic setting.
Similarly, {categories of sheaves over a fixed, complete Heyting algebra
		with a well-founded basis} are an alternative to M-categories for
fixed point constructions~\cite{Birkedal12}. Adapting \Cref{thm:lfc-existence} to
such categories could bring about denotational domains obtained via
\emph{transfinite} guarded recursion.

Our treatment of $\xnccl$
(\Cref{thm:xntcl-lfc}) is conceptually related to the work of
\citet{MogelbergVezzosi21}, who construct two guarded recursive powerdomains in
Clocked Cubical Type Theory yielding adequate denotational models for the untyped
$\lambda$-calculus with finite non-determinism. Whereas their models target applicative simulation in may- and must-convergence
variants, our locally final coalgebra for $B^\ndc$ targets strong applicative
bisimilarity. A further
investigation of the connections between these denotational models is left for future work.




\begin{acks}
	Sergey Goncharov, Stelios Tsampas, and Stefano Volpe acknowledge funding by
	the Deutsche Forschungsgemeinschaft (DFG, German Research Foundation) -
	project number 527481841. Henning Urbat is supported by Deutsche
	Forschungsgemeinschaft (DFG, German Research Foundation) – project number
	569130867.
\end{acks}

%
\appendix

\section{Omitted Proofs and Details}

\subsection*{Proof of \Cref{th:denotational-model}}
We construct the algebra structure $a_\rho$ as follows.
First, we define a family of morphisms
\begin{displaymath}
	(a^j_{\rho} : Z + \Sigma_j Z \to Z)_{j \in J}
\end{displaymath}
by well-founded recursion on $j$. Thus let $j \in J$, and assume that
$a^i_{\rho}$ is already defined for $i < j$. Put
\[
	a^{< j}_{\rho}\colon \Sigma_{< j} Z \to Z,\qquad
	a^{< j}_{\rho} \coloneq [a^i_{\rho} \cdot \inr ]_{i < j}.
\]
The object $Z+\Sigma_j Z$ can be equipped with the $B(Z,-)$-coalgebra structure
shown in \cref{fig:sigma-algebra}. We then define $a^j_{\rho}$ to be the unique $B(Z,-)$-co\-al\-ge\-bra morphism from that coalgebra to the
final $B(Z,-)$-co\-al\-ge\-bra $(Z,z)$; in other words, $a^j_\rho$ is the unique morphism making the diagram in \cref{fig:sigma-algebra} commute.
\begin{figure*}
	\centering
	\begin{tikzcd}[scale cd = .8, column sep=4em]
		{Z + \Sigma_j Z} & Z & {B(Z,Z)} \\
		{B(Z,Z)+\Sigma_j (Z \times B(Z,Z))} && {B(Z, Z + \Sigma_j Z)} \\
		{B(Z,Z)+B(Z, \Sigma^{\star}_{<j}Z + \Sigma_j \Sigma^{\star}_{<j}Z)} & {B(Z,Z)+B(Z, Z + \Sigma_j Z)} & {B(Z,Z + \Sigma_j Z)+B(Z, Z + \Sigma_j Z)}
		\arrow["{a^j_{\rho}}", dashed, from=1-1, to=1-2]
		\arrow["{z+\Sigma_j\langle \id,\, {z} \rangle}"', from=1-1, to=2-1]
		\arrow["{z \quad \cong}", from=1-2, to=1-3]
		\arrow["{\id + \rho^j_{Z}}"', from=2-1, to=3-1]
		\arrow["{B(\id,\, a^j_{\rho})}"', from=2-3, to=1-3]
		\arrow["{\id + B(\id, \, \widehat{a^{<j}_{\rho}} + \Sigma_j \widehat{a^{<j}_{\rho}})}", bend left=10, from=3-1, to=3-2]
		\arrow["{B(\id,\,\inl) + \id}", from=3-2, to=3-3]
		\arrow["{\nabla}"', from=3-3, to=2-3]
	\end{tikzcd}
	\Description{The coalgebra on $Z + \Sigma_j Z$ is:
		$Z + \Sigma_j Z \xrightarrow{[z,\Sigma_j \hat{z}]}
			B(Z,Z) + \Sigma_j B^{\infty}(Z, Z) \xrightarrow{B(Z,Z) + \rho^j_Z}
			B(Z,Z) + B(Z,\Sigma^{\star}_{<j}Z + \Sigma_j \Sigma^{\star}_{<j}Z)
			\xrightarrow{(B(Z,Z) + B(Z,(-) + \Sigma_j (-)))\widehat{a^{<j}_{\rho}}}$
		B(Z,Z) + B(Z,Z + \Sigma_j Z) \xrightarrow{B(Z,\inl)+B(Z,Z+\Sigma_j Z)}
		B(Z,Z+\Sigma_j Z) + B(Z,Z+\Sigma_j Z) \xrightarrow{\nabla_{B(Z,Z+\Sigma_j
				Z)}} B(Z,Z+\Sigma_j Z)}
	\caption{A $B(Z,-)$-coalgebra on $Z + \Sigma_j Z$. There is
		a unique coalgebra morphism $a^j_{\rho}$ from it to the final coalgebra.}
	\label{fig:sigma-algebra}
\end{figure*}

Using the morphisms $a_\rho^j$ defined above we now obtain the following
$\Sigma$-algebra structure $a_\rho$ on $Z$:
\[
	a_{\rho}\colon \Sigma Z \to Z,\qquad
	a_{\rho} \coloneq [a^j_{\rho} \cdot \inr]_{j \in J}.
\]
It remains to verify that $(Z,a_\rho,z)$ forms a $\rho$-bialgebra, that is, the diagram below commutes:
\[\begin{tikzcd}
		{\Sigma Z} & Z & {B(Z, Z)} \\
		{\Sigma (Z\times B(Z, Z))} && {B(Z, \Sigma^{\star}Z)}
		\arrow["{a_{\rho}}", from=1-1, to=1-2]
		\arrow["{\Sigma \langle\id,z\rangle}"', from=1-1, to=2-1]
		\arrow["z \quad \cong", from=1-2, to=1-3]
		\arrow["{\rho_Z}", from=2-1, to=2-3]
		\arrow["{B(\id,\, \widehat{a_{\rho}})}"', from=2-3, to=1-3]
	\end{tikzcd}\]
By definition of $a_{\rho}$ and since $\rho$ is relatively flat, this is
equivalent to showing that the outside of the diagram in
\cref{fig:denotational-model-proof} commutes for all $j\in J$.
\begin{figure*}
	\centering
	\begin{tikzcd}[column sep=-2.5em, scale cd=.8]
		{\Sigma_jZ} && {Z + \Sigma_j Z} && Z && {B(Z,Z)} \\
		&& {B(Z,Z)+\Sigma_j (Z\times B(Z,Z))} & \phantom{----}{B(Z,Z)+B(Z,Z + \Sigma_j Z)} &&& \\
		&& {B(Z,Z)+B(Z, \Sigma^{\star}_{<j}Z + \Sigma_j \Sigma^{\star}_{<j}Z)} & & \!\!\!\!\!\!{B(Z,Z + \Sigma_j Z) + B(Z,Z + \Sigma_j Z)} \\
		{\Sigma Z} & \phantom{-}{\Sigma_j (Z\times B(Z,Z))} &&&& \!\!\!\!\!\!\!\!\!\!\!{B(Z,Z + \Sigma_j Z)} \\
		&& {B(Z, \Sigma^{\star}_{<j}Z + \Sigma_j \Sigma^{\star}_{<j}Z)} && {B(Z,Z + \Sigma_j Z)} \\
		\\
		{\Sigma (Z\times B(Z,Z))} &&& {\coprod_{i \in J} B(Z,\Sigma^{\star}_{<i}Z + \Sigma_i \Sigma^{\star}_{<i}Z)} &&& {B(Z,\Sigma^{\star}Z)}
		\arrow["\inr", from=1-1, to=1-3]
		\arrow["{\inj_j}"', from=1-1, to=4-1]
		\arrow["{\Sigma_j \langle \id,z\rangle}", from=1-1, to=4-2]
		\arrow["{a^j_{\rho}}", from=1-3, to=1-5]
		\arrow["{z + \Sigma_j  \langle\id,z\rangle}"', from=1-3, to=2-3]
		\arrow["{z \quad \cong}", from=1-5, to=1-7]
		\arrow["{\id + \rho^j_Z}"', from=2-3, to=3-3]
		\arrow["{\id + B(\id, \widehat{a^{<j}_{\rho}} + \Sigma_j \widehat{a^{<j}_{\rho}})}" description, from=3-3, to=2-4]
		\arrow["{B(\id,\inl) + \id}", from=2-4, to=3-5]
		\arrow["{\nabla}", from=3-5, to=4-6]
		\arrow["{\Sigma \langle \id,z\rangle}"', from=4-1, to=7-1]
		\arrow["~\inr"' yshift=5, curve={height=-30pt}, from=4-2, to=2-3]
		\arrow["{\rho^j_Z}"', from=4-2, to=5-3]
		\arrow["{\inj_j}", from=4-2, to=7-1]
		\arrow["{B(\id,a^j_{\rho})}", from=4-6, to=1-7, xshift=5]
		\arrow["\inr"', from=5-3, to=3-3]
		\arrow["{B(\id, \widehat{a^{<j}_{\rho}} + \Sigma_j \widehat{a^{<j}_{\rho}})}", from=5-3, to=5-5]
		\arrow["{\inj_j}"', from=5-3, to=7-4]
		\arrow["{B(\id,e_{j,Z})}", from=5-3, to=7-7]
		\arrow["\inr", from=5-5, to=3-5]
		\arrow["{\id}"', from=5-5, to=4-6]
		\arrow["{[\inj_i \cdot \rho^i_Z]_{i \in J}}", from=7-1, to=7-4]
		\arrow["{[B(\id,e_{i,Z})]_{i \in J}}", from=7-4, to=7-7]
		\arrow["{B(\id,\widehat{a_{\rho}})}"', from=7-7, to=1-7, xshift=12]
	\end{tikzcd}
	\Description{}
	\caption{Main diagram for the proof of \cref{th:denotational-model}.}
	\label{fig:denotational-model-proof}
\end{figure*}
The only non-trivial cells are the upper right cell, which commutes by definition of
$a^j_{\rho}$ in \cref{fig:sigma-algebra}, and the rightmost
pentagon. By functoriality, for the latter it is enough to check commutativity of
\begin{equation}\label{eq:proof-diag}\begin{tikzcd}[column sep=8em]
		{\Sigma^{\star}_{<j}Z+\Sigma_j\Sigma^{\star}_{<j}Z} & {Z+\Sigma_jZ} \\
		{\Sigma^{\star}Z} & Z.
		\arrow["{\widehat{a^{<j}_{\rho}}+\Sigma_j \widehat{a^{<j}_{\rho}}}", from=1-1, to=1-2]
		\arrow["{e_{j,Z}}"', from=1-1, to=2-1]
		\arrow["{a^j_{\rho}}", from=1-2, to=2-2]
		\arrow["{\widehat{a_{\rho}}}", from=2-1, to=2-2]
	\end{tikzcd}\end{equation}
We consider the restriction of the diagram to the two coproduct components $\Sigma^{\star}_{<j}Z$ and $\Sigma_j\Sigma^{\star}_{<j}Z$ separately. The first component of \eqref{eq:proof-diag} is given by the outside of the following diagram:
\begin{equation}\label{eq:proof-diag-2}\begin{tikzcd}[row sep=huge]
		{\Sigma^{\star}_{<j}Z} & Z & {Z+\Sigma_j Z} \\
		{\Sigma^{\star}Z} && Z
		\arrow["{\widehat{a^{<j}_{\rho}}}", from=1-1, to=1-2]
		\arrow["{\inj^{\star}_{<j}}"', from=1-1, to=2-1]
		\arrow["\inl", from=1-2, to=1-3]
		\arrow["\id", from=1-2, to=2-3]
		\arrow["{a^j_{\rho}}", from=1-3, to=2-3]
		\arrow["{\widehat{a_{\rho}}}"', from=2-1, to=2-3]
	\end{tikzcd}\end{equation}
The right cell commutes by definition of $a_\rho^j$, as can be seen by
considering the first coproduct component of the diagram in
\cref{fig:sigma-algebra}. The left cell commutes due to the universal property
of the free algebra $\Sigmas_{<j}Z$ since all morphisms appearing in the cell
are $\Sigma_{<j}$-algebra morphisms whose restriction to $Z$ is given by the identity.


Similarly, the second component of \eqref{eq:proof-diag} is given by the
outside of the following diagram:

\[\begin{tikzcd}
		{\Sigma_j \Sigma^{\star}_{<j}Z} && {\Sigma_j Z} \\
		{\Sigma_j \Sigma^{\star}Z} & {\Sigma Z} & {Z+\Sigma_j Z} \\
		{\Sigma \Sigma^{\star}Z} & {\Sigma^{\star}Z} & Z
		\arrow["{\Sigma_j \widehat{a^{<j}_{\rho}}}", from=1-1, to=1-3]
		\arrow["{\Sigma_j \inj^{\star}_{<j}}"', from=1-1, to=2-1]
		\arrow["{\inj_j}", from=1-3, to=2-2]
		\arrow["\inr", from=1-3, to=2-3]
		\arrow["{\Sigma_j \widehat{a_{\rho}}}", from=2-1, to=1-3]
		\arrow["{\inj_j}"', from=2-1, to=3-1]
		\arrow["{a_{\rho}}", from=2-2, to=3-3]
		\arrow["{a^j_{\rho}}", from=2-3, to=3-3]
		\arrow["{\Sigma \widehat{a_{\rho}}}", from=3-1, to=2-2]
		\arrow["{\ini_Z}"', from=3-1, to=3-2]
		\arrow["{\widehat{a_{\rho}}}"', from=3-2, to=3-3]
	\end{tikzcd}\]
Commutativity of the upper triangle follows from commutativity of the left cell
of \eqref{eq:proof-diag-2}. The lower cell commutes by definition of
$\widehat{(-)}$, the central cell by naturality of $\inj_j$, and the right cell
by definition of $a_\rho$.
\qed

\subsection*{Proof of \Cref{thm:lfc-existence}}
Before proving the theorem, we need to establish a general property of stable M-categories. Stability ensures that limits of chains interact with distances of morphisms in an expected way:
\begin{lemma}\label{lem:stable-m-cat-limits}
	Let $\C$ be a stable $M$-category with limits of chains, and let $f,g\colon D\to D'$ be two natural transformations between $\alpha^\op$-chains, with induced morphisms $\lim f,\lim g\colon \lim D\to \lim D'$. Then
	\[ d(\lim f, \lim g) \leq \sup_{\beta<\alpha} d(f_\beta,g_\beta).  \]
\end{lemma}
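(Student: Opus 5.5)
The plan is to apply stability to the family of limit projections of $\lim D'$, which is jointly monic. Write $\pi_\beta\colon \lim D\to D_\beta$ and $\pi'_\beta\colon \lim D'\to D'_\beta$ ($\beta<\alpha$) for the limit projections. By construction of the induced morphisms we have
\[ \pi'_\beta\cdot \lim f = f_\beta\cdot \pi_\beta \qquad\text{and}\qquad \pi'_\beta\cdot \lim g = g_\beta\cdot \pi_\beta \qquad (\beta<\alpha). \]

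First I check that $(\pi'_\beta)_{\beta<\alpha}$ is jointly monic in the sense of the footnote. If $u,v\colon A\to\lim D'$ satisfy $\pi'_\beta\cdot u=\pi'_\beta\cdot v$ for all $\beta$, then $u$ and $v$ are two factorizations of the same cone through the limit, so $u=v$ by its universal property. The family is also inhabited, since $\alpha>0$ means there is at least the index $\beta=0$.

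Stability then gives
\[ d(\lim f,\lim g)=\sup_{\beta<\alpha} d(\pi'_\beta\cdot\lim f,\ \pi'_\beta\cdot \lim g)=\sup_{\beta<\alpha} d(f_\beta\cdot\pi_\beta,\ g_\beta\cdot \pi_\beta). \]
Non-expansiveness of composition bounds each term:
\[ d(f_\beta\cdot\pi_\beta,\ g_\beta\cdot\pi_\beta)\le \max\{d(f_\beta,g_\beta),\ d(\pi_\beta,\pi_\beta)\}=d(f_\beta,g_\beta), \]
because $d(\pi_\beta,\pi_\beta)=0$. Taking the supremum over $\beta<\alpha$ yields the claim.

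The only step needing care is the joint monicity of the projections, including inhabitedness of the index set. Everything after that is a direct calculation.
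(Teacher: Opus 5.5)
Your proof is correct and takes the same approach as the paper. It applies stability to the jointly monic limit projections $\pi'_\beta$ of $\lim D'$, rewrites via $\pi'_\beta\cdot\lim f = f_\beta\cdot\pi_\beta$, and bounds each term by non-expansiveness of composition; your explicit check that the projection family is jointly monic and inhabited (using $\alpha>0$) fills in a detail the paper leaves implicit.
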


\begin{proof}
	Let  $\pi_\beta\colon \lim D\to D_\beta$ and $\pi_\beta'\colon \lim D'\to D'_\beta$ ($\beta<\alpha$) denote the limit projections. Since the morphisms $\pi'_\beta$  are jointly monic, we have
	\begin{align*}
		d(\lim f,\lim g) & = \sup_{\beta<\alpha} d(\pi_\beta'\cdot \lim f,\, \pi_\beta'\cdot \lim g) & \text{$\C$ stable}                         \\
		                 & = \sup_{\beta<\alpha} d(f_\beta\cdot \pi_\beta,\, g_\beta\cdot \pi_\beta) & \text{def.~$\lim f$, $\lim g$}             \\
		                 & \leq  \sup_{\beta<\alpha} d(f_\beta,\, g_\beta)                           & \text{$\C$ an M-category.} \tag*{\qedhere}
	\end{align*}
\end{proof}
Now we turn to the actual proof of \Cref{thm:lfc-existence}. The strategy is to apply \Cref{th:m-category-fixpoint} to a contravariant functor $F\colon \C^\op\to \C$ whose fixed points correspond precisely to the locally final coalgebras for $B$. We proceed in several steps.
\begin{enumerate}[label=(\alph*)]
	\item\label{pf1} We first construct the desired functor $F$. Condition \ref{thm:ass3} implies that for each object $X\in \C$ the endofunctor $B(X,-)$ has a final coalgebra, which we denote by
	      \[ \tau_X\colon FX \xto{\cong} B(X,FX). \]
	      The object assignment $X\mapsto FX$ extends to a contravariant functor \[F\colon \C^\op\to \C\] sending $f\colon X\to Y$ to the unique $B(X,-)$ coalgebra morphism
	      from $(FY,B(f,\id)\cdot \tau_Y)$ to $(FX, \tau_X)$:
	      \begin{equation}\label{eq:Ff}
		      \begin{tikzcd}
			      FY \ar{r}{\tau_Y} \ar[dashed]{d}[swap]{Ff} & B(Y,FY) \ar{r}{B(f,\id)} & B(X,FY) \ar{d}{B(\id,Ff)} \\
			      FX \ar{rr}{\tau_X} && B(X,FX)
		      \end{tikzcd}
	      \end{equation}
	      Functoriality of $F$ is clear by the uniqueness of $Ff$.
	\item\label{pf2} We prove that the functor $F$ is locally contractive. For every object $X\in \C$, let $D^X$ denote the final sequence of the endofunctor $B(X,-)$. Every morphism $f\colon X\to Y$ induces a natural transformation $\bar f\colon D^Y\to D^X$ whose components are defined by transfinite recursion as follows:
	      \begin{enumerate}[label=(\roman*)]
		      \item $\bar{f}_0$ is given by $D^Y_0 = 1 \xto{\id} 1 = D^X_0$.
		      \item $\bar{f}_{\alpha+1}$ is given by $D^Y_{\alpha+1} = B(Y,D^Y_\alpha) \xrightarrow{B(f,\bar{f}_\alpha)} B(X,D^X_\alpha)=D^X_{\alpha+1}$.
		      \item If $\alpha$ is a limit ordinal, $\bar{f}_\alpha$ is the unique morphism such
		            that the diagrams below commute:
		            \[
			            \begin{tikzcd}
				            D^Y_\beta \ar{d}[swap]{\bar{f}_\beta} & D^Y_\alpha \ar{l}[swap]{D^Y_{\alpha,\beta}} \ar[dashed]{d}{\bar{f}_\alpha}\\
				            D^X_\beta & D^X_\alpha \ar{l}[swap]{D^X_{\alpha,\beta}}
			            \end{tikzcd} \qquad (\beta<\alpha)
		            \]
	      \end{enumerate}
	      Let $c\in [0,1)$ be the contraction factor witnessing that $B$ is locally contractive in the first argument. We claim that, for all $f,g\colon X\to Y$ and all ordinals $\alpha$,
	      \begin{equation}\label{eq:d-prop}
		      d(\bar{f}_\alpha, \bar{g}_\alpha) \leq c\cdot d(f,g).
	      \end{equation}
	      The proof is by transfinite induction on $\alpha$:
	      \begin{enumerate}[label=(\roman*)]
		      \item For $\alpha=0$ we have $\bar{f}_\alpha = \bar{g}_\alpha = \id_1$, so
		            $d(\bar{f}_\alpha,\bar{g}_\alpha)=0$ and \eqref{eq:d-prop} holds trivially.
		      \item For a successor ordinal $\alpha+1$, we have
		            \begin{align*}
			            d(\bar{f}_{\alpha+1},\bar{g}_{\alpha+1}) & = d(B(f,\bar{f}_\alpha), B(g,\bar{g}_\alpha))                                           & \text{def.\ $\bar{f}_{\alpha+1}$, $\bar{g}_{\alpha+1}$} \\
			                                                     & = d(B(\id,\bar{f}_\alpha)\cdot B(f,\id), B(\id,\bar{g}_\alpha)\cdot B(g,\id))           & \text{functoriality of $B$}                             \\
			                                                     & \leq \max \{ d(B(\id,\bar{f}_\alpha), B(\id,\bar{g}_\alpha)),  d(B(f,\id), B(g,\id)) \} & \text{$\C$ an M-category}                               \\
			                                                     & \leq \max \{ d(\bar{f}_\alpha, \bar{g}_\alpha), d(B(f,\id), B(g,\id)) \}                & \text{$B$ non-exp.\ in 2nd arg.}                        \\
			                                                     & \leq c\cdot d(f,g).                                                                     &
		            \end{align*}
		            In the final step we use that $d(\bar{f}_\alpha, \bar{g}_\alpha)\leq c\cdot d(f,g)$ by induction, and $d(B(f,\id), B(g,\id)\leq c\cdot d(f,g)$ because $B$ is locally contractive with factor $c$ in the first argument.
		      \item For a limit ordinal $\alpha$, we have
		            \begin{align*}
			            d(\bar{f}_{\alpha},\bar{g}_{\alpha}) & \leq \sup_{\beta<\alpha} d(\bar{f}_\beta,\bar{g}_\beta) & \text{\Cref{lem:stable-m-cat-limits}} \\
			                                                 & \leq c\cdot d(f,g)                                      & \text{by induction}.
		            \end{align*}
	      \end{enumerate}

	      By condition \ref{thm:ass3} there exists an ordinal $\alpha$ such that the
	      final sequences $D^X$ and $D^Y$ converge in $\alpha$ steps. We claim that
	      $Ff=\bar{f}_\alpha$. To see this, consider the diagram below:
	      \begin{equation}
		      \begin{tikzcd}[column sep=4em]
			      FY \ar[equals]{r} \ar{d}[swap]{\bar{f}_\alpha}\ar[shiftarr={yshift=20}]{rr}{\tau_Y} & D^Y_\alpha \ar{r}{(D^X_{\alpha+1,\alpha})^{-1}}  & B(Y,D^Y_\alpha) \ar{dr}{\bar{f}_{\alpha+1}} \ar{r}{B(f,\id)} & B(X,D^Y_\alpha) \ar{d}{B(\id,F\bar{f}_\alpha)} \\
			      FX \ar[equals]{r} \ar[shiftarr={yshift=-20}]{rrr}{\tau_Y}  & D^X_\alpha \ar{rr}{(D^X_{\alpha+1,\alpha})^{-1}} && B(X,D^X_\alpha)
		      \end{tikzcd}
	      \end{equation}
	      The upper cell commutes because the final sequence $D^Y$ converges in
	      $\alpha$ steps, which means that $(D^Y_{\alpha+1,\alpha})^{-1}\colon
		      D_\alpha^Y\to B(Y,D_\alpha^Y)$ is the final coalgebra $(FY,\tau_Y)$ for
	      $B(Y,-)$. Analogously for the lower cell. The central cell commutes
	      because $\bar{f}$ is a natural transformation, and the right-hand cell by
	      definition of $\bar{f}_{\alpha+1}$. Thus the outside diagram commutes, and
	      since by definition $Ff$ is the unique morphism making the outside commute, we conclude that $Ff=\bar{f}_\alpha$. Analogously, $Fg=\bar{g}_\alpha$. Thus
	      \begin{align*}
		      d(Ff,Fg) = & \; d(\bar{f}_\alpha,\bar{g}_\alpha) & \text{as shown above}        \\
		      \leq       & \; c\cdot d(f,g)                    & \text{by \eqref{eq:d-prop}},
	      \end{align*}
	      which proves that $F$ is locally contractive.
	\item\label{pf3} The hom-set $\C(1,F1)$ is inhabited: there exists a coalgebra $1\to B(1,1)$ by condition \ref{thm:ass2}, and the unique $B(1,-)$-coalgebra morphism to the final $B(1,-)$-coalgebra is a morphism $1\to F1$ in $\C$.
	\item\label{pf4} By parts \ref{pf2} and \ref{pf3} the conditions of the fixed point theorem (\Cref{th:m-category-fixpoint}) are satisfied. Thus $F$ has a fixed point
	      \[ i\colon FZ\cong Z. \]
	      It follows that the higher-order coalgebra
	      \begin{equation}\label{eq:ho-coalg} \begin{tikzcd}  Z \ar{r}{i^{-1}}[swap]{\cong} & FZ \ar{r}{\tau_Z}[swap]{\cong} & B(Z,FZ)  \ar{r}{B(\id,i)}[swap]{\cong} &  B(Z,Z) \end{tikzcd} \end{equation}
	      is a final $B(Z,-)$-coalgebra (since the coalgebra $(FZ,\tau_Z)$ is by definition, and the coalgebra \eqref{eq:ho-coalg} is isomorphic to it by construction). We thus conclude that \eqref{eq:ho-coalg} is a locally final coalgebra.
	\item\label{pf5} It remains to prove the uniqueness of locally final
	      coalgebras, assuming that hom-sets in $\C$ are inhabited. This is immediate from the uniqueness statement of \Cref{th:m-category-fixpoint} once we show that every locally final coalgebra
	      \[z\colon Z\xto{\cong} B(Z,Z) \]
	      is a fixed point of $F$. But this is clear: since both $(Z,z)$ and $(FZ,\tau_Z)$
	      are final $B(Z,-)$-coalgebras, and final coalgebras are unique up to
	      isomorphism, we have $FZ\cong Z$. \qedhere
\end{enumerate}



\subsection*{Proof of \Cref{lem:thm-cond-set-ty}}
\begin{enumerate}
	\item We first prove that $B$ is locally non-expansive in the second argument: for
	      all $f,g\colon Y\to Y'$,
	      \[ d(B(\id,f),B(\id,g))\leq d(f,g),\]
	      or equivalently, for all $n\in \Nat$,
	      \begin{equation}\label{eq:pf-goal-set-ty-1} d(f,g)\leq 2^{-n} \implies d(B(\id,f),B(\id,g)) \leq 2^{-n}. \end{equation}
	      Thus let $d(f,g)\leq 2^{-n}$, that is,
	      \begin{equation}\label{eq:pf-taug} f_\tau=g_\tau \qquad \text{for all $\tau\in \Ty$ with $|\tau|\leq n$}, \end{equation}
	      and let $\tau$ be a type of complexity $|\tau|\leq n$. If $\tau=\utype$, we have
	      \[ B_\utype(\id,f)=f_\tau+\id \qquad\text{and}\qquad B_{\utype}(\id, g)=g_\tau+\id \]
	      by definition of $B_\utype$. If $\tau=\arty{\tau_1}{\tau_2}$, we have
	      \[B_{\arty{\tau_1}{\tau_2}}(\id,f) = f_{\arty{\tau_1}{\tau_2}}  + f_{\tau_2}\cdot (-) \qquad\text{and}\qquad B_{\arty{\tau_1}{\tau_2}}(\id,g) = g_{\arty{\tau_1}{\tau_2}}  + g_{\tau_2}\cdot (-) \]
	      Thus $B_\tau(\id,f)= B_\tau(\id,g)$ by \eqref{eq:pf-taug} because $|\tau_2|\leq
		      n$. This proves $d(B(\id,f),B(\id,g)) \leq 2^{-n}$, so
	      \eqref{eq:pf-goal-set-ty-1} holds.

	      Similarly, we prove that $B$ is locally contractive in the first argument, with
	      contraction factor $\frac{1}{2}$: for all $f,g\colon X\to X'$,
	      \[ d(B(f,\id),B(g,\id))\leq \frac{1}{2} \cdot d(f,g),\]
	      or equivalently, for all $n\in \Nat$,
	      \begin{equation}\label{eq:pf-goal-set-ty} d(f,g)\leq 2^{-n} \implies d(B(f,\id),B(g,\id)) \leq 2^{-(n+1)}. \end{equation}
	      Thus let $d(f,g)\leq 2^{-n}$, that is,
	      \begin{equation}\label{eq:pf-ftau} f_\tau=g_\tau \qquad \text{for all $\tau\in \Ty$ with $|\tau|\leq n$}. \end{equation}
	      and let $\tau$ be a type of complexity $|\tau|\leq n+1$. If $\tau=\utype$, we have
	      \[ B_\utype(f,\id)=B_{\utype}(g,\id)=\id \]
	      by definition of $B_\utype$. If $\tau=\arty{\tau_1}{\tau_2}$, we have
	      $f_{\tau_1}=g_{\tau_1}$ by \eqref{eq:pf-ftau} because $|\tau_1|\leq n$, and
	      therefore
	      \[B_{\arty{\tau_1}{\tau_2}}(f,\id) = \id +  (-)\cdot f_{\tau_1} = \id + (-)\cdot g_{\tau_1}= B_{\arty{\tau_1}{\tau_2}}(g,\id)\]
	      by definition of $B_{\arty{\tau_1}{\tau_2}}$. We have thus shown that
	      $d(B(f,\id),B(g,\id)) \leq 2^{-(n+1)}$, proving \eqref{eq:pf-goal-set-ty}.
	\item This statement is clear because $B_\tau(1,1)\neq \emptyset$ for all $\tau\in
		      \Ty$.
	\item The category $\Set^\Ty$ is locally ($\omega$-)presentable, being a presheaf
	      category. Moreover, the functor $B(X,-)$ clearly preserves monomorphisms
	      (sortwise injective maps), and it preserves colimits of $\alpha$-chains for
	      every regular cardinal~$\alpha$ that is greater than any $|X_\tau|$ for
	      $\tau\in \Ty$. The latter follows from the definition \eqref{eq:beh}
	      of $B$ together with the fact that colimits of $\alpha$-chains commute with
	      coproducts in any category, and with products of arity less than $\alpha$ in
	      any locally $\alpha$-presentable category~\cite[Prop.~1.59]{AdamekRosicky1994}.
	      Therefore, by \Cref{rem:final-sequence}, the final sequence of $B(X,-)$
	      converges. \qedhere
\end{enumerate}

\subsection*{Proof of \Cref{lem:fixed-point-iteration-xTCL}}
The proof is by induction on $n$. For $n=0$ the statement is vacuous because there is no type of complexity $0$. For the induction step, let $n\in \Nat$; we are to prove that
\begin{equation}\label{eq:pf-nub}
	(F^{n+1}1)_\tau = Z_\tau \qquad \text{for all $\tau\in \Ty$ with $n+1\geq |\tau|$}.
\end{equation}
We prove this by an inner induction on the complexity of $\tau$. Note that $FX=\nu B(X,-)$ is given by
\[ (FX)_\utype = \nu \bar{B}_1 \qquad\text{and}\qquad (FX)_{\arty{\tau_1}{\tau_2}} = \nu\bar{B}_{{(FX)_{\tau_2}}^{X_{\tau_1}}},   \]
where $\bar B_A Y = Y+A$ on $\Set$. Therefore we get
\[ (F^{n+1}1)_\utype = (FF^n1)_\utype = \nu \bar{B}_1 = Z_\utype, \]
and
\[ (F^{n+1}1)_{\arty{\tau_1}{\tau_2}} = (FF^n1)_{\arty{\tau_1}{\tau_2}} = \nu\bar{B}_{{(F^{n+1}1)_{\tau_2}}^{(F^n1)_{\tau_1}}} = \nu\bar{B}_{Z_{\tau_2}^{Z_{\tau_1}}} = Z_{\arty{\tau_1}{\tau_2}}.  \]
In the penultimate step, we use that $(F^n1)_{\tau_1}=Z_{\tau_1}$ by the outer
induction on $n$  (note that $n\geq |\tau_1|$ because $n+1\geq |\tau| =
	|\arty{\tau_1}{\tau_2}|$), and $(F^{n+1}1)_{\tau_2} = Z_{\tau_2}$ by the inner induction
on $\tau$. \qed

\subsection*{Proof of \Cref{thm:xptcl-lfc}}
We check that the functor $B^\prob$ satisfies the conditions of \Cref{th:m-category-fixpoint}:
\begin{enumerate}
	\item\label{thm:ass1-set-ty-prob} $B^\prob$ is locally contractive in the first and locally non-expansive in the second argument;
	\item\label{thm:ass2-set-ty-prob} The hom-set $\Set^\Ty(1,B^\prob(1,1))$ is inhabited.
	\item\label{thm:ass3-set-ty-prob} For each $X\in \Set^\Ty$ the final sequence of the endofunctor $B^\prob(X,-)$ converges.
\end{enumerate}
These three statements are immediate from the corresponding properties of $B$ (\Cref{lem:thm-cond-set-ty}) together with the observation that
\begin{enumerate}\item  the functor $\Df^\Ty$ is locally non-expansive (as is every functor of the form $F^\Ty$);
	\item $\Df X\neq \emptyset$ whenever $X\neq \emptyset$;
	\item  the functor $\Df$ preserves monomorphisms (injections) and colimits of $\omega$-chains.
\end{enumerate}
The uniqueness statement is shown like in proof of \Cref{thm:xcl-lfc} since $B^\prob$ again restricts to $\Set_{+}^\Ty$. \qed

\subsection*{Proof of \Cref{thm:xcl-lfc}}
We only need to show that the functor $B^\untyped$ satisfies the three conditions of \Cref{thm:lfc-existence}:
\begin{enumerate}
	\item\label{thm:ass1-set-omega} $B^\untyped$ is locally contractive. This can be shown like for the typed behaviour functor \eqref{eq:beh} for \stsc. Alternatively, one can observe that $B^\untyped$ is \emph{locally contractive in the internal sense of
		      $\topos$}~\cite[Definition 2.12]{Birkedal12}. Our metric is the same
	      as the one given by \cite[Proposition 5.1]{Birkedal12}.
	      Therefore, by \cite[Proposition 5.2]{Birkedal12}, $B^\untyped$ is
	      locally contractive.
	\item\label{thm:ass2-set-omega} The hom-set $\topos(1,B^\untyped(1,1))$ is inhabited. This is witnessed by the morphism \[1 \xrightarrow{\inl} 1 + (\later 1)^{\later 1} =
		      \later 1 + (\later 1)^{\later 1} = B^\untyped(1,1).\]
	\item\label{thm:ass3-set-omega} For each $X\in \topos$ the final sequence of the
	      endofunctor $B^\untyped(X,-)$ converges. This follows from the second sufficient criterion mentioned in
	      \cref{rem:final-sequence}. As $\topos$ is a presheaf category, it is locally
	      ($\omega$-)presentable. Additionally, $\later$ preserves both
	      monomorphisms and colimits. This, together with the observations made
	      in the proof of \cref{lem:thm-cond-set-ty}, proves that for fixed $X \in
		      \topos$, the endofunctor $B^\untyped(X,-)$ preserves monomorphisms and colimits of
	      $\alpha$-chains for every regular cardinal $\alpha$ that is greater than any
	      $|X_n|$ for $n < \omega$. (For the latter, note also that colimits in presheaf categories are formed at the level of underlying sets.) \qedhere
\end{enumerate}

\cref{thm:lfc-existence} hence applies (except for the uniqueness statement,
since not all hom-sets of $\topos$ are inhabited).

\subsection*{Proof of \Cref{thm:xntcl-lfc}}
We only need to show that the functor $B^\ndc$ satisfies the three conditions of \Cref{thm:lfc-existence}:
\begin{enumerate}
	\item\label{thm:ass1-set-nondet} $B^\ndc$ is locally contractive in the first and locally non-expansive in the second argument.
	\item\label{thm:ass2-set-nondet} The hom-set $\topos(1,B^\ndc(1,1))$ is inhabited.
	\item\label{thm:ass3-set-nondet} For each $X\in \topos$ the final sequence of the endofunctor $B^\ndc(X,-)$ converges.
\end{enumerate}
These three statements are immediate from the corresponding properties of $B^\ndc$ together with the observation that
\begin{enumerate}
	\item  the functor $\Powf$ is locally non-expansive;
	\item $\Powf X\neq \emptyset$ whenever $X\neq \emptyset$;
	\item the functor $\Powf$ preserves monomorphisms (injections) and colimits of $\omega$-chains.\qedhere
\end{enumerate}

\bibliographystyle{ACM-Reference-Format}
\bibliography{works}

@preamble{{\providecommand{\noopsort}[1]{}}}

@string{acm = "ACM"}

@string{elsevier = "Elsevier"}

@string{springer = "Springer"}

@article{gmstu26_jfp,
  title = {Higher-Order Bialgebraic Semantics},
  author = {Sergey Goncharov and Stefan Milius and Lutz Schröder and Stelios
            Tsampas and Henning Urbat},
  url = {https://jfp.episciences.org/17738},
  doi = {10.46298/jfp.17738},
  journal = {Journal of Functional Programming},
  issn = {1469-7653},
  volume = {Volume 36},
  eid = 1,
  year = {2026},
  month = {Jun},
}

@article{debruijn1972nameless,
  author = {N. G. de Bruijn},
  title = {Lambda Calculus Notation with Nameless Dummies: A Tool for Automatic
           Formula Manipulation, with Application to the Church–Rosser Theorem},
  journal = {Indagationes Mathematicae (Proceedings)},
  volume = {75},
  number = {5},
  pages = {381--392},
  year = {1972},
  publisher = {North-Holland},
  doi = {10.1016/1385-7258(72)90034-0},
}

@article{vr99,
  author = {Erik P. de Vink and Jan J. M. M. Rutten},
  title = {Bisimulation for Probabilistic Transition Systems: {A} Coalgebraic
           Approach},
  journal = {Theor. Comput. Sci.},
  volume = {221},
  number = {1-2},
  pages = {271--293},
  year = {1999},
  doi = {10.1016/S0304-3975(99)00035-3},
}

@inproceedings{fr25,
  author = {Marcelo Fiore and Sanjiv Ranchod},
  title = {Substructural Abstract Syntax with Variable Binding and
           Single-Variable Substitution},
  booktitle = {40th Annual {ACM/IEEE} Symposium on Logic in Computer Science, {
               LICS} 2025, Singapore, June 23-26, 2025},
  pages = {196--208},
  publisher = {{IEEE}},
  year = {2025},
  url = {https://doi.org/10.1109/LICS65433.2025.00022},
  doi = {10.1109/LICS65433.2025.00022},
  bibsource = {dblp computer science bibliography, https://dblp.org},
}

@article{clp98,
  author = {Dezani-Ciancaglini, Mariangiola and de'Liguoro, Ugo and Piperno,
            Adolfo},
  title = {A Filter Model for Concurrent Lambda-Calculus},
  journal = {SIAM Journal on Computing},
  volume = {27},
  number = {5},
  pages = {1376-1419},
  year = {1998},
  doi = {10.1137/S0097539794275860},
}

@inproceedings{scott72,
  author = "Scott, Dana",
  editor = "Lawvere, F. W.",
  title = "Continuous lattices",
  booktitle = "Toposes, Algebraic Geometry and Logic",
  year = "1972",
  publisher = "Springer Berlin Heidelberg",
  address = "Berlin, Heidelberg",
  pages = "97--136",
  isbn = "978-3-540-37609-5",
}

@incollection{SmythPlotkin1982,
  author = {M. B. Smyth and G. D. Plotkin},
  title = {The Category-Theoretic Solution of Recursive Domain Equations},
  booktitle = {SIAM Journal on Computing},
  year = {1982},
  volume = {11},
  number = {4},
  pages = {761--783},
  doi = {10.1137/0211062},
  publisher = {Society for Industrial and Applied Mathematics},
}

@inproceedings{MogelbergVezzosi21,
  author = {M{\o}gelberg, Rasmus Ejlers and Vezzosi, Andrea},
  editor = {Ana Sokolova},
  title = {Two Guarded Recursive Powerdomains for Applicative Simulation},
  booktitle = {Proceedings 37th Conference on Mathematical Foundations of
               Programming Semantics, {MFPS} 2021, Hybrid: Salzburg, Austria and
               Online, 30th August - 2nd September, 2021},
  series = {{EPTCS}},
  pages = {200--217},
  year = {2021},
  doi = {10.4204/EPTCS.351.13},
  url = {https://doi.org/10.4204/EPTCS.351.13},
}

@inproceedings{mp16,
  author = {M\o{}gelberg, Rasmus Ejlers and Paviotti, Marco},
  title = {Denotational semantics of recursive types in synthetic guarded domain
           theory},
  year = {2016},
  isbn = {9781450343916},
  publisher = {Association for Computing Machinery},
  address = {New York, NY, USA},
  doi = {10.1145/2933575.2934516},
  pages = {317–326},
  numpages = {10},
  location = {New York, NY, USA},
  series = {LICS '16},
}

@inproceedings{bbm14,
  author = "Bizjak, Ale{\v{s}} and Birkedal, Lars and Miculan, Marino",
  editor = "Dowek, Gilles",
  title = "A Model of Countable Nondeterminism in Guarded Type Theory",
  booktitle = "Rewriting and Typed Lambda Calculi",
  year = "2014",
  publisher = "Springer International Publishing",
  address = "Cham",
  pages = "108--123",
  isbn = "978-3-319-08918-8",
  doi = "10.1007/978-3-319-08918-8_8",
}

@inproceedings{nakano2000,
  author = {Nakano, H.},
  booktitle = {Proceedings Fifteenth Annual IEEE Symposium on Logic in Computer
               Science (Cat. No.99CB36332)},
  title = {A modality for recursion},
  year = {2000},
  volume = {},
  number = {},
  pages = {255-266},
  doi = {10.1109/LICS.2000.855774},
}

@techreport{Plotkin1981SOS,
  author = {Gordon D. Plotkin},
  title = {A Structural Approach to Operational Semantics},
  institution = {Aarhus University, Computer Science Department},
  number = {DAIMI FN-19},
  year = {1981},
  address = {Aarhus, Denmark},
  note = {Reprinted in Journal of Logic and Algebraic Programming, 2004},
  doi = {10.1016/j.jlap.2004.05.001},
}

@book{winskel1993formal,
  title = {The formal semantics of programming languages: an introduction},
  author = {Winskel, Glynn},
  year = {1993},
  publisher = {MIT press},
}

@article{LarsenSkou1991,
  author = {Kim G. Larsen and Arne Skou},
  title = {Bisimulation through Probabilistic Testing},
  journal = {Information and Computation},
  volume = {94},
  number = {1},
  pages = {1--28},
  year = {1991},
  doi = {10.1016/0890-5401(91)90030-6},
}

@inbook{Abramsky90,
  author = {Abramsky, Samson},
  title = {The lazy lambda calculus},
  year = {1990},
  isbn = {0201172364},
  publisher = {Addison-Wesley Longman Publishing Co., Inc.},
  address = {USA},
  booktitle = {Research Topics in Functional Programming},
  pages = {65–116},
  numpages = {52},
}

@book{Adamek_Milius_Moss_2025,
  place = {Cambridge},
  series = {Cambridge Tracts in Theoretical Computer Science},
  title = {Initial Algebras and Terminal Coalgebras: The Theory of Fixed Points
           of Functors},
  publisher = {Cambridge University Press},
  author = {Adámek, Jiří and Milius, Stefan and Moss, Lawrence S.},
  year = {2025},
  collection = {Cambridge Tracts in Theoretical Computer Science},
}

@book{AdamekRosicky1994,
  author = {Ji{\v r}{\'\i} Ad{\'a}mek and Ji{\v r}{\'\i} Rosick{\'y}},
  title = {Locally Presentable and Accessible Categories},
  series = {Cambridge Tracts in Mathematics},
  volume = {189},
  publisher = {Cambridge University Press},
  address = {Cambridge},
  year = {1994},
  isbn = {9780511600579},
}

@book{Awodey10,
  address = {USA},
  author = {Awodey, Steve},
  edition = {2nd},
  isbn = {0199237182},
  publisher = {Oxford University Press, Inc.},
  title = {Category Theory},
  year = {2010},
}

@article{Birkedal10,
  author = {Birkedal, Lars and St{\o}vring, Kristian and Thamsborg, Jacob},
  doi = {10.1016/j.tcs.2010.07.010},
  issue = {47},
  journal = {Theoretical Computer Science},
  language = {en},
  month = {10},
  pages = {4102--4122},
  publisher = {Elsevier BV},
  title = {The category-theoretic solution of recursive metric-space equations},
  url = {https://doi.org/10.1016/j.tcs.2010.07.010},
  volume = {411},
  year = {2010},
}

@article{Birkedal12,
  author = {Birkedal, Lars and M{\o}gelberg, Rasmus Ejlers and Schwinghammer,
            Jan and St{\o}vring, Kristian},
  date = {2012-10-03},
  doi = {10.2168/LMCS-8(4:1)2012},
  issn = {1860-5974},
  journal = {Logical Methods in Computer Science},
  language = {},
  publisher = {Episciences.org},
  title = {First steps in synthetic guarded domain theory: step-indexing in the
           topos of trees},
  url = {https://lmcs.episciences.org/1041},
  volume = {Volume 8, Issue 4},
}

@incollection{smyth90,
  author = {Smyth, M. B.},
  title = {Topology},
  year = {1993},
  publisher = {Oxford University Press, Inc.},
  booktitle = {Handbook of Logic in Computer Science (Vol. 1): Background:
               Mathematical Structures},
  pages = {641–761},
  numpages = {121},
}

@article{Frumin24,
  author = {Frumin, Dan and Timany, Amin and Birkedal, Lars},
  doi = {10.1145/3632854},
  issue = {POPL},
  journal = {Proceedings of the ACM on Programming Languages},
  language = {en},
  month = {1},
  pages = {332--361},
  publisher = {Association for Computing Machinery (ACM)},
  title = {Modular Denotational Semantics for Effects with Guarded Interaction
           Trees},
  volume = {8},
  year = {2024},
}

@article{Goncharov23,
  author = {Sergey Goncharov and Stefan Milius and Lutz Schr{\"{o}}der and
            Stelios Tsampas and Henning Urbat},
  title = {Towards a Higher-Order Mathematical Operational Semantics},
  journal = {Proceedings of the ACM on Programming Languages},
  volume = {7, Issue POPL},
  pages = {632--658},
  year = {2023},
  doi = {10.1145/3571215},
}

@inproceedings{Goncharov24,
  address = {Berlin, Heidelberg},
  author = {Goncharov, Sergey and Santamaria, Alessio and Schr{\"o}der, Lutz and
            Tsampas, Stelios and Urbat, Henning},
  booktitle = {Foundations of Software Science and Computation Structures: 27th
               International Conference, FoSSaCS 2024, Held as Part of the
               European Joint Conferences on Theory and Practice of Software,
               ETAPS 2024, Luxembourg City, Luxembourg, April 6\textendash 11,
               2024, Proceedings, Part II},
  doi = {10.1007/978-3-031-57231-9_3},
  isbn = {9783031572319},
  journal = {Lecture Notes in Computer Science},
  language = {en},
  month = {4},
  pages = {47--69},
  publisher = {Springer Nature Switzerland},
  title = {Logical Predicates in Higher-Order Mathematical Operational Semantics
           },
  url = {https://doi.org/10.1007/978-3-031-57231-9_3},
  year = {2024},
  editor = {Kobayashi, Naoki and Worrell, James},
  volume = {14575},
}

@article{Goncharov25,
  author = {Goncharov, Sergey and Partow, Pouya and Tsampas, Stelios},
  title = {Big Steps in Higher-Order Mathematical Operational Semantics},
  year = {2025},
  issue_date = {August 2025},
  publisher = {Association for Computing Machinery},
  address = {New York, NY, USA},
  volume = {9},
  number = {ICFP},
  url = {https://doi.org/10.1145/3747540},
  doi = {10.1145/3747540},
  journal = {Proc. ACM Program. Lang.},
  month = aug,
  articleno = {271},
  numpages = {29},
}

@book{Granas03,
  author = {Granas, Andrzej and Dugundji, James},
  doi = {10.1007/978-0-387-21593-8},
  isbn = {9780387215938},
  journal = {Springer Monographs in Mathematics},
  publisher = {Springer New York},
  address = {New York},
  title = {Fixed Point Theory},
  url = {https://doi.org/10.1007/978-0-387-21593-8},
  year = {2003},
}

@book{Hindley08,
  title = {Lambda-{{Calculus}} and {{Combinators}}: {{An Introduction}}},
  shorttitle = {Lambda-{{Calculus}} and {{Combinators}}},
  author = {Hindley, J. Roger and Seldin, Jonathan P.},
  year = {2008},
  edition = {2},
  publisher = {{Cambridge University Press}},
  OPTaddress = {{Cambridge}},
  doi = {10.1017/CBO9780511809835},
  isbn = {978-0-521-89885-0},
}

@book{Jacobs16,
  author = {Bart Jacobs},
  title = {{Introduction to Coalgebra: Towards Mathematics of States and
           Observation}},
  series = {Cambridge Tracts in Theoretical Computer Science},
  volume = {59},
  publisher = {Cambridge University Press},
  year = {2016},
  doi = {10.1017/CBO9781316823187},
  isbn = {9781316823187},
  bibsource = {dblp computer science bibliography, https://dblp.org},
}

@article{Kelly80,
  author = {Kelly, G.M.},
  doi = {10.1017/S0004972700006353},
  journal = {Bulletin of the Australian Mathematical Society},
  number = {1},
  pages = {1\textendash 83},
  title = {A unified treatment of transfinite constructions for free algebras,
           free monoids, colimits, associated sheaves, and so on},
  volume = {22},
  year = {1980},
}

@incollection{Milner75,
  title = {Processes: A Mathematical Model of Computing Agents},
  editor = {H.E. Rose and J.C. Shepherdson},
  series = {Studies in Logic and the Foundations of Mathematics},
  publisher = {Elsevier},
  volume = {80},
  pages = {157-173},
  year = {1975},
  booktitle = {Logic Colloquium '73},
  issn = {0049-237X},
  doi = {https://doi.org/10.1016/S0049-237X(08)71948-7},
  url = {https://www.sciencedirect.com/science/article/pii/S0049237X08719487},
  author = {Robin Milner},
  address = {North-Holland},
}

@inproceedings{Turi97,
  author = {Turi, Daniele and Plotkin, Gordon},
  booktitle = {Twelfth Annual IEEE Symposium on Logic in Computer Science},
  doi = {10.1109/lics.1997.614955},
  journal = {Proceedings of Twelfth Annual IEEE Symposium on Logic in Computer
             Science},
  pages = {280--291},
  publisher = {IEEE Comput. Soc},
  title = {Towards a mathematical operational semantics},
  url = {https://doi.org/10.1109/lics.1997.614955},
  venue = {Warsaw, Poland},
  year = {1997},
  volume = {1},
}

@article{Urbat26,
  author = {Urbat, Henning},
  journal = {Proc. ACM Program. Lang.},
  doi = {10.1145/3776697},
  publisher = {Association for Computing Machinery},
  volume = {10},
  number = {POPL},
  title = {Higher-Order Behavioural Conformances via Fibrations},
  year = {2026},
}

@article{Rutten2000,
  author = {Rutten, Jan J. M. M.},
  title = {Universal Coalgebra: A Theory of Systems},
  journal = {Theoretical Computer Science},
  volume = {249},
  number = {1},
  pages = {3--80},
  year = {2000},
  doi = {10.1016/S0304-3975(00)00056-6},
}

@inproceedings{GianantonioRPO,
  author = "Di Gianantonio, Pietro and Honsell, Furio and Lenisa, Marina",
  editor = "Amadio, Roberto",
  title = "RPO, Second-Order Contexts, and $\lambda$-Calculus",
  booktitle = "Foundations of Software Science and Computational Structures",
  year = "2008",
  publisher = "Springer Berlin Heidelberg",
  address = "Berlin, Heidelberg",
  pages = "334--349",
  doi = {10.1007/978-3-540-78499-9_24},
  isbn = "978-3-540-78499-9",
}

@inproceedings{UrbatTsampasEtAl23,
  author = {Henning Urbat and Stelios Tsampas and Sergey Goncharov and Stefan
            Milius and Lutz Schr{\"o}der},
  title = {Weak Similarity in Higher-Order Mathematical Operational Semantics},
  year = "2023",
  publisher = {IEEE Computer Society Press},
  booktitle = {38th Annual ACM/IEEE Symposium on Logic in Computer Science (LICS
               2023)},
  OPTeditor = {Igor Walukiewicz},
  doi = {10.1109/LICS56636.2023.10175706},
}

@article{Worrell05,
  author = {James Worrell},
  title = {On the final sequence of a finitary set functor},
  journal = {Theoretical Computer Science},
  volume = {338},
  number = {1--3},
  pages = {184--199},
  year = {2005},
  doi = {10.1016/j.tcs.2004.12.009},
}

@article{Goncharov25b,
  author = {Goncharov, Sergey and Milius, Stefan and Schr\"{o}der, Lutz and
            Tsampas, Stelios and Urbat, Henning},
  title = {Bialgebraic Reasoning on Stateful Languages},
  year = {2025},
  issue_date = {August 2025},
  publisher = {Association for Computing Machinery},
  address = {New York, NY, USA},
  volume = {9},
  number = {ICFP},
  doi = {10.1145/3747513},
  articleno = {244},
  numpages = {30},
}

@inproceedings{gmtu24lics,
  author = {Sergey Goncharov and Stefan Milius and Stelios Tsampas and Henning
            Urbat},
  title = {Bialgebraic Reasoning on Higher-Order Program Equivalence},
  year = "2024",
  publisher = {IEEE Computer Society Press},
  pages = {39:1-39:15},
  booktitle = {39th Annual ACM/IEEE Symposium on Logic in Computer Science (LICS
               2024)},
  doi = {10.1145/3661814.3662099},
  OPTnote = {Preprint: \url{https://arxiv.org/abs/2402.00625}},
}

@article{10.1145/3704871,
  author = {Goncharov, Sergey and Tsampas, Stelios and Urbat, Henning},
  title = {Abstract Operational Methods for Call-by-Push-Value},
  year = {2025},
  issue_date = {January 2025},
  publisher = {Association for Computing Machinery},
  address = {New York, NY, USA},
  volume = {9},
  number = {POPL},
  url = {https://doi.org/10.1145/3704871},
  doi = {10.1145/3704871},
  journal = {Proc. ACM Program. Lang.},
  month = jan,
  articleno = {35},
  numpages = {27},
}

@inproceedings{DBLP:conf/lics/FiorePT99,
  author = {Marcelo P. Fiore and Gordon D. Plotkin and Daniele Turi},
  title = {Abstract Syntax and Variable Binding},
  booktitle = {14th Annual {IEEE} Symposium on Logic in Computer Science (LICS
               1999)},
  pages = {193--202},
  publisher = {{IEEE} Computer Society},
  year = {1999},
  doi = {10.1109/LICS.1999.782615},
  bibsource = {dblp computer science bibliography, https://dblp.org},
}

@misc{lenke2026unifiedtreatmentsubstitutionpresheaves,
  title = {A Unified Treatment of Substitution for Presheaves, Nominal Sets,
           Renaming Sets, and so on},
  author = {Fabian Lenke and Stefan Milius and Henning Urbat},
  year = {2026},
  eprint = {2602.11907},
  archivePrefix = {arXiv},
  primaryClass = {cs.LO},
  url = {https://arxiv.org/abs/2602.11907},
  howpublished = {Accepted at the 41th Annual ACM/IEEE Symposium on Logic in
                  Computer Science (LICS 2026)},
}

@misc{Goncharov2026arXiv,
  title = {Towards a Higher-Order Bialgebraic Denotational Semantics},
  author = {Sergey Goncharov and Marco Peressotti and Stelios Tsampas and
            Henning Urbat and Stefano Volpe},
  year = {2026},
  eprint = {2602.18295},
  archivePrefix = {arXiv},
  primaryClass = {cs.PL},
  url = {https://arxiv.org/abs/2602.18295},
}

\end{document}
\endinput